\documentclass[12pt]{article}
\usepackage{UF_FRED_paper_style}
\usepackage{times}
\usepackage{amsmath, amssymb}
\usepackage{geometry}
\usepackage{cancel}
\usepackage{marginnote}
\usepackage{amsthm}
\newtheorem{lemma}{Lemma}
\newtheorem{definition}{Definition}
\usepackage{cleveref}
\usepackage{natbib}
\usepackage{appendix}
\newcommand{\Beta}{\mathrm{Beta}}
\usepackage{booktabs}

\usepackage{enumitem}
\hypersetup{
    colorlinks = true,
    citecolor = blue
}
\usepackage{lipsum}  
\usepackage{setspace}
\newtheorem{theorem}{Theorem}

\usepackage{bm}
\usepackage{booktabs}
\usepackage{subcaption}
\usepackage{graphicx}
\title{\textbf{Optimal sequential two-stage Bayes Factor Design for two-arm clinical Phase II Trials with binary Endpoints}}

\author{Riko Kelter\thanks{Correspondence concerning this article should be addressed to \url{rkelter@uni-koeln.de}.
    Draft version 1.0, 01/06/26. This paper has not been peer reviewed. Please do not copy or cite without author's permission. The R package \texttt{bfbin2arm} is available on CRAN, see \url{https://cran.r-project.org/web/packages/bfbin2arm/index.html}. Data and R code to reproduce all results are openly available at the associated two-stage two-arm vignette at CRAN. The author declares no conflict of interest.}\\
	Institute of Medical Statistics and Computational Biology\\
	Faculty of Medicine\\
    University of Cologne\\
    Cologne, Germany
    }

\date{\today}

\begin{document}

{\setstretch{.8}
\maketitle
\begin{abstract}
Two-arm phase II clinical trials often benefit from an interim analysis that allows early stopping for futility, but Bayesian calibration of such designs is usually based on computationally intensive Monte Carlo simulation. In this work, a simulation-free methodology is developed to obtain Bayesian optimal two-stage designs in two-arm phase II trials with binary endpoints using Bayes factors as the primary measure of evidence. Building on recent matrix-search methods for fixed-sample two-arm Bayes factor designs and earlier correction formulas for one-arm two-stage designs, the proposed approach derives exact expressions for the operating characteristics of a two-stage two-arm design with a single futility interim. Bayesian power and type-I error are obtained by correcting the corresponding fixed-sample quantities for trajectories that would have been removed by early stopping, yielding a fully numerical calibration procedure that avoids Monte Carlo error entirely. The resulting method searches over admissible interim and final sample sizes to identify the optimal design that satisfies target constraints on Bayesian power, type-I error, and the probability of compelling evidence in favour of the null hypothesis, while minimizing the expected sample size under the null hypothesis. The methodology is illustrated in realistic phase II settings, including a detailed re-analysis of the riociguat trial in systemic sclerosis. Overall, the approach extends simulation-free Bayes factor design methodology to the practically important setting of two-arm two-stage phase II trials and provides a transparent basis for Bayesian design calibration and sensitivity analysis.


\noindent
\textit{\textbf{Keywords: }%
phase II trial, sequential design, two-stage design, optimal trial design, Bayesian statistics, Bayes factors, two-arm clinical trial, binary endpoint} \\ 
\noindent

\end{abstract}
}

\section{Introduction}

Two-arm phase~II clinical trials with binary endpoints are a central tool for assessing the preliminary efficacy of novel treatments before moving to larger, confirmatory phase~III studies \citep{Chow2008,Spiegelhalter2004,Grieve2022}. Classical designs, such as Simon's two-stage procedure, provide explicit frequentist control of type-I and type-II error rates and offer the option of early stopping for futility, thereby improving both ethical and efficiency properties compared to fixed-sample designs \citep{Simon1989,WassmerBrannath2016}. However, these methods typically do not incorporate prior information in a principled way and are not framed in terms of coherent Bayesian measures of evidence.

Bayesian designs address these shortcomings by allowing the formal inclusion of historical data, expert opinion, or mechanistic knowledge via prior distributions, and by updating beliefs as data accrue \citep{Berry2006,Spiegelhalter2004,Thall1994,Neuenschwander2009}. Decision rules are often based on posterior probabilities or Bayes factors, with the latter quantifying the relative support of the data for competing hypotheses in a way that is invariant to the prior odds on the hypotheses themselves \citep{Jeffreys1939,KassRaftery1995,Rouder2009,VanDeSchoot2021}. As regulatory agencies increasingly encourage Bayesian analyses that demonstrate acceptable frequentist operating characteristics, so-called calibrated Bayes or Bayes–frequentist compromise approaches have gained prominence \citep{Dawid1982,Little2006,grieveIdleThoughtsWellcalibrated2016,FDA_ComplexInnovativeDesignsDecember2020,ionanBayesianMethodsHuman2023}. For a recent guidance for industry issued by the Food and Drug Administration (FDA) on the use of Bayesian methodology in clinical trials, see \cite{FDA_UseOfBayesianMethodologyJanuary2026}.

In practice, however, Bayesian sample size planning and design calibration often rely on intensive Monte Carlo simulation studies. For a given Bayesian test statistic---such as a Bayes factor or a posterior probability---and a chosen decision threshold, power and type-I-error rates are typically evaluated by simulating many trial replicates under $H_0$ and $H_1$ \citep{Berry2011,Schonbrodt2017,Stefan2022,Grieve2022}. This simulation-based paradigm raises several challenges:
\begin{enumerate}[label=(\roman*)]
    \item{calibration becomes computationally expensive}
    \item{reproducibility depends on reporting Monte Carlo standard errors and implementation details}
    \item{every change in priors, thresholds, or design parameters may require re-running large simulations}
\end{enumerate}
For more details, see also \cite{Morris2019}, \cite{Boulesteix2020}, \cite{Kelter2023}, and \cite{siepeSimulationStudiesMethodological2024}.

There are several approaches available in the literature which either aim at reducing the computational burden associated with calibrating a Bayesian design or at least providing a calibrated Bayesian design, even if the computational effort often is substantial. A comprehensive review of Bayesian sequential clinical trial designs based on posterior and predictive probabilities, as well as decision‑theoretic criteria, is given by \cite{Zhou2023}, who also discuss frequentist, calibrated Bayesian, and subjective Bayesian perspectives on interim monitoring and the likelihood principle. 

In a related attempt to reduce the computational burden of calibrating Bayesian designs, \cite{hagarDesignBayesianClinical2026} propose an efficient methodology for Bayesian clinical trials with clustered data that models posterior probabilities as functions of the number of clusters to assess operating characteristics across sample sizes from only a few simulation points.

Similarly, \cite{zhuBayesianSequentialDesign2019} propose a Bayesian sequential design for time‑to‑event outcomes that uses alpha‑spending functions to control the overall type‑I error rate and employs Bayes factors for interim decision‑making, illustrating that Bayes factor–based sequential monitoring can match or improve the efficiency of classical group sequential designs.

Another example is \cite{gaoBayesianSequentialDecisionmaking2025}, who develop a Bayesian sequential decision\-‑making framework for rare disease trials with binary endpoints, combining sequential Bayes factor updates with adaptive stopping rules for superiority and futility to reduce expected sample size while maintaining interpretable evidence thresholds.

\cite{shenBayesianGroupSequential2022} consider Bayesian group sequential designs for cluster‑randomized trials, proposing flexible schemes that allow early stopping for efficacy at pre‑planned interim analyses and exploring their operating characteristics via simulation for different recruitment patterns and outcome types.

However, all of these approaches rely on simulating the trial operating characteristics in one form or another, leading to the problems (i) to (iii) described above.

Recent work has shown that these obstacles can be overcome in important special cases. In the one-arm binomial setting, numerical root-finding and prior-predictive calculations allow for essentially instantaneous Bayesian power and sample size computations for Bayes factors, entirely avoiding Monte Carlo simulation \citep{KelterPawel2025}. This approach has been extended to a Bayesian optimal two-stage design for single-arm phase~II trials with binary endpoints, where a single interim analysis is accommodated via a trinomial-tree representation of the Bayes factor trajectories, and the resulting power and type-I error rates are analytically corrected for the possibility of early stopping \citep{KelterPawelTwoStage2025}. More recently, analogous matrix-search methods have been developed for two-arm binomial phase~II designs, yielding simulation-free Bayesian power and sample size calculations for a broad class of Bayes factors in the two-arm setting \citep{kelterTwoArmTwoStage2026}. We provide details and summarize these approaches in \Cref{sec:background}.

Despite these advances, there is currently no simulation-free methodology for Bayesian two-arm phase~II designs with both (i) Bayes factor-based decision rules and (ii) a formal two-stage structure allowing for a single interim analysis.\footnote{A notable exception is the recent work of \cite{pawelBayesFactorGroup2026}, who extend classical group sequential theory to Bayes factor designs by mapping Bayes factor stopping regions to z‑statistic boundaries, allowing fast, simulation‑free computation of stopping probabilities via multivariate normal integration. The approach proposed in this paper shares the same goal, in the sense that it is simulation-free but focusses on binomial endpoints. In contrast to the approach of \cite{pawelBayesFactorGroup2026}, we do not map Bayes factor stopping regions to z-statistic boundaries. Also, no use of classical group sequential theory and its asymptotic arguments is made in this paper, which might in some cases become problematic in the context of a clinical phase II trial due to its limited sample size.} The aim of this work is to fill this gap by combining the trinomial-tree correction ideas from the one-arm two-stage design with the matrix-search framework for two-arm Bayes factor calibration, thereby providing a fully numerical, simulation-free approach to Bayesian two-stage two-arm phase~II trial design with binary endpoints.

\section{Outline}
\label{sec:outline}

The remainder of this manuscript is organized as follows. 
Section~\ref{sec:background} reviews Bayesian power and sample size calculations for Bayes factors in binomial models, summarizing existing simulation-free results for one-arm fixed-sample designs, one-arm two-stage designs, and two-arm fixed-sample designs. 
Section~\ref{sec:method} introduces the proposed two-stage two-arm Bayes factor design, detailing the construction of interim and final decision regions, the corresponding prior-predictive probabilities, and the correction of power and type-I-error for early stopping. 
We derive several main results, based on which Section~\ref{subsec:calibration} then presents the calibration algorithm for choosing interim and final sample sizes for an optimal Bayesian design under prespecified operating characteristics. Section~\ref{subsec:calibration} also discusses optimization criteria such as minimizing the expected sample size under the null hypothesis to classify a design as optimal from a Bayesian point of view.
Section~\ref{sec:examples} illustrates the method in realistic phase~II scenarios, and Section~\ref{sec:discussion} concludes with a discussion of practical implications, limitations, and directions for future research.

\section{Background}
\label{sec:background}

\subsection{Bayes factors and calibrated Bayesian design}

Bayes factors quantify the relative evidence provided by the data for two competing hypotheses $H_0$ and $H_1$ via the ratio of their marginal likelihoods \citep{Jeffreys1939,KassRaftery1995}:
\[
  BF_{01}(y) 
  = \frac{f(y \mid H_0)}{f(y \mid H_1)}.
\]
Interpreted as a predictive updating factor from prior to posterior odds,
\begin{align}\label{eq:bayesFactorIntro}
   \underbrace{\frac{P(H_0 \mid y)}{P(H_1 \mid y)}}_{\text{Posterior odds}} = 
   \underbrace{\frac{f(y \mid H_0)}{f(y \mid H_1)}}_{\text{Bayes factor $\mathrm{BF}_{01}(y)$}} \cdot 
   \underbrace{\frac{P(H_0)}{P(H_1)}}_{\text{Prior odds}},
\end{align}
Bayes factors separate the influence of the prior odds $P(H_0)/P(H_1)$ on the hypotheses from the influence of the parameter priors within each hypothesis. In particular, for a fixed pair of design and analysis priors on the model parameters, the Bayes factor reflects only how the data update relative support for $H_0$ vs.\ $H_1$ \citep{VanDeSchoot2021,Bartos2022,Kelter2020BayesianPosteriorIndices,Good1983a,Kelter2022EvidenceValue}. This separation has been argued to make Bayes factors a more transparent index of evidence than posterior probabilities, especially when the prior odds on the hypotheses are themselves controversial or informed by historical information \citep{Grieve2022,Kelter2020,Linde2020,Makowski2019,Kelter2021BMCHodgesLehmann}.

From a design perspective, Bayes factors can be used as test statistics for which frequentist-style operating characteristics such as power and type-I-error are defined in terms of exceedance probabilities of Bayes factor thresholds.\footnote{This approach was championed by \cite{Good1983a}, who proposed it as a Bayes-frequentist compromise. Harold Jeffreys already suggested using what we now call Bayes factors as test statistics and studying their long‑run behavior in his 1939 monograph and subsequent editions of \textit{Theory of Probability} \citep{Jeffreys1939}. His methodology explicitly ties Bayes factors to error‑rate style criteria and proposes fixed thresholds (Jeffreys' scale) in analogy to significance testing. Jack Good then developed this line further in the 1960s–1980s, discussing ``weight of evidence'' (essentially the logarithm of the Bayes factor) and emphasizing that Bayes factor–based tests can be judged by their long‑run frequencies of misleading evidence and related error probabilities. He is widely cited as an early advocate of calibrating Bayesian procedures (including Bayes factors) via long‑run error concepts. For an overview see \cite{sekulovskiGoodCheckBayes2024}.} For example, using the $BF_{01}$ orientation and an evidence threshold $k<1$, one may define
\begin{align}
  &\text{Bayesian type-I-error: }
  P\bigl(BF_{01}(Y) < k \mid H_0\bigr),\\
  &\text{Bayesian power: }
  P\bigl(BF_{01}(Y) < k \mid H_1\bigr),
\end{align}
and require that these quantities satisfy inequalities analogous to classical design constraints for prespecified $\alpha,\beta\in(0,1)$, such as
\begin{align}\label{eq:t1e_target_constraint}
    P(BF_{01}(Y)< k\mid H_0)\le \alpha
\end{align}
and
\begin{align}\label{eq:power_target_constraint}
    P(BF_{01}(Y)< k\mid H_1)\ge 1-\beta
\end{align}
\citep{KelterPawel2025,grieveIdleThoughtsWellcalibrated2016,Grieve2022,pourmohamadSequentialBayesFactors2023}. This Bayes–frequen\-tist compromise is attractive for trial planners and regulators: it preserves a coherent Bayesian evidence measure while guaranteeing interpretable long-run error control \citep{Dawid1982,Little2006,Grieve2022,ionanBayesianMethodsHuman2023}. This goes beyond the current recommendations and advice of regulatory agencies like the FDA or European Medicine's Agency (EMA), compare \cite{FDA_ComplexInnovativeDesignsDecember2020,FDA_UseOfBayesianMethodologyJanuary2026,europeanmedicinesagencyICHE20Adaptive2025}.

\subsection{Simulation-free Bayes factor calibration in the binomial setting}

In most realistic models, the distributions of Bayes factors under $H_0$ and $H_1$ do not admit closed-form expressions, and power and type-I-error must be evaluated via Monte Carlo simulation \citep{Berry2011,Schonbrodt2017,Stefan2022}. This is particularly true for adaptive or sequential designs, where the stopping rule and updating scheme introduce complex dependencies across interim looks \citep{Chevret2012,Zhou2023,kelterBayesianGroupSequentialPredictive2024}. Simulation-based calibration, however, suffers from several drawbacks: it is computationally expensive, sensitive to the choice of simulation size and random seeds, and requires careful reporting of Monte Carlo standard errors and convergence diagnostics to ensure reproducibility \citep{Morris2019,Boulesteix2020,Kelter2023}.

In the one-arm binomial setting, \citet{KelterPawel2025} showed that these limitations can be circumvented. Focusing on tests of $H_0: p=p_0$ versus $H_1: p\neq p_0$ or directional alternatives, they derive Bayes factors under conjugate beta priors and obtain simple expressions for the prior-predictive distribution of the binomial count. Power and type-I-error are then written as sums of prior-predictive probabilities over sets of critical values determined by numerical root-finding on the Bayes factor. In other words, the Monte Carlo step is replaced by a direct evaluation of
\[
  P\bigl(BF_{01}(Y)< k \mid H_i\bigr)
  = \sum_{y \in \mathcal{Y}_k} f(y \mid H_i),
  \qquad i\in\{0,1\},
\]
where $\mathcal{Y}_k$ is the set of counts at which the Bayes factor crosses the evidence threshold $k$. This approach leads to essentially instantaneous computation of Bayesian power and sample size in the binomial setting, without any simulation and without relying on asymptotic approximation. It is implemented in the R package \texttt{bfbin2arm} as well as in the R package \texttt{bfpwr} \citep{PawelHeld2025}.

\begin{figure}[!htb]
    \centering
    \includegraphics[width=1.0\linewidth]{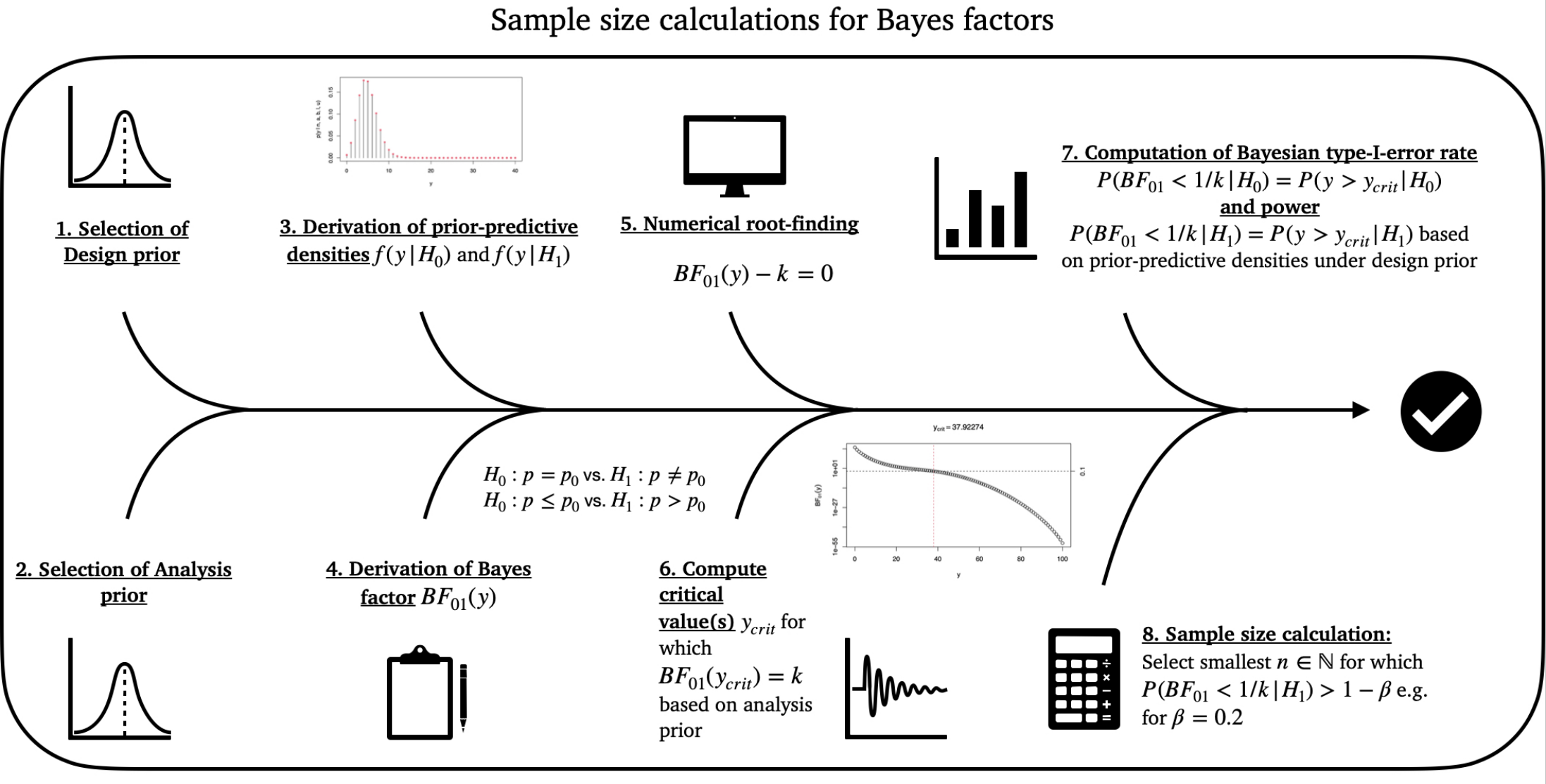}
    \caption{Overview of Bayesian power and sample size calculations for the case of a single-arm phase II trial with a binary endpoint, using Bayes factors. Details are provided in \cite{KelterPawel2025}.}
    \label{fig:flowchart_rootfinding}
\end{figure}
\Cref{fig:flowchart_rootfinding} visualizes the process of Bayesian power and sample size calculations for the single-arm phase II trial case with a binary endpoint, compare \cite{KelterPawel2025}.

\subsection{One-arm two-stage Bayes factor designs via trinomial-tree branching}

Building on this root-finding framework, \citet{KelterPawelTwoStage2025} proposed a Bayesian optimal two-stage design for single-arm phase~II trials with binary endpoints based on Bayes factors. The design introduces a single interim analysis after $n_1$ patients, with the option to stop early for futility if the Bayes factor indicates strong evidence in favour of $H_0$. The novelty lies in showing how to correct the Bayesian power and type-I-error rate for the presence of this interim look, again without resorting to Monte Carlo simulation.

\begin{figure}[h!]
    \centering
    \includegraphics[width=1\linewidth]{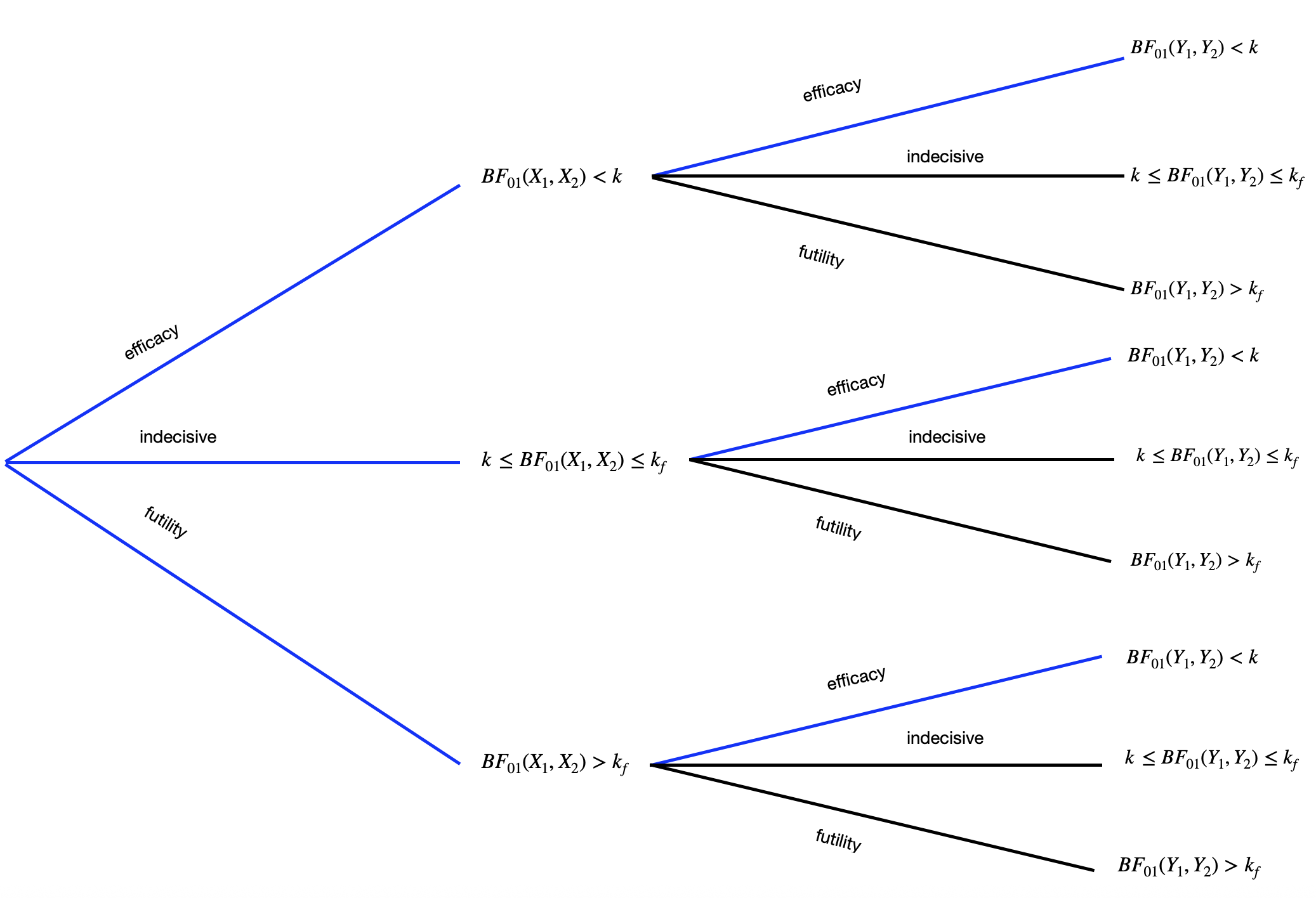}
    \caption{Trinomial tree underlying the two‑stage Bayesian Bayes‑factor design. At the interim analysis after $n_1$ patients, the trial may stop for futility (when $BF_{01}(X_1,X_2)>k_f$, indicating evidence in favour of the null hypothesis) or continue to the final analysis after $n_2$ patients (when $k\leq BF_{01}(X_1,X_2)\leq k_f$, indicating an indecisive result, or when $BF_{01}(X_1,X_2)<k$, indicating efficacy). At the final analysis, the same three decisions are possible, based on the Bayes factor $BF_{01}(Y_1,Y_2)$ and the thresholds $k$ for efficacy and $k_f$ for futility. Blue trajectories show all outcomes which contribute to Bayesian power in the sense of concluding that $H_1$ holds at the end of the trial.}
    \label{fig:twoStageDesign}
\end{figure}

The key device is a trinomial-tree representation of the Bayes factor trajectories: at each analysis (interim and final), the Bayes factor can indicate efficacy (evidence for $H_1$), futility (evidence for $H_0$), or be inconclusive, compare \Cref{fig:twoStageDesign}. In \Cref{fig:twoStageDesign}, $(X_1,X_2)$ denotes the data available in the treatment and control group at interim analysis and $BF_{01}(X_1,X_2)$ the Bayes factor based on the available interim data. The tupel $(Y_1,Y_2)$ denotes the full trial data available at the end of the trial and $BF_{01}(Y_1,Y_2)$ the Bayes factor based on this full trial data. We introduce the setup and notation in detail in \Cref{subsec:notationAndSetup}.\footnote{In \cite{KelterPawelTwoStage2025}, the one-arm setting with only a treatment group is considered, so there the Bayes factors are based on treatment group data $Y$ only. Here, we consider the two-arm phase II setting with a treatment and control group, and modified notation and \Cref{fig:twoStageDesign} accordingly.}

When the design is naively calibrated using the fixed-sample expressions at $n_2$ alone, trajectories that would have stopped for futility at the interim but later ``swing back'' to evidence for $H_1$ at the final analysis are incorrectly counted as contributing to power (and analogously for type-I-error under $H_0$). The authors identify these trajectories as a ``futility-erased partial power'' (and ``futility-erased partial type-I-error'') and provide closed-form summation formulas, using prior-predictive distributions, to subtract these contributions from the fixed-sample power and type-I-error \citep{KelterPawelTwoStage2025}. This yields corrected operating characteristics for the two-stage design that account exactly for the possibility of early stopping, while preserving the simulation-free nature of the calibration.

In terms of \Cref{fig:twoStageDesign}, this corresponds to the lowest blue trajectory: Without an interim analysis, the Bayes factor could indicate futility when calculated based on the interim sample size $n_1$, and then swing back to reach efficacy in the final analysis. These trajectories contribute to Bayesian power when no interim analysis is carried out. Once an interim analysis is introduced, however, the trial can be stopped when the futility threshold is reached at the interim sample size, reducing the power by ``cutting off'' the possibility that the Bayes factor swings around and reaches efficacy for the final sample size at the end of the trial in these trajectories.

In addition to the correction the authors propose to solve this problem, \citet{KelterPawelTwoStage2025} develop a calibration algorithm that searches $(n_1,n_2)$ to find Bayesian optimal two-stage designs that
\begin{itemize}
    \item[(i)]{satisfy prespecified constraints on Bayesian power and type-I-error, compare \Cref{eq:power_target_constraint} and \Cref{eq:t1e_target_constraint}, and}
    \item[(ii)]{minimize the expected sample size $E[N|H_0]$ under $H_0$. The resulting design thus is computed as the result of the following optimization problem:
    \begin{equation}\label{eq:minExpN}
        \begin{aligned}
            \min_{n_1,n_2} \quad & E[N|H_0]\\
            \textrm{subject to} \quad & P(\mathrm{BF}_{01}^{n_2}(y) < k \mid H_0) \leq \alpha \hspace{1cm}\\
            & \text{and} \hspace{1cm} P(\mathrm{BF}_{01}^{n_2}(y) < k \mid H_1) \geq 1-\beta\\
            & \text{and} \hspace{1cm} n_{min}\leq n_1 < n_2\leq n_{max}    \\
        \end{aligned}
    \end{equation}}
\end{itemize}
where $\mathrm{BF}_{01}^{n_2}(y)$ is the Bayes factor based on the final sample size $n_2$ at the end of the trial. The resulting designs recover Simon-type optimal designs as special cases, improve non-sequential Bayes factor designs, and can be calibrated rapidly using only standard numerical methods.\footnote{The calibration algorithm for the two-stage single-arm design is currently implemented in the \texttt{bfbin2arm} R package \citep{kelterTwoArmTwoStage2026}, available on CRAN under \url{https://cran.r-project.org/web/packages/bfbin2arm/index.html}.} \cite{KelterPawel2025} call such a design \textit{optimal} in the Bayesian sense.

\subsection{Two-arm fixed-sample designs via matrix search}
\label{subsec:matrix-search}
\citet{kelterTwoArmTwoStage2026} extended the root-finding approach to the two-arm binomial setting, accommodating a variety of hypotheses relevant for two-arm phase~II trials, including equality of response probabilities $H_0: p_1 = p_2$ vs.\ $H_1: p_1 \neq p_2$), superiority ($H_0: \eta \le 0$ vs.\ $H_1: \eta > 0$ -- where $\eta:=p_2-p_1$) denotes the difference in success probabilities between the treatment and control arm -- and ordered alternatives. The corresponding Bayes factors is derived there under flexible beta design and analysis priors as well as the joint prior-predictive distribution of the binomial counts in the two arms, $(Y_1,Y_2)$, in closed form.

The key insight is that the discrete nature of the binomial counts reduces the problem to a finite \emph{matrix search} over all integer pairs $(y_1,y_2) \in \{0,\dots,n_1\} \times \{0,\dots,n_2\}$. For fixed sample sizes $n_1,n_2$ in the control and treatment arms and a chosen evidence threshold $k <1$, one first computes the Bayes factor $BF_{01}(y_1,y_2)$ at every lattice point. The efficacy region---the set of count pairs that provide evidence against $H_0$---is then
\[
  \mathcal{E}_2 
  = \bigl\{(y_1,y_2): BF_{01}(y_1,y_2)< k\bigr\}.
\]

Figure~\ref{fig:matrix-search} illustrates this procedure for $n_1 = n_2 = 5$ and flat analysis priors ($\alpha_i^a = \beta_i^a = 1$). The left panel shows the matrix of Bayes factors $BF_{01}(y_1,y_2)$, where rows index the number of control arm successes $y_1$ and columns index treatment arm successes $y_2$. As expected under the two-sided test $H_0: p_1 = p_2$ vs.\ $H_1: p_1 \neq p_2$, the matrix is symmetric around the main diagonal ($y_1 = y_2$), with largest values (strongest evidence for $H_0$) along this diagonal and progressively smaller values toward the top-right and bottom-left margins (evidence against $H_0$).

\begin{figure}[htbp]
\centering
\begin{tabular}{ccc}
  \begin{minipage}{0.33\textwidth}
    \centering
    \includegraphics[width=\textwidth]{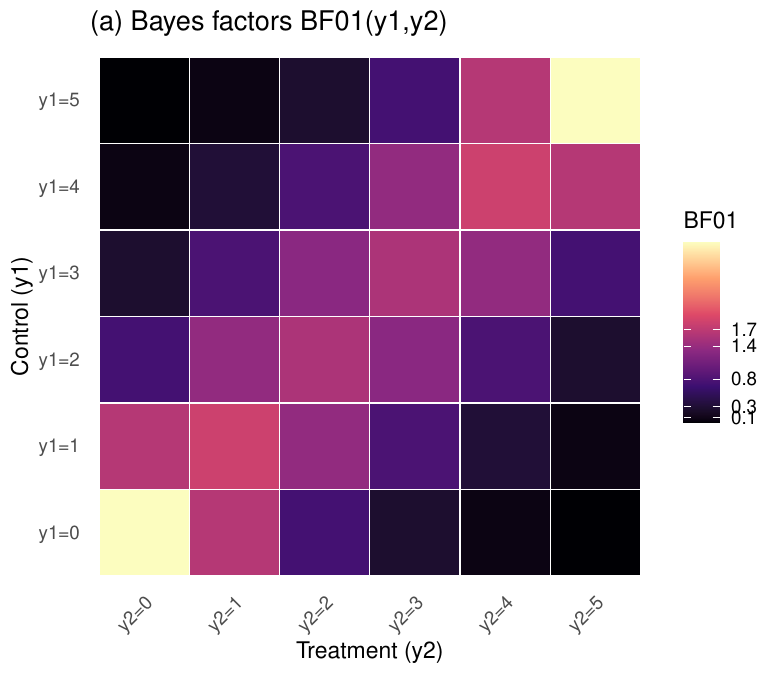}
    \small (a) Bayes factors $BF_{01}(y_1,y_2)$
  \end{minipage}
  &
  \begin{minipage}{0.33\textwidth}
    \centering
    \includegraphics[width=\textwidth]{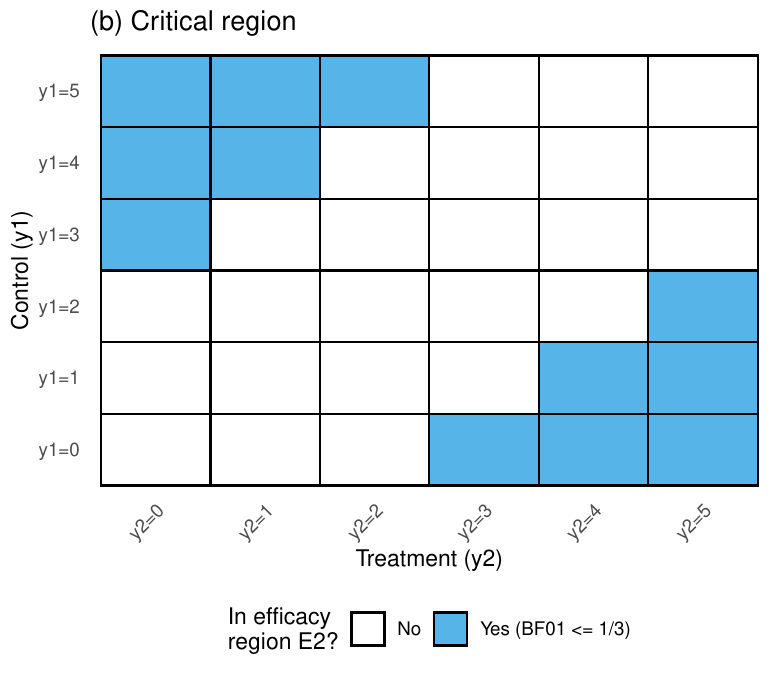}
    \small (b) Critical region: $BF_{01} < 1/3$
  \end{minipage}
  &
  \begin{minipage}{0.33\textwidth}
    \centering
    \includegraphics[width=\textwidth]{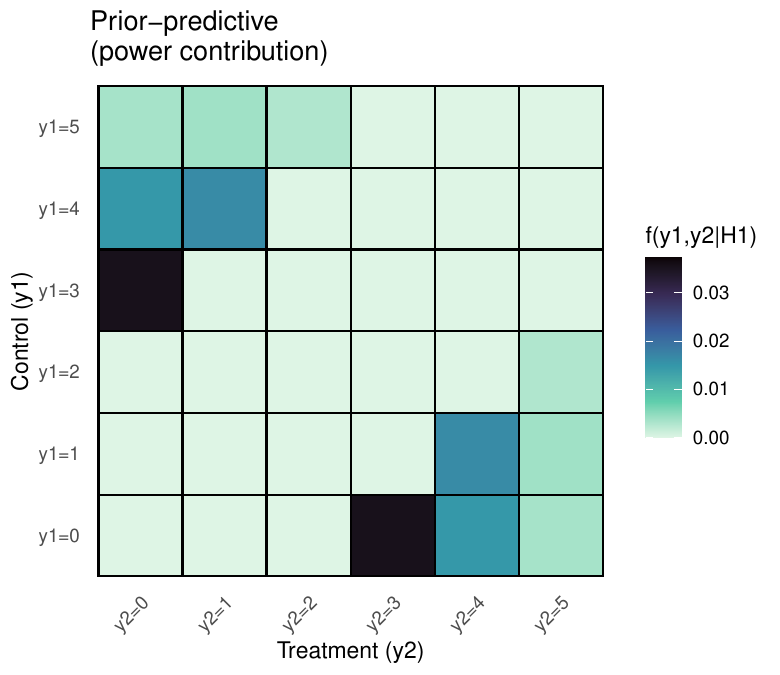}
    \small (c) Prior-predictive under $H_1$
  \end{minipage}
\end{tabular}
\caption{Matrix-search procedure for two-arm Bayes factor power calculation ($n_1=n_2=5$, flat analysis priors, $k=1/3$). (a) Full matrix of Bayes factors $BF_{01}(y_1,y_2)$. (b) Critical efficacy region $\mathcal{E}_2 = \{(y_1,y_2): BF_{01}(y_1,y_2) < 1/3\}$ (zeros elsewhere). (c) Prior-predictive probabilities $f(y_1,y_2\mid H_1)$ over $\mathcal{E}_2$; Bayesian power is their sum.}
\label{fig:matrix-search}
\end{figure}

The middle panel (b) identifies the critical efficacy region by setting all entries where $BF_{01}(y_1,y_2) > 1/3$ to zero, leaving only those lattice points that would lead to rejection of $H_0$. In this example,
\[
  \mathcal{E}_2 = \{(0,3),(0,4),(0,5),(1,4),(1,5),(2,5),(3,0),(4,0),(4,1),(5,0),(5,1),(5,2)\}.
\]
Finally, the right panel (c) shows the prior-predictive probabilities $f(y_1,y_2\mid H_1)$ under a chosen design prior, restricted to $\mathcal{E}_2$. The Bayesian power is simply the sum of these probabilities:
\[
  P\bigl(BF_{01}(Y_1,Y_2) < k \mid H_1\bigr)
  = \sum_{(y_1,y_2) \in \mathcal{E}_2} f(y_1,y_2 \mid H_1) \approx 0.33.
\]
The type-I-error rate under $H_0$ is computed analogously by summing $f(y_1,y_2\mid H_0)$ over the same region $\mathcal{E}_2$.

This matrix-search approach completely replaces both numerical root-finding (one-arm case) and Monte Carlo simulation with direct enumeration of the finite sample space. Sample size determination proceeds by repeating the procedure for increasing $(n_1,n_2)$ until the desired power and type-I-error bounds are achieved. The methodology is implemented in the \texttt{bfbin2arm} R package, enabling rapid exploration of two-arm Bayes factor designs with a treatment and control group across a wide range of hypotheses, priors, and thresholds \citep{kelterTwoArmTwoStage2026}.

\subsection{Motivation for a two-arm two-stage Bayes factor design}

The three strands of work described above establish a coherent simulation-free framework for Bayes factor-based designs in (i) one-arm fixed-sample binomial trials \citep{KelterPawel2025}, (ii) one-arm two-stage trials via trinomial-tree branching and futility-erased corrections \citep{KelterPawelTwoStage2025}, and (iii) two-arm fixed-sample binomial trials via matrix search \citep{kelterTwoArmTwoStage2026}. Together, they demonstrate that Bayesian designs with Bayes factor decision rules can be calibrated numerically, without Monte Carlo simulation, while maintaining interpretable Bayesian and frequentist properties.

In many phase~II settings, however, the combination of a control and treatment arm and interim monitoring is highly desirable. Two-arm designs offer a more realistic assessment of treatment effect by directly comparing a novel treatment to standard of care or placebo, while two-stage designs with an interim futility analysis reduce expected sample size under $H_0$ and limit exposure to ineffective therapies \citep{Simon1989,WassmerBrannath2016,Berry2011}. Existing Bayesian two-arm designs with interim analyses usually rely on simulation-based calibration and often do not use Bayes factors as the primary decision criterion \citep{Fayers2005,Ferguson2021,Stallard2020,Ferreira2021}.\footnote{Nothing is requiring to use Bayes factor as the test statistic for assessing the hypotheses under consideration in the previous work outlined in \Cref{sec:background}. Here, we focus on using Bayes factors due to their advantages as a measure of statistical evidence. For a detailed treatment of the advantages and limitations of various measures of statistical evidence see \cite{Sprenger2019}. The important implication for the methodology developed in the current manuscript is that one could use all of the power and sample size calculation methodology developed so far and adjust the measure of statistical evidence to e.g. posterior probabilities or posterior odds. This would allow to apply the current methodology developed in this paper also to these measures of statistical evidence, yielding optimal designs which are based on e.g. posterior probabilities or posterior odds of the competing hypotheses.}

The natural next step, therefore, is to combine the trinomial-tree correction ideas of the one-arm two-stage Bayes factor design with the matrix-search framework of the two-arm fixed-sample design. Conceptually, this entails moving from (i) a one-dimensional count of successes in a single arm to (ii) a two-dimensional grid of successes in two arms, and from (iii) a two-dimensional sum over interim and final counts in the one-arm case to (iv) a four-dimensional sum over interim and incremental counts in the two-arm case. The same logic applies: identify the set of trajectories that would be cut off by an interim futility stopping rule but that would otherwise contribute to fixed-sample power or type-I-error, and subtract their prior-predictive probabilities from the naive fixed-sample operating characteristics.

By doing so, one can derive a simulation-free, Bayes factor-based two-stage design for two-arm phase~II trials with binary endpoints that:
\begin{itemize}
  \item Controls Bayesian analogues of type-I-error and power at prespecified levels, in a way that is interpretable from both Bayesian and frequentist perspectives, compare \Cref{eq:t1e_target_constraint} and \Cref{eq:power_target_constraint}.
  \item Admits explicit numerical corrections for the interim futility analysis via sums of prior-predictive probabilities, avoiding Monte Carlo simulations.
  \item Allows optimization criteria such as minimal expected sample size under $H_0$, generalizing the notion of Bayesian optimal two-stage designs to the two-arm setting, compare \Cref{eq:minExpN}.
\end{itemize}
The remainder of this work develops exactly this extension, showing how the trinomial-tree branching concept and futility-erased partial contributions can be generalized to the two-arm matrix-search setting in order to obtain an optimal two-stage two-arm Bayes factor design. The most relevant application of such an extension is a phase II clinical trial with a treatment and control arm and a primary binary endpoint.

\section{Extending the Two-Arm Bayes Factor Design to a Two-Stage Setting}
\label{sec:method}
In this section, we outline how to extend the fixed-sample two-arm Bayes factor design for phase~II trials with binary endpoints detailed in \cite{kelterTwoArmTwoStage2026} to a two-stage design with a single interim analysis. The goal is to retain the simulation-free calibration philosophy: power and type-I-error are obtained by summing prior-predictive probabilities over suitable regions in the sample space, now taking into account the option to stop early for futility.

\subsection{Setup and notation}
\label{subsec:notationAndSetup}
We consider a two-arm trial with a control group ($j=1$) and a treatment group ($j=2$). Let $n_2^{(j)}$ denote the planned total sample size in arm $j$ at the final analysis, and let $n_1^{(j)} < n_2^{(j)}$ denote the sample size in arm $j$ at the interim analysis. For simplicity, we assume that the allocation ratio is fixed and that the interim occurs after $n_1^{(1)}$ and $n_1^{(2)}$ patients have been observed in the control and treatment arm, respectively. We denote by
\[
  X_j \sim \text{Bin}\bigl(n_1^{(j)}, p_j\bigr), 
  \qquad 
  Z_j \sim \text{Bin}\bigl(n_2^{(j)} - n_1^{(j)}, p_j\bigr),
  \qquad j=1,2,
\]
the numbers of successes in arm $j$ in the first and second stage, respectively. The final totals are
\[
  Y_j = X_j + Z_j, \qquad j = 1,2.
\]
Thus $(X_1,X_2)$ describes the interim data and $(Y_1,Y_2)$ the final data, if the trial is continued to the second stage.

As in the two-arm fixed-sample setting, we write $H_0$ and $H_1$ for the null and alternative hypotheses of interest. In a phase II trial, two hypotheses are typically of interest. One tests equality of response probabilities in treatment and control, $p_1 = p_2$, where the former receives the novel drug and the latter standard of care or placebo:
\begin{align}\label{eq:twoSided}
    H_0: p_1 = p_2 \quad \text{versus} \quad H_1: p_1 \neq p_2
\end{align}
A convenient reparameterization introduces the \emph{difference} $\eta = p_2 - p_1$ and the \emph{grand mean} $\zeta = \tfrac{1}{2}(p_1 + p_2)$, so that
\[
p_1 = \zeta - \frac{\eta}{2}, \quad p_2 = \zeta + \frac{\eta}{2},
\]
and the hypotheses become
\begin{align}\label{eq:twoSidedParameterized}
    H_0: \eta = 0 \quad \text{versus} \quad H_1: \eta \neq 0.
\end{align}
This parameterization originates from \cite{Gunel1974} and has been used in subsequent work, see also \cite{Dickey1970a}, \cite{Jamil2017} and \cite{Kelter2025}. In a phase~IIb setting, testing $H_0: \eta = 0$ is attractive because it allows explicit evidence \emph{for} equal efficacy of novel treatment and control. If evidence instead supports $H_1: \eta \neq 0$, either $p_1 > p_2$ (control more effective than treatment) or $p_1 < p_2$ (treatment more effective than control) may occur, and estimating $p_1$ and $p_2$ post-hoc should supplement this hypothesis test for a more complete interpretation.

An alternative is to use directional tests:
\begin{align}
    &H_0: \eta \leq 0 \quad \text{versus} \quad H_1: \eta > 0\label{eq:directionalParameterized}\\
    &H_0: \eta = 0 \quad \text{versus} \quad H_1: \eta > 0\label{eq:directionalParameterizedOneSided1}\\
    &H_0: \eta = 0 \quad \text{versus} \quad H_1: \eta < 0\label{eq:directionalParameterizedOneSided2}
\end{align}
In the first, $H_0$ states that placebo or standard of care is at least as effective as the novel treatment, while $H_1$ asserts superior efficacy of the novel treatment. The one-sided test with $H_0: \eta = 0$ versus $H_1: \eta > 0$ assumes $\eta < 0$ (i.e., $p_2 < p_1$) is a priori unrealistic, for example when the control arm receives standard of care and the treatment arm receives standard of care plus a non-interfering add-on, so the treatment success probability should be at least as large. Conversely, the one-sided test with $H_0: \eta = 0$ versus $H_1: \eta < 0$ is relevant when the endpoint measures failures; ruling out $\eta > 0$ a priori implies $p_2 \le p_1$, which is reasonable if the control group receives standard of care and the treatment group standard of care plus an add-on that cannot worsen outcomes.

\cite{kelterTwoArmTwoStage2026} uses conjugate beta design priors under each hypothesis, developed for the two-sided Bayes factor test
\[
  H_0: p_1 = p_2 \quad \text{versus} \quad H_1: p_1 \neq p_2.
\]
The beta--binomial model then yields closed-form prior-predictive probability mass functions
\begin{align}\label{eq:f1}
  f_1(x_1,x_2 \mid H_i)
  := \Pr(X_1 = x_1, X_2 = x_2 \mid H_i) \text{ (interim data under the two‑sided test)},
\end{align}
\begin{align}\label{eq:f2}
  f_2(z_1,z_2 \mid H_i)
  := \Pr(Z_1 = z_1, Z_2 = z_2 \mid H_i) \text{ (stage‑wise partition of the same data)},
\end{align}
and
\begin{align}\label{eq:f3}
  f(y_1,y_2 \mid H_i)
  := \Pr(Y_1 = y_1, Y_2 = y_2 \mid H_i) \text{ (final‑sample arm‑wise counts)},
\end{align}
for $i \in \{0,1\}$. Under $H_1: p_1 \neq p_2$, the arms receive independent beta design priors,
\[
  p_1 \mid H_1 \sim \Beta(\alpha_{1d}, \beta_{1d}), \quad
  p_2 \mid H_1 \sim \Beta(\alpha_{2d}, \beta_{2d}),
\]
and the prior-predictive probability mass function for interim data $(x_1,x_2)$ under the two‑sided test, with interim sample sizes $n_1^{(1)}$ and $n_1^{(2)}$, is
\[
  f_1(x_1,x_2 \mid H_1)
  = \binom{n_1^{(1)}}{x_1}
    \frac{B(\alpha_{1d} + x_1, \beta_{1d} + n_1^{(1)} - x_1)}
         {B(\alpha_{1d}, \beta_{1d})}
    \cdot
    \binom{n_1^{(2)}}{x_2}
    \frac{B(\alpha_{2d} + x_2, \beta_{2d} + n_1^{(2)} - x_2)}
         {B(\alpha_{2d}, \beta_{2d})},
\]
for $x_1 \in \{0,\dots,n_1^{(1)}\}$, $x_2 \in \{0,\dots,n_1^{(2)}\}$. Under the null hypothesis $H_0: p_1 = p_2 = p$, the common parameter $p$ follows a beta prior $\Beta(\alpha_{0d}, \beta_{0d})$, so the corresponding joint prior-predictive probability mass function for the arm-wise interim totals $(x_1,x_2)$ is
\[
  f_1(x_1,x_2 \mid H_0)
  = \binom{n_1^{(1)}}{x_1}
    \binom{n_1^{(2)}}{x_2}
    \frac{B(\alpha_{0d} + x_1 + x_2, \beta_{0d} + n_1^{(1)} + n_1^{(2)} - x_1 - x_2)}
         {B(\alpha_{0d}, \beta_{0d})},
\]
for $x_1 \in \{0,\dots,n_1^{(1)}\}$, $x_2 \in \{0,\dots,n_1^{(2)}\}$. This joint probability mass function governs the distribution of the interim data under $H_0$ and is used in the calculation of the type-I-error probabilities. For computational purposes, the Bayesian power and type-I-error probabilities are obtained by summing the prior-predictive probability mass function $f_1(x_1,x_2 \mid H_i)$ over all $(x_1,x_2)$ that fall into the respective critical regions.

The probability mass function
\[
  f(y_1,y_2 \mid H_i)
  := \Pr(Y_1 = y_1, Y_2 = y_2 \mid H_i),
\]
for the final-sample arm-wise counts is precisely the two-arm prior-predictive distribution already used in the fixed-sample Bayes factor sample size calculations for the two-sided test $H_0: p_1 = p_2$ versus $H_1: p_1 \neq p_2$, with final sample sizes $n_2^{(1)}$ and $n_2^{(2)}$ and, potentially, different design-prior parameters. The only difference compared to the interim probability mass function $f_1(x_1,x_2 \mid H_i)$ is that the counts $(y_1,y_2)$ now refer to the final totals and the corresponding final sample sizes, rather than the interim sizes. All these prior-predictive probability mass functions are later used to compute the Bayesian power and type-I-error probabilities under the respective design priors.

In closing this subsection, we note that we only detail the derivations for the two-sided test of $H_0:p_1=p_2$ versus $H_1:p_1\neq p_2$, but the prior-predictive probability mass functions and the corresponding Bayes factors for the above directional tests have been derived by \cite{kelterTwoArmTwoStage2026}. The general approach outlined for the two-sided test can therefore easily be extended to the directional tests, which are often more relevant in the context of a phase II trial. For the directional tests, the beta design and analysis priors change to truncated versions on the parameter spaces associated with $H_0$ and $H_1$, and the prior-predictive probability mass functions do likewise. For details, also on the resulting Bayes factors for these directional tests, we refer to the Appendix of \cite{kelterTwoArmTwoStage2026}.

\subsection{Bayes factors at interim and final analysis}

Let $BF_{01}(x_1,x_2)$ denote the Bayes factor in favour of $H_0$ based on the interim counts $(X_1,X_2) = (x_1,x_2)$, and $BF_{01}(y_1,y_2)$ the Bayes factor based on the final totals $(Y_1,Y_2) = (y_1,y_2)$. These are exactly the two-arm Bayes factors derived in the fixed-$n$ two-arm setting, now evaluated at the interim and final sample sizes. We choose two thresholds:
\begin{itemize}
  \item $k < 1$ for evidence against $H_0$ (efficacy boundary).
  \item $k_f > 1$ for evidence in favour of $H_0$ (futility boundary).
\end{itemize}
Using the $BF_{01}$ orientation, a small Bayes factor ($BF_{01} < k$ for e.g. $k=1/3$ or $k=1/10$) indicates evidence against $H_0$ in favour of $H_1$, while a large Bayes factor ($BF_{01} \ge k_f$) indicates evidence for $H_0$.\footnote{This is in line with the interpretation of p-values, simplifying the use and interpretation for frequentists.} At the interim analysis, we define:
\begin{align*}
  \mathcal{F}_1 
  &:= \bigl\{ (x_1,x_2) : BF_{01}(x_1,x_2) \geq k_f \bigr\} \quad \text{(the futility region)},\\
  \mathcal{E}_1 
  &:= \bigl\{ (x_1,x_2) : BF_{01}(x_1,x_2) < k \bigr\} \quad \text{(the interim efficacy region)},\\
  \mathcal{C}_1 
  &:= \bigl\{ (x_1,x_2) : (x_1,x_2) \notin \mathcal{F}_1 \cup \mathcal{E}_1 \bigr\}
  \quad \text{(the continuation region)}.
\end{align*}
The interim efficacy region is optional, as it only becomes relevant when stopping for efficacy after the interim analysis is allowed for. In our current design, this is not the case but we briefly explain how a possible extension could look like in Section~\ref{subsec:early-eff} below. At the final analysis, we define the final efficacy region
\[
  \mathcal{E}_2 
  := \bigl\{ (y_1,y_2) : BF_{01}(y_1,y_2) < k \bigr\}.
\]
We focus on the case where early stopping is allowed only for futility.

\subsection{Unadjusted (fixed-sample) operating characteristics}

Ignoring the interim look and treating the design as fixed-sample with total sample sizes $n_2^{(1)}$ and $n_2^{(2)}$, the Bayesian analogues of type-I-error rate and power are given by
\begin{align*}
  P_{\text{naive}}^{(0)} 
  &:= \Pr\bigl(BF_{01}(Y_1,Y_2) < k \mid H_0\bigr) 
  = \sum_{(y_1,y_2)\in\mathcal{E}_2} f(y_1,y_2 \mid H_0),\\
  P_{\text{naive}}^{(1)} 
  &:= \Pr\bigl(BF_{01}(Y_1,Y_2) < k \mid H_1\bigr) 
  = \sum_{(y_1,y_2)\in\mathcal{E}_2} f(y_1,y_2 \mid H_1).
\end{align*}
The quantity $P_{\text{naive}}^{(0)}$ is the naive Bayesian type-I-error rate, whereas $P_{\text{naive}}^{(1)}$ is the naive Bayesian power. These are the quantities used in the fixed-sample two-arm Bayes factor sample size calculations, compare \cite{PawelHeld2025}, \cite{KelterPawel2025,KelterPawelTwoStage2025} and \cite{kelterTwoArmTwoStage2026}.

However, once we introduce the option to stop early for futility at the interim analysis (i.e.\ whenever $(X_1,X_2)\in\mathcal{F}_1$), these unadjusted probabilities overestimate the true power and type-I-error rate of the resulting two-stage design, for the same reason as in the single-arm two-stage setting: some data trajectories that would have contributed to $P_{\text{naive}}^{(i)}$ are no longer possible because the trial would have been stopped early.

\subsection{Futility-erased partial power and type-I-error}

The key idea is to identify those trajectories that (i) would have stopped for futility at the interim analysis, but (ii) would have produced a final Bayes factor indicating efficacy if the trial had continued to the second stage. In the single-arm two-stage setting, these trajectories gave rise to the so-called ``futility-erased partial power'' and ``futility-erased partial type-I-error'', which must be subtracted from $P_{\text{naive}}^{(1)}$ and $P_{\text{naive}}^{(0)}$, respectively. We now derive the two-arm analogue.

\begin{lemma}\label{lemma:1}
    For a given hypothesis $H_i$ ($i \in \{0,1\}$), the joint prior-predictive distribution of $(X_1,X_2,Z_1,Z_2)$ factorizes as
\begin{align*}
  \Pr(X_1 = x_1, X_2 = x_2, Z_1 = z_1, Z_2 = z_2 \mid H_i)
  = f_1(x_1,x_2 \mid H_i) \, f_2(z_1,z_2 \mid H_i),
\end{align*}
where $f_1$ and $f_2$ are the stage-wise prior-predictive probability mass functions in \Cref{eq:f1} and \Cref{eq:f2}.
\end{lemma}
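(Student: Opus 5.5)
The plan is to prove the factorization in two layers: first conditionally on the success probabilities $(p_1,p_2)$, then at the level of the hypothesis $H_i$ through the stage-wise pmfs \Cref{eq:f1} and \Cref{eq:f2}. Conditional on $(p_1,p_2)$, the four counts $X_1,X_2,Z_1,Z_2$ come from disjoint groups of patients with independent Bernoulli outcomes. The first $n_1^{(j)}$ patients in arm $j$ give $X_j\sim\text{Bin}(n_1^{(j)},p_j)$, and the remaining $n_2^{(j)}-n_1^{(j)}$ patients give $Z_j\sim\text{Bin}(n_2^{(j)}-n_1^{(j)},p_j)$. Hence
\[
\Pr(X_1=x_1,X_2=x_2,Z_1=z_1,Z_2=z_2\mid p_1,p_2)
=\Bigl[\textstyle\prod_{j=1}^2 \binom{n_1^{(j)}}{x_j}p_j^{x_j}(1-p_j)^{n_1^{(j)}-x_j}\Bigr]
\Bigl[\textstyle\prod_{j=1}^2 \binom{n_2^{(j)}-n_1^{(j)}}{z_j}p_j^{z_j}(1-p_j)^{n_2^{(j)}-n_1^{(j)}-z_j}\Bigr].
\]
This step is routine and holds identically under $H_0$ (with $p_1=p_2=p$) and under $H_1$.

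Next, I would pass from the parameter level to the hypothesis level using the way $f_1$ and $f_2$ enter the design. The paper introduces them as the stage-wise prior-predictive pmfs of the interim block $(X_1,X_2)$ and of the increment block $(Z_1,Z_2)$, with \Cref{eq:f2} described as a ``stage-wise partition of the same data''. I would therefore make explicit the reading under which \Cref{lemma:1} is a true statement about the model. In this reading, each stage's contribution under $H_i$ is generated from the design prior attached to $H_i$ for that stage, so the bracketed stage-1 factor is integrated against its design prior to give $f_1$ and the stage-2 factor against its own to give $f_2$. Concretely, under $H_1$ the stage-1 factor integrated against $\Beta(\alpha_{1d},\beta_{1d})\otimes\Beta(\alpha_{2d},\beta_{2d})$ reproduces the displayed $f_1(x_1,x_2\mid H_1)$ via the beta integral. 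The same computation with $n_1^{(j)}$ replaced by $n_2^{(j)}-n_1^{(j)}$ gives $f_2$, and likewise with $\Beta(\alpha_{0d},\beta_{0d})$ under $H_0$. The joint pmf of $(X_1,X_2,Z_1,Z_2)$ given $H_i$ is then the product of the two integrals, which is exactly the claim.

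The main obstacle is precisely this passage. If a single draw of $(p_1,p_2)$ from a non-degenerate beta design prior were shared by both stages, then integrating the product above would not equal the product of the integrals. This is because $X_j$ and $Z_j$ are positively correlated through $p_j$, so the literal product form cannot come from Fubini alone. I would therefore first check the degenerate case where the design prior is a point mass, in which the two integrals trivially separate and the lemma holds verbatim. I would then argue that the stage-wise definitions \Cref{eq:f1} and \Cref{eq:f2} are what the design calculations use, so the lemma holds with $f_1$ and $f_2$ understood as these stage-wise prior-predictive laws.

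Finally, I would verify consistency: summing the product over $(x_1,x_2,z_1,z_2)$ with $x_j+z_j=y_j$ should recover a valid pmf for $(Y_1,Y_2)$. This is the object compared with \Cref{eq:f3} in the subsequent futility-erased corrections.
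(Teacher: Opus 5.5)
Your obstacle paragraph is correct, and the problem it names cannot be worked around. With one draw of $(p_1,p_2)$ from a non-degenerate beta design prior shared by both stages, the lemma is false when $f_2$ is the marginal pmf of \Cref{eq:f2}. Take one interim and one second-stage patient per arm, with all design priors $\Beta(1,1)$. Then $\Pr(X_1=X_2=Z_1=Z_2=1\mid H_1)=\bigl(\int_0^1 p^2\,dp\bigr)^2=1/9$, whereas $f_1(1,1\mid H_1)\,f_2(1,1\mid H_1)=(1/4)^2=1/16$. The paper's appendix proof makes exactly the step you declined to make. It integrates the product of binomial kernels against the shared prior and asserts that the integral splits because the integrand is ``a product of a function of $(x_1,x_2)$ and a function of $(z_1,z_2)$''. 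Both factors depend on $(p_1,p_2)$, so in general $\int g\,h\,\pi^{(i)}\neq\int g\,\pi^{(i)}\cdot\int h\,\pi^{(i)}$. The identity holds only for point-mass design priors, and there $f_1$ and $f_2$ are products of binomials rather than the displayed beta-binomial pmfs.

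The gap in your proposal is the repair. Letting each stage draw its own parameters makes the product formula true by construction, but it is a different model, and it breaks the way the lemma is used. Under independent stage-wise draws, $Y_j=X_j+Z_j$ is a convolution of two beta-binomials, not the beta-binomial at $n_2^{(j)}$ in \Cref{eq:f3}. Under $H_1$ its variance is smaller by $2n_1^{(j)}(n_2^{(j)}-n_1^{(j)})\,\mathrm{Var}(p_j\mid H_1)$. So the pmf in your closing check is valid but is not \Cref{eq:f3}. As a result, $P^{(i)}_{\text{naive}}-\Delta^{(i)}$ would subtract a probability computed in one model from one computed in another, and it is no longer $\Pr(\text{declare efficacy at final}\mid H_i)$.

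The correct fix is the chain rule:
\[
\Pr(X_1=x_1,X_2=x_2,Z_1=z_1,Z_2=z_2\mid H_i)=f_1(x_1,x_2\mid H_i)\,f_2(z_1,z_2\mid x_1,x_2,H_i),
\]
where $f_2(\cdot\mid x_1,x_2,H_i)$ is the design-prior posterior predictive given the interim counts.
\begin{itemize}
\item Under $H_1$, it is the product over $j$ of $\mathrm{BetaBin}(n_2^{(j)}-n_1^{(j)},\,\alpha_{jd}+x_j,\,\beta_{jd}+n_1^{(j)}-x_j)$.
\item Under $H_0$, it is the two-arm pmf of the same form as $f_1(\cdot\mid H_0)$, with sample sizes $n_2^{(j)}-n_1^{(j)}$ and common prior $\Beta(\alpha_{0d}+x_1+x_2,\,\beta_{0d}+n_1^{(1)}+n_1^{(2)}-x_1-x_2)$.
\end{itemize}
Summing over $x_j+z_j=y_j$ then reproduces \Cref{eq:f3} exactly. \Cref{theorem:1} stays valid with this conditional pmf in the inner sum, which is also how the paper describes that sum in words at the end of its proof.
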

\begin{proof}
    See the Appendix.
\end{proof}
Based on \Cref{lemma:1}, one can derive the following two-arm analogue of the futility-erased partial contribution to Bayesian power or type-I-error rate for a two-arm sequential two-stage design with binary endpoints in both groups:
\begin{theorem}\label{theorem:1}
The two-arm futility-erased partial contribution $\Delta^{(i)}$ for hypothesis $H_i$, $i\in \{0,1\}$ to Bayesian power or type-I-error rate is given as follows:
\begin{align*}
  \Delta^{(i)} 
  &:= \Pr\bigl(
    \underbrace{BF_{01}(Y_1,Y_2)< k}_{\text{reach efficacy based on final data}},
    \ \underbrace{BF_{01}(X_1,X_2)\geq k_f}_{\text{reach futility based on interim data}}
    \mid H_i
  \bigr)\\
  &= \sum_{(x_1,x_2)\in\mathcal{F}_1}
     \sum_{\substack{(z_1,z_2):\\(x_1+z_1,x_2+z_2)\in\mathcal{E}_2}}
       f_1(x_1,x_2 \mid H_i)\, f_2(z_1,z_2 \mid H_i).
\end{align*}
\end{theorem}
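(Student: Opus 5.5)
The plan is to write the joint event as a disjoint union over the finite sample space of stage-wise counts, and then apply the factorization of \Cref{lemma:1} term by term. The interim counts $(X_1,X_2)$ take values in the finite lattice $\{0,\dots,n_1^{(1)}\}\times\{0,\dots,n_1^{(2)}\}$. The increments $(Z_1,Z_2)$ take values in $\{0,\dots,n_2^{(1)}-n_1^{(1)}\}\times\{0,\dots,n_2^{(2)}-n_1^{(2)}\}$. Since $Y_j = X_j + Z_j$, the final Bayes factor $BF_{01}(Y_1,Y_2)$ is a deterministic function of $(X_1,X_2,Z_1,Z_2)$. Hence the event in the definition of $\Delta^{(i)}$ is exactly
\[
\bigl\{(X_1,X_2)\in\mathcal{F}_1\bigr\}\cap\bigl\{(X_1+Z_1,X_2+Z_2)\in\mathcal{E}_2\bigr\},
\]
using the definitions of $\mathcal{F}_1$ (with $\geq k_f$) and $\mathcal{E}_2$ (with $<k$).

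In the first step I will write this event as the disjoint union of the atoms $\{X_1=x_1,X_2=x_2,Z_1=z_1,Z_2=z_2\}$. The union runs over all quadruples with $(x_1,x_2)\in\mathcal{F}_1$ and $(x_1+z_1,x_2+z_2)\in\mathcal{E}_2$. Because the space is finite and the atoms are disjoint, finite additivity gives $\Delta^{(i)}$ as the sum over these quadruples of the joint prior-predictive probabilities under $H_i$. I will arrange this sum as an outer sum over $(x_1,x_2)\in\mathcal{F}_1$ and an inner sum over admissible $(z_1,z_2)$; this is valid because the index set is a fibered product over the interim counts. In the second step, \Cref{lemma:1} replaces each joint probability by $f_1(x_1,x_2\mid H_i)\,f_2(z_1,z_2\mid H_i)$, which yields the stated double sum.

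There is no real obstacle here, since the factorization is supplied by \Cref{lemma:1}. The one point needing care is that the final-analysis region $\mathcal{E}_2$ is defined on totals $(y_1,y_2)$. The inner summation must therefore be read as ranging over increments $(z_1,z_2)$ within their binomial supports such that the shifted point lies in $\mathcal{E}_2$. I will also note that $f_1$ and $f_2$ are evaluated with the interim sample sizes $n_1^{(j)}$ and the stage-two sample sizes $n_2^{(j)}-n_1^{(j)}$, respectively, so that the interim event and the final event are measured on the same probability space.
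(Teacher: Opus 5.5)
Your argument follows the same route as the paper's proof. The paper starts directly from the double sum of the joint prior-predictive pmf over $(x_1,x_2)\in\mathcal{F}_1$ and admissible increments; your finite-additivity step only justifies that starting point and is fine. Both arguments then substitute the factorization of \Cref{lemma:1} and pull $f_1$ out of the inner sum. That substitution, which you call unproblematic, is the genuine gap, and it is shared with the paper: \Cref{lemma:1} is false for every non-degenerate design prior. It holds only when the design prior is a point mass. Conditional independence of $(X_1,X_2)$ and $(Z_1,Z_2)$ given $(p_1,p_2)$ does not survive integration against the prior, because both binomial factors depend on $(p_1,p_2)$ and the integral of a product is not the product of the integrals. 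Concretely, $\mathrm{Cov}(X_j,Z_j\mid H_i)=n_1^{(j)}\bigl(n_2^{(j)}-n_1^{(j)}\bigr)\,\mathrm{Var}(p_j\mid H_i)>0$. The product $f_1f_2$ is the law of a different model, in which stage two uses a fresh independent draw of $(p_1,p_2)$. Under that law $(X_1+Z_1,X_2+Z_2)$ no longer has the pmf $f(\cdot\mid H_i)$ used in $P^{(i)}_{\text{naive}}$. This breaks exactly the ``same probability space'' requirement you raise at the end.

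The displayed formula itself fails, not only the lemma. Take the two-sided test with flat design and analysis priors, $n_1^{(j)}=1$, $n_2^{(j)}=3$, $k=1/2$ and any $k_f\in(1,4/3]$. The interim Bayes factor is $4/3$ on the diagonal and $2/3$ off it, so $\mathcal{F}_1=\{(0,0),(1,1)\}$. Among final points reachable from $\mathcal{F}_1$, $BF_{01}=16/35$ at $(0,2),(2,0),(1,3),(3,1)$ and $BF_{01}>1$ elsewhere. Each of the four erased trajectories has true probability $\frac14\cdot\frac1{12}=\frac1{48}$; for instance $\Pr(X_1=0,Z_1=0)=\int_0^1(1-p)^3\,dp=\frac14$ and $\Pr(X_2=0,Z_2=2)=\int_0^1(1-p)p^2\,dp=\frac1{12}$. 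Hence $\Delta^{(1)}=\frac1{12}$, whereas the double sum gives $4\cdot\frac14\cdot\frac19=\frac19$.

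The repair is to replace \Cref{lemma:1} by the chain rule
$\Pr(X_1=x_1,X_2=x_2,Z_1=z_1,Z_2=z_2\mid H_i)=f_1(x_1,x_2\mid H_i)\,\Pr(Z_1=z_1,Z_2=z_2\mid X_1=x_1,X_2=x_2,H_i)$.
That is, use the posterior-predictive pmf in the inner sum, which under $H_1$ is
\[
  \Pr(Z_1=z_1,Z_2=z_2\mid X_1=x_1,X_2=x_2,H_1)=\prod_{j=1}^{2}\binom{n_2^{(j)}-n_1^{(j)}}{z_j}\frac{B\bigl(\alpha_{jd}+x_j+z_j,\ \beta_{jd}+n_2^{(j)}-x_j-z_j\bigr)}{B\bigl(\alpha_{jd}+x_j,\ \beta_{jd}+n_1^{(j)}-x_j\bigr)}.
\]
Under $H_0$ the analogue updates $\Beta(\alpha_{0d},\beta_{0d})$ by the pooled interim count $x_1+x_2$. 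With this inner factor your atom-wise argument goes through verbatim, and the result is still a finite, simulation-free sum. The inner sum then really is the conditional probability of final efficacy given the interim data, which is how the paper's proof describes it in words.
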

\begin{proof}
    See the Appendix.
\end{proof}
In \Cref{theorem:1}, the inner sum runs over all second-stage increments $(z_1,z_2)$ that, together with a futility-interim pair $(x_1,x_2)\in\mathcal{F}_1$, would have led to a final total $(y_1,y_2) = (x_1+z_1,x_2+z_2)$ in the final efficacy region $\mathcal{E}_2$. In the context of \Cref{fig:twoStageDesign}, $\Delta^{(i)}$ corresponds to the probability of the lowest of the three blue trajectories. If $H_i=H_0$, it is the futility-erased partial contribution to the Bayesian type-I-error rate of the resulting two-stage design. If $H_i=H_1$, it is the futility-erased partial contribution to the Bayesian power of the resulting two-stage design.

Intuitively, $\Delta^{(i)}$ is a two-arm, four-dimensional version of the ``futility-erased partial power'' described in the trinomial-tree framework of \cite{KelterPawelTwoStage2025} for the single-arm two-stage design. It collects exactly those trajectories that are counted in the fixed-sample power or type-I-error, but are no longer reachable when the trial is stopped for futility after the interim analysis.

In the Appendix, we provide another version of \Cref{lemma:1} and \Cref{theorem:1} which prove the factorization and double-sum expression also for the directional tests given in \Cref{eq:directionalParameterized} to \Cref{eq:directionalParameterizedOneSided2}.

\subsection{Corrected operating characteristics for the two-stage two-arm design}

The corrected Bayesian type-I-error rate and power of the two-stage design with a single futility interim are obtained by subtracting the futility-erased partial contributions from the naive fixed-sample probabilities:
\begin{align*}
  P_{\text{corr}}^{(0)}
  &:= \Pr\bigl(\text{declare efficacy at final} \mid H_0\bigr)\\
  &= P_{\text{naive}}^{(0)} - \Delta^{(0)},\\[1ex]
  P_{\text{corr}}^{(1)}
  &:= \Pr\bigl(\text{declare efficacy at final} \mid H_1\bigr)\\
  & =P_{\text{naive}}^{(1)} - \Delta^{(1)}.
\end{align*}
Here, $P_{\text{corr}}^{(0)}$ is the corrected Bayesian type-I-error rate, whereas $P_{\text{corr}}^{(1)}$ is the corrected Bayesian power. In the simplest case with early stopping only for futility (no early efficacy stopping), the second equation reduces to
\[
  P_{\text{corr}}^{(1)} = P_{\text{naive}}^{(1)} - \Delta^{(1)},
\]
because all trajectories contributing to efficacy must go through the final analysis and those that would have stopped for futility but later yielded efficacy are precisely the ones counted in $\Delta^{(1)}$.

If early stopping for efficacy at the interim analysis is allowed (see \Cref{subsec:early-eff} below), the expression for $P_{\text{corr}}^{(1)}$ includes an additional term for interim efficacy.

\subsection{Extension to early stopping for efficacy}
\label{subsec:early-eff}

If early stopping for efficacy at the interim analysis is allowed, $$P_{\text{corr}}^{(1)}=\text{Pr}(\text{declare efficacy at interim or final}|H_1)$$
and the decision rule gains a second type of early stop: in addition to futility, the trial may stop when $(X_1,X_2)\in\mathcal{E}_1^{(1)}$ with $BF_{01}(x_1,x_2)<k$. In this case, the overall Bayesian power under $H_1$ decomposes into
\[
P_{\text{corr}}^{(1)}
= \underbrace{
    \sum_{(x_1,x_2)\in\mathcal{E}_1^{(1)}} f_1(x_1,x_2 \mid H_1)
  }_{\text{efficacy at interim}}
  + 
  \Bigl[
    P_{\text{naive}}^{(1)}
    - \Delta^{(1)}
    - \Omega^{(1)}
  \Bigr],
\]
where $P_{\text{naive}}^{(1)}$ is the Bayesian power in the fixed-sample design, $\Delta^{(1)}$ is the futility-erased partial power (paths in the futility region $\mathcal{F}_1^{(1)}$ that would have fallen into the final efficacy region $\mathcal{E}_2^{(1)}$ had the trial continued, corresponding to the lowest blue trajectory in \Cref{fig:twoStageDesign}), and $\Omega^{(1)}$ is the analogous efficacy-erased partial power (paths in the efficacy region $\mathcal{E}_1^{(1)}$ that would have also satisfied the final efficacy rule but are now counted only once, in the first summand. As they are included both in the first summand and $P_{\text{naive}}^{(1)}$, they must be subtracted once. The corresponding path is the upper blue trajectory in \Cref{fig:twoStageDesign}.). The first term accounts for trajectories stopped for efficacy at the interim, while the bracketed term corresponds to the probability of declaring efficacy at the final analysis, after correcting for both erased trajectories which are ``cut off'' because one stops for futility or efficacy. Under $H_0$, analogous decompositions can be derived for the type-I-error rate, and the calibration algorithm outlined in the following subsection could proceed along the same lines, with the additional constraint that early efficacy contributions must be included in the power and type-I-error targets. In this paper, we solely consider stopping early for futility, but future research could deal with extensions involving designs which allow early stopping for efficacy.

\subsection{Corrections for the probability of compelling evidence for the two-stage two-arm design}
Next to the power and type-I-error, another operating characteristic of the trial design which changes when introducing an interim analysis is the probability of compelling evidence
$$\Pr(\mathrm{CE}_{\mathrm{fix}}\mid H_0)=P(BF_{01}(X_1,X_2)\geq k|H_0)$$
where the calibration requires the latter to achieve at least a minimum probability $f\in (0,1)$:
$$P(BF_{01}(X_1,X_2)\geq k|H_0)>f.$$
In principle, for a given final sample size $(n_2^{(1)},n_2^{(2)})$ the corrected two-stage probability of obtaining compelling evidence for $H_0$, denoted $\Pr(\mathrm{CE}_{2\mathrm{st}}\mid H_0)$, may exceed its fixed-sample counterpart $\Pr(\mathrm{CE}_{\mathrm{fix}}\mid H_0)$ associated with the same totals. The reason is that, in the two-stage design, interim outcomes that fall into the futility region are counted immediately as compelling evidence for $H_0$, whereas in the corresponding fixed-sample design the trial would necessarily continue to the final analysis before $\mathrm{CE}$ is assessed. As shown in the appendix, $\Pr(\mathrm{CE}_{2\mathrm{st}}\mid H_0)$ therefore decomposes into the sum of the probability of early futility stopping and the probability of reaching compelling evidence for $H_0$ at the final analysis after continuation, which implies that $\Pr(\mathrm{CE}_{2\mathrm{st}}\mid H_0)\ge \Pr(\mathrm{CE}_{\mathrm{fix}}\mid H_0)$, with strict inequality whenever some interim futility outcomes would not lead to compelling evidence for $H_0$ in the fixed-sample design.

In the calibration algorithm described in the following section, the constraint on $\Pr(\mathrm{CE}\mid H_0)$ is nevertheless enforced already at the fixed-sample level in Step~1. This should be viewed as a conservative feasibility screen: by requiring the fixed-sample design to attain the desired probability of compelling evidence for $H_0$, Step~1 tends to exclude unrealistically small final sample sizes for which even the non-sequential fixed-sample procedure cannot meet the evidence requirement. In Step~2 of the calibration algorithm detailed in the following section, the constraint is then checked again using the corrected two-stage quantity $\Pr(\mathrm{CE}_{2\mathrm{st}}\mid H_0)$ for each candidate interim design. Thus, Step~1 of the calibration algorithm detailed next provides a conservative screening device, while the final calibration of the sequential design is based on the corrected two-stage operating characteristics derived in Appendix \ref{sec:corrections}.

\section{Calibration algorithm in the two-arm two-stage setting}
\label{subsec:calibration}

The calibration problem now is to choose interim and final sample sizes $(n_1^{(1)},n_1^{(2)},n_2^{(1)},n_2^{(2)})$ such that the corrected operating characteristics satisfy prespecified bounds, e.g.
\begin{align}\label{eq:target_constraints_calibration_t1e_and_power}
  P_{\text{corr}}^{(0)} \le \alpha, 
  \qquad 
  P_{\text{corr}}^{(1)} \ge 1-\beta,
\end{align}
and optionally a constraint on the probability to stop for futility under $H_0$,
\begin{align}\label{eq:target_constraints_calibration_pceH0}
  \Pr\bigl( BF_{01}(X_1,X_2) \ge k_f \mid H_0 \bigr) \ge f.
\end{align}
The probability to stop for futility under $H_0$ above
quantifies the chance that the interim data provide \emph{compelling evidence for} $H_0$ and the trial is stopped early. A closely related idea appears in the Bayesian reanalysis of null results by \cite{hoekstraBayesianReanalysisNull2018}, who refer to \emph{``compelling evidence for the null hypothesis''} when Bayes factors $BF_{01}$ exceed a fixed threshold. In the context of Bayesian group‑sequential and phase II designs, early stopping rules for futility are studied using Bayesian posterior or predictive probabilities; examples and discussions of such rules and their operating characteristics can be found in \cite{HeathEtAl2020} and in the tutorial on modern Bayesian methods in clinical trials by \cite{MuehlemannEtAl2023}, as well as \cite{jiangComparingBayesianEarly2020}. In what follows, we use the term \textit{compelling evidence for $H_0$}, in line with \cite{KelterPawel2025}, \cite{KelterPawelTwoStage2025} and \cite{kelterTwoArmTwoStage2026}.

A natural calibration algorithm based on \Cref{lemma:1} and \Cref{theorem:1}, which mirrors the single‑arm two‑stage design, is now given as follows.

\begin{enumerate}
  \item \textbf{Step 1 (fixed-sample calibration):} For a fixed allocation ratio, search over a grid of total sample sizes $(n_2^{(1)},n_2^{(2)})$ with $n_{\text{min}}\leq n_2^{(j)}\leq n_{\text{max}}$, $j=1,2$ for a realistic range of minimum and maximum sample sizes $n_{\text{min}}$ and $n_{\text{max}}$, until the naive fixed-sample Bayesian power $P_{\text{naive}}^{(1)}$ exceeds a prespecified target $1-\beta$ (e.g. 80\%), possibly with a small cushion (e.g., $1-\beta + \delta$ for a small $\delta>0$). This yields a ``sufficient'' full-trial size.
  \item \textbf{Step 2 (two-stage calibration):} For the chosen full-trial size, search over interim sample sizes $(n_1^{(1)},n_1^{(2)})$ and identify those designs that satisfy the constraints on the corrected operating characteristics $P_{\text{corr}}^{(0)}$ and $P_{\text{corr}}^{(1)}$ in \Cref{eq:target_constraints_calibration_t1e_and_power}, as well as any optional constraint on the probability of early stopping for futility under $H_0$ in \Cref{eq:target_constraints_calibration_pceH0}. Among these, select the design that optimizes the desired criterion, e.g., minimizes the expected total sample size under $H_0$.
\end{enumerate}
\begin{figure}[h!]
    \centering
    \includegraphics[width=1\linewidth]{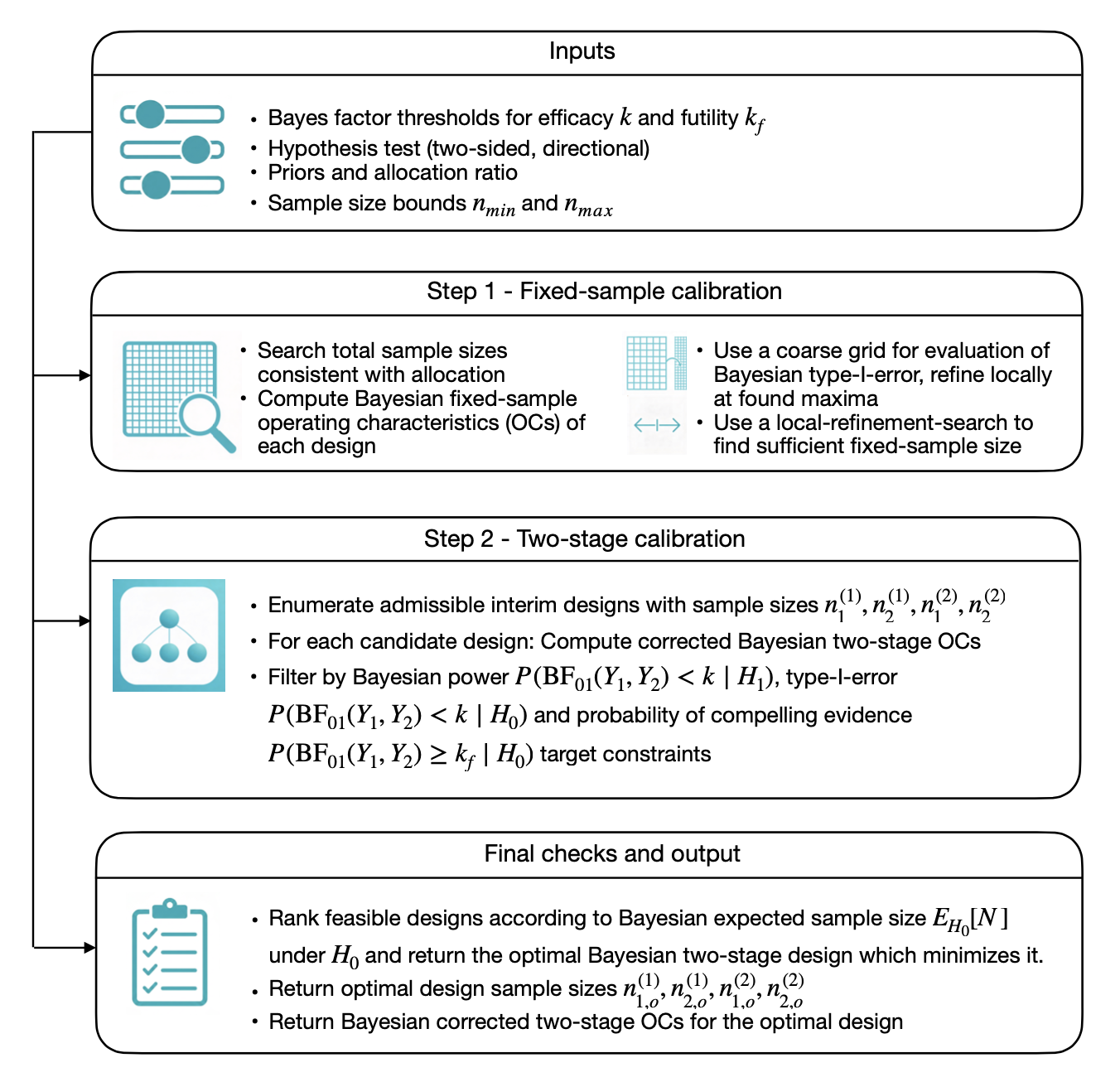}
    \caption{Illustration of the calibration algorithm for finding an optimal Bayesian two-arm two-stage design. Note that neither the use of Bayes factors nor binary endpoints in both trial arms are required for the algorithm to work.}
    \label{fig:algorithm_bayes}
\end{figure}
\Cref{fig:algorithm_bayes} visualizes the calibration algorithm for finding an optimal Bayesian two-arm two-stage design. Note, that nothing requires to use Bayes factors after all. One could also use posterior probabilities or any other test statistic and proceed likewise, replacing the computation of Bayes factors with the other test statistic of choice. 

Now, if no such design is found, the full-trial size can be increased and the process repeated. The resulting algorithm in detail then looks as follows.

\begin{enumerate}
\item \textbf{Input:} Design and analysis priors under $H_0$ and $H_1$, Bayes‑factor thresholds $k<11$ and $k_f>1$, target error bounds $\alpha,\beta\in(0,1)$, allocation ratio, hypotheses to test and (optionally) a target probability of early stopping for futility $f\in(0,1)$ under $H_0$.

\item \textbf{Find a sufficiently large fixed‑sample size:} For an increasing grid of total sample sizes $(n_2^{(1)},n_2^{(2)})$ with $n_{\text{min}}\leq n_2^{(j)}\leq n_{\text{max}}$ for $j=1,2$ (e.g.\ constrained by a fixed allocation ratio), compute the naive fixed‑sample Bayesian power
  \[
    P_{\text{naive}}^{(1)} = \sum_{(y_1,y_2)\in\mathcal{E}_2} f(y_1,y_2\mid H_1)
  \]
  until it exceeds the target $1-\beta$, possibly with a small cushion (e.g., up to $1-\beta + \delta$ for a small $\delta>0$).\footnote{This step ensures that the resulting Bayesian power target can be reached in principle. Based on \Cref{theorem:1}, the power in the two‑stage design can only decrease when an interim analysis that allows stopping for futility is introduced. As a consequence, if the fixed‑sample design cannot reach the target power, no two‑stage design with that or smaller total sample size can. This computation is performed by means of the matrix‑search algorithm outlined in \Cref{subsec:matrix-search}.} Optionally, compute the naive fixed-sample probability of compelling evidence $\text{Pr(CE}_{\text{fix}}|H_0)$ and calibrate it according to \Cref{eq:target_constraints_calibration_pceH0}.\footnote{Note that for screening for a sufficient fixed-sample size, we do not use $\text{ CE}_{H_0}^{\text{2st}}$ as defined in \Cref{eq:pce2st}, which is the corrected two-stage probability of compelling evidence. We solely use the fixed-sample probability of compelling evidence, as the corrected two-stage probability of compelling evidence must increase when introducing an interim analysis which allows stopping for futility. See Appendix \Cref{sec:correctionFutilityStopping}.} Let $(n_2^{(1)},n_2^{(2)})$ denote a candidate full‑trial size that achieves this (or lies in a small surrounding region).

\item \textbf{Compute fixed‑sample type‑I‑error at that size:} For the chosen total sample size $(n_2^{(1)},n_2^{(2)})$, compute the corresponding naive fixed‑sample type‑I‑error
  \[
    P_{\text{naive}}^{(0)} = \sum_{(y_1,y_2)\in\mathcal{E}_2} f(y_1,y_2\mid H_0),
  \]
  where $\mathcal{E}_2 := \{(y_1,y_2) : BF_{01}(y_1,y_2) < k\}$.\footnote{Note that the type-I-error rate also can only decrease when introducing an interim analysis which allows to stop for futility only. As a consequence, even if the resulting type-I-error rate for that sample size $(n_2^{(1)},n_2^{(2)})$ does not meet the desired requirements, the resulting two-stage design can still meet those.}

\item \textbf{Iterate over interim sample sizes:} For the fixed final sample size pair $(n_2^{(1)},n_2^{(2)})$, consider candidate interim sample size pairs $(n_1^{(1)},n_1^{(2)})$\footnote{For example, as fractions of $n_2^{(1)},n_2^{(2)}$, or starting from a small value such as $n_1^{(1)}=n_1^{(2)}=5$, iterating up to $n_1^{(1)}=n_1^{(2)}=n_2^{(1)}-1$ for balanced randomization, and analogously for non‑balanced randomization.}:

  \begin{enumerate}
  \item Compute the stage‑wise prior‑predictive probability mass functions $f_1(\cdot,\cdot \mid H_i)$ and $f_2(\cdot,\cdot \mid H_i)$ for $i\in\{0,1\}$.\footnote{The prior‑predictive probability mass functions for the two‑sided and directional tests are available in the Appendix of \cite{kelterTwoArmTwoStage2026}. For the two-sided test, see also \Cref{eq:f1} and \Cref{eq:f2}.}

  \item Identify the interim futility region $\mathcal{F}_1^{(i)}$ and the final efficacy region $\mathcal{E}_2^{(i)}$ by evaluating the Bayes factor at all possible combinations of $(x_1,x_2)$ and $(y_1,y_2)$.\footnote{All Bayes factors for the two‑sided and directional tests are available in the Appendix of \cite{kelterTwoArmTwoStage2026} and can be computed via standard numerical integration.}

  \item Compute the futility‑erased partial contributions
    \[
      \Delta^{(i)}
      = \sum_{(x_1,x_2)\in\mathcal{F}_1^{(i)}}
        \sum_{\substack{(z_1,z_2):\\(x_1+z_1,x_2+z_2)\in\mathcal{E}_2^{(i)}}}
          f_1(x_1,x_2 \mid H_i)\, f_2(z_1,z_2 \mid H_i),
        \qquad i\in\{0,1\},
    \]
    as provided in \Cref{theorem:1}.

  \item Obtain the corrected operating characteristics
    \[
      P_{\text{corr}}^{(0)} = P_{\text{naive}}^{(0)} - \Delta^{(0)},
      \quad
      P_{\text{corr}}^{(1)} = P_{\text{naive}}^{(1)} - \Delta^{(1)}.
    \]

  \item Compute the probability of early stopping for futility under $H_0$:
    \[
      \Pr(\text{futility at interim} \mid H_0)
      = \sum_{(x_1,x_2)\in\mathcal{F}_1^{(0)}} f_1(x_1,x_2 \mid H_0),
    \]
    and the expected total sample size under $H_0$:
    \[
      E_{H_0}[N]
      = N_1 \Pr(\text{futility at interim}\mid H_0)
        + N_2 \bigl(1 - \Pr(\text{futility at interim}\mid H_0)\bigr),
    \]
    where $N_1 = n_1^{(1)} + n_1^{(2)}$ and $N_2 = n_2^{(1)} + n_2^{(2)}$.
  \end{enumerate}

\item \textbf{Design selection:} Among all interim‑size pairs $(n_1^{(1)},n_1^{(2)})$ for the fixed total size $(n_2^{(1)},n_2^{(2)})$ that satisfy the constraints
  \[
    P_{\text{corr}}^{(0)} \le \alpha, \quad
    P_{\text{corr}}^{(1)} \ge 1 - \beta,
  \]
  and (optionally) $\text{ CE}_{H_0}^{\text{2st}}>f$ with $\text{ CE}_{H_0}^{\text{2st}}$ as defined in \Cref{eq:pce2st}, choose the one that minimizes $E_{H_0}[N]$ (or some other desired criterion) and call it the \textit{Bayesian optimal two‑stage design}, in line with the notation in \cite{KelterPawelTwoStage2025}.

\item \textbf{If no such design exists:} Increase the final sample size $(n_2^{(1)},n_2^{(2)})$ and repeat steps 2–5 until a design is found that satisfies all constraints.
\end{enumerate}

In line with the fifth step of selecting an optimal design among all trial designs which fulfill the required conditions on Bayesian type-I-error rate and power, we formally define the \textit{optimal two-arm two-stage Bayes factor design (for binary endpoints)} as follows:
\begin{definition}[Optimal two-arm two-stage Bayes factor design for binary endpoints]
Let $\alpha \in (0,1)$ and $\beta \in (0,1)$ be given, and let
$n_{1,\min}^{(1)}, n_{1,\min}^{(2)}$ denote the minimum interim sample sizes
at which the trial may stop for futility, and let
$n_{2,\max}^{(1)}, n_{2,\max}^{(2)}$ denote the maximum final sample sizes.
For a given Bayes-factor threshold $k<1$, the optimal two-arm two-stage
Bayes factor design for binary endpoints is any admissible design
$(n_1^{(1)},n_1^{(2)},n_2^{(1)},n_2^{(2)})$ that solves
\[
\min_{n_1^{(1)},n_1^{(2)},n_2^{(1)},n_2^{(2)}} E_{H_0}[N]
\]
subject to
\[
P_{\mathrm{corr}}^{(0)} \le \alpha, \qquad
P_{\mathrm{corr}}^{(1)} \ge 1-\beta, \qquad
\text{ and, optionally, }\text{ CE}_{H_0}^{2st}>f
\]
where $\text{CE}_{H_0}^{2st}$ is defined in \Cref{eq:pce2st}, and
\[
n_{1,\min}^{(j)} \le n_1^{(j)} < n_2^{(j)} \le n_{2,\max}^{(j)}, \qquad j=1,2.
\]
Here, $P_{\mathrm{corr}}^{(0)}$ and $P_{\mathrm{corr}}^{(1)}$
denote the corrected Bayesian type-I error and power of the two-stage design, and $\text{ CE}_{H_0}^{\text{2st}}$ the corrected probability of compelling evidence of the two-stage design.
\end{definition}
The algorithm is implemented in the R package \texttt{bfbin2arm}, which is available on CRAN.\footnote{The package and various vignettes illustrating the use are available under \url{https://cran.r-project.org/web/packages/bfbin2arm/index.html}.}


\section{Examples}
\label{sec:examples}

\subsection{Re-analysis of the Riociguat phase II trial in systemic sclerosis}
\label{subsec:riociguat}

To illustrate the proposed methodology, we reconsider the riociguat phase II trial in systemic sclerosis discussed in the fixed-sample two-arm Bayes factor design setting by \cite{kelterTwoArmTwoStage2026}, compare also \cite{khannaRiociguatPatientsEarly2020}.\footnote{A detailed software vignette including all relevant R code to recreate this example which also includes further explanations is available on CRAN.} The example is attractive because it represents a realistic two-arm phase II setting with a binary endpoint, while also showing that the practical behaviour of the proposed two-stage calibration algorithm depends strongly on the prior-predictive separation of the competing hypotheses.

Let \(p_1\) denote the response probability in the control arm and \(p_2\) the response probability in the treatment arm. In the riociguat example, the observed response rates are
\[
\hat p_1 = \frac{38}{22+38} \approx 0.6333,
\qquad
\hat p_2 = \frac{48}{48+11} \approx 0.8135.
\]
Since the observed response rate is higher in the treatment arm, we consider the one-sided superiority setting
\[
H_0: p_1 = p_2
\qquad \text{versus} \qquad
H_1: p_1 < p_2,
\]
implemented through the Bayes factor \(BF_{0+}\), that is, evidence against \(H_0\) corresponds to small values of the Bayes factor in favour of the null.

Throughout this example, the efficacy and futility thresholds are chosen as
\[
k = \frac{1}{10},
\qquad
k_f = 3.
\]
Thus, efficacy is declared when the Bayes factor falls below \(1/10\), whereas compelling evidence in favour of the null hypothesis is declared when the Bayes factor is at least \(3\). We calibrate the design to satisfy a Bayesian type-I-error bound of \(\alpha = 0.025\), Bayesian power \(1-\beta = 0.80\), and a lower bound of \(0.60\) on the probability of compelling evidence for \(H_0\).

\paragraph{Priors.}
We distinguish between design priors, which determine the prior-predictive operating characteristics used during calibration, and analysis priors, which enter the Bayes factor itself. For the riociguat example, we use a flat design prior under \(H_0\),
\[
p \mid H_0 \sim \mathrm{Beta}(1,1),
\]
and slightly informative design priors under \(H_1\),
\[
p_1 \mid H_1 \sim \mathrm{Beta}(1,3),
\qquad
p_2 \mid H_1 \sim \mathrm{Beta}(3,1).
\]
The corresponding analysis priors are chosen to be flat,
\[
p \mid H_0 \sim \mathrm{Beta}(1,1),
\qquad
p_1 \mid H_1 \sim \mathrm{Beta}(1,1),
\qquad
p_2 \mid H_1 \sim \mathrm{Beta}(1,1).
\]
This choice reflects the intended separation between planning and analysis: prior information is allowed to influence the calibration of the design, but the eventual evidential assessment through the Bayes factor is based on neutral analysis priors.

\begin{figure}[h!]
    \centering
    \includegraphics[width=0.88\textwidth]{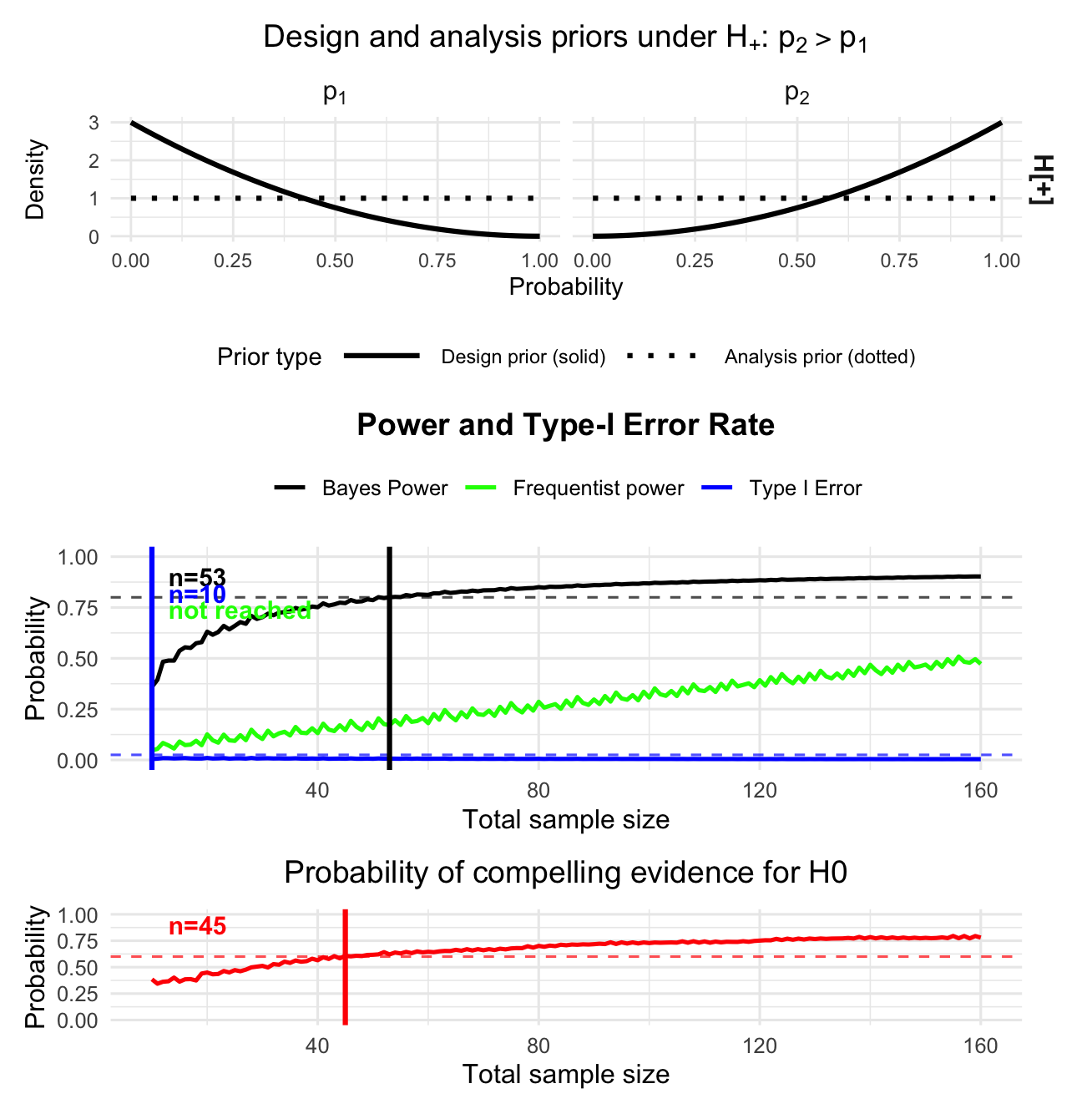}
    \caption{Calibrated one-stage Bayes factor design for the riociguat example. The figure illustrates the operating characteristics of the fixed-sample reference design under the chosen evidence thresholds and prior specification.}
    \label{fig:riociguat_onestage}
\end{figure}

\paragraph{One-stage reference design.}
Before constructing the two-stage design, it is helpful to inspect the corresponding fixed-sample reference design obtained under the same calibration targets. \Cref{fig:riociguat_onestage} shows the one-stage design calibrated to 80\% Bayesian power, 2.5\% Bayesian type-I error, and 60\% probability of compelling evidence, and requires \(N=53\) patients in total, corresponding to approximately to 27 patients per arm. At this sample size, the Bayesian power is about \(0.80\), the Bayesian type-I error is about \(0.007\), and the probability of compelling evidence in favour of \(H_0\) is about \(0.62\).

The corresponding one-stage design is useful as a benchmark, but it does not permit early stopping for futility. The practical question is therefore whether an interim analysis can be introduced without materially damaging the operating characteristics.

\paragraph{Two-stage design with mildly informative design priors.}
We now apply the proposed optimal two-stage calibration algorithm. The search is carried out under balanced randomization, with minimum interim sample sizes of 10 patients per arm and maximum final sample sizes of 80 patients per arm. Under the prior specification given above, the fixed-sample calibration step identifies a sufficient one-stage anchor with
\[
n_{21}=n_{22}=34.
\]
Conditional on this anchor, the second step of the algorithm searches over admissible interim sample sizes and selects the design minimizing the expected total sample size under \(H_0\).

\begin{figure}[h!]
    \centering
    \includegraphics[width=1\textwidth]{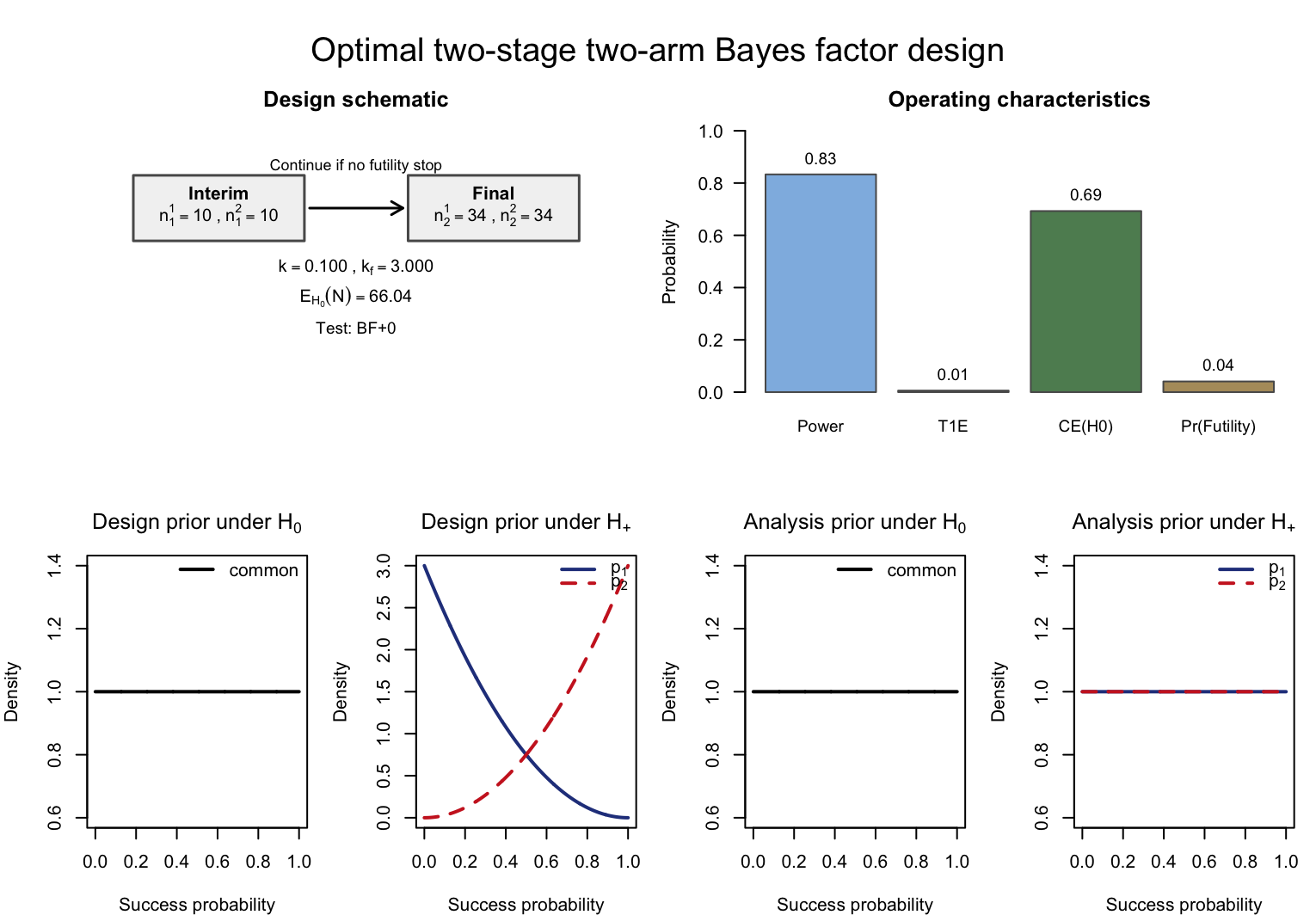}
    \caption{Optimal two-stage Bayes factor design for the riociguat example using mildly informative design priors under \(H_1\) and flat analysis priors. The figure visualizes the calibrated design and the corresponding prior specification used in planning.}
    \label{fig:riociguat_twostage_lessinformative}
\end{figure}

\Cref{fig:riociguat_twostage_lessinformative} shows the results of the calibration algorithm.\footnote{See also the software vignette available at \url{https://cran.r-project.org/web/packages/bfbin2arm/vignettes/bfbin2arm-twostage.html} for details on how to recreate the results and plots with the \texttt{bfbin2arm} R package.} For the riociguat example, the resulting optimal two-stage design is
\[
(n_{11}, n_{12}, n_{21}, n_{22}) = (10,10,34,34).
\]
Hence, the interim analysis is conducted after \(20\) patients in total, and the maximal sample size is \(68\). The corrected operating characteristics of this design are
\[
\text{Power} \approx 0.833,\qquad
\text{Type-I error} \approx 0.0058,\qquad
\mathrm{CE}_{H_0} \approx 0.693,
\]
with early stopping for futility under \(H_0\) occurring with probability about \(0.04\). The corresponding expected sample size under \(H_0\) is
\[
E_{H_0}[N] \approx 66.04.
\]

Several aspects are noteworthy. First, the corrected Bayesian power and type-I error remain comfortably within the desired design targets. Second, the price of allowing early futility stopping is small in terms of maximal sample size: the two-stage design increases the maximal sample size from \(53\) in the one-stage reference design to \(68\), but preserves the intended operating characteristics. Third, the actual gain in expected sample size under \(H_0\) is modest, because the futility stopping probability is only around \(4\%\). Thus, in this specific example, the interim analysis is feasible and principled, but it does not lead to a dramatic efficiency gain by itself.

\paragraph{Interpretation.}
The modest reduction in \(E_{H_0}[N]\) is not a deficiency of the algorithm. Rather, it is a consequence of the joint calibration constraints. The efficacy threshold \(k=1/10\) is fairly stringent, the futility threshold \(k_f=3\) requires non-trivial evidence in favour of \(H_0\), and the additional requirement \(\mathrm{CE}_{H_0}^{\text{2st}}\ge 0.60\) limits how aggressively null trajectories can be truncated at the interim analysis. Under such constraints, only a relatively small subset of null trajectories can be stopped early without compromising power or the evidence requirement for the null.

This example therefore illustrates an important practical point. A two-stage design does not automatically imply a substantially smaller expected sample size under \(H_0\). If the design priors under \(H_0\) and \(H_1\) are only moderately separated and the evidential thresholds are strict, then the calibrated futility rule may have only a limited opportunity to remove null trajectories early.

\paragraph{Two-stage design with more informative design priors.}
To investigate the effect of stronger prior-predictive separation, we keep the analysis priors and Bayes factor thresholds unchanged, but replace the design priors under \(H_1\) by the more informative specification
\[
p_1 \mid H_1 \sim \mathrm{Beta}(1,5),
\qquad
p_2 \mid H_1 \sim \mathrm{Beta}(5,1).
\]
This modification leaves the eventual Bayes factor analysis unchanged; it only affects the calibration stage by expressing a more concentrated prior expectation that the control arm has relatively low response probability and the treatment arm relatively high response probability.

\begin{figure}[h!]
    \centering
    \includegraphics[width=1\textwidth]{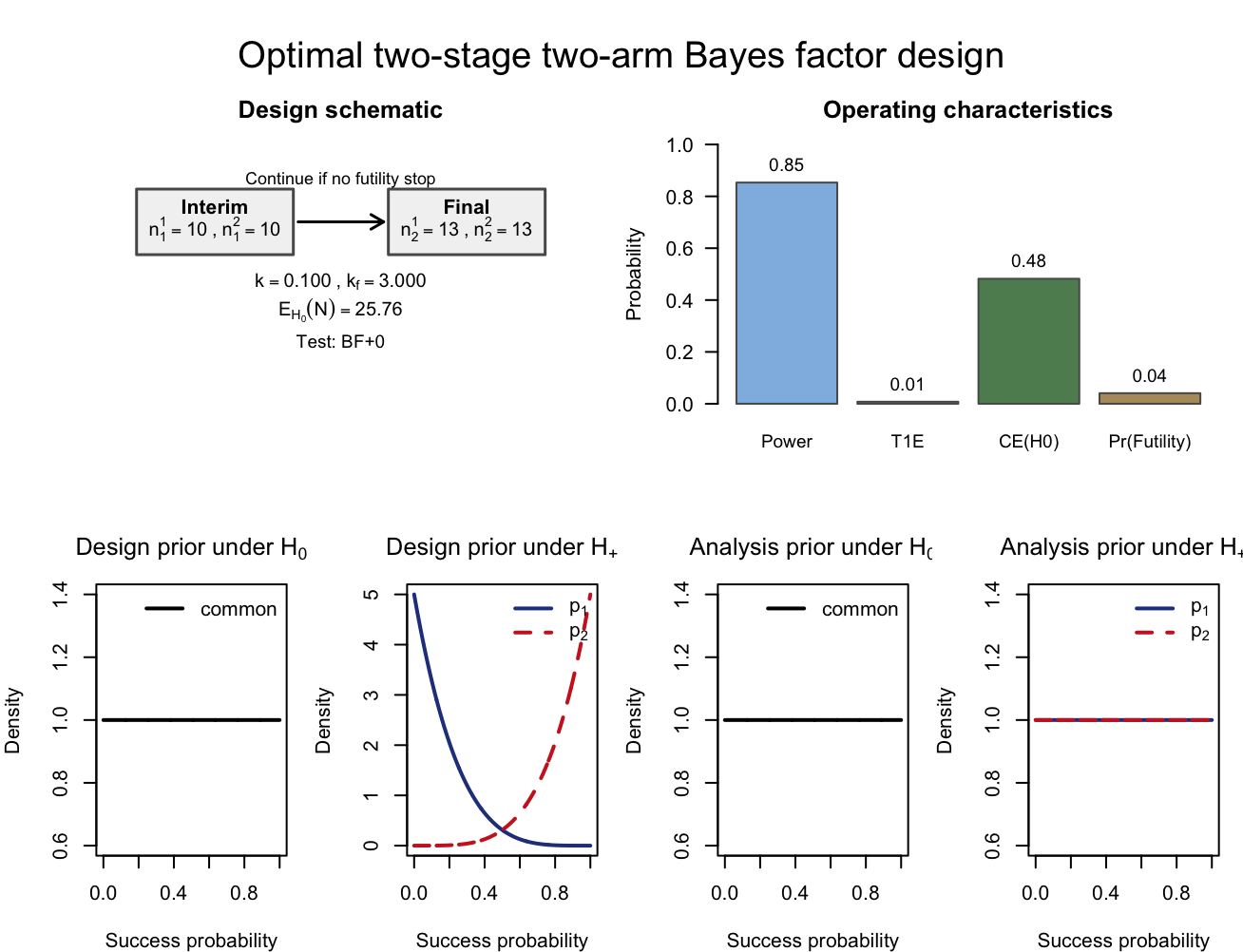}
    \caption{Optimal two-stage Bayes factor design for the riociguat example using more informative design priors under \(H_1\) and flat analysis priors. The figure visualizes the calibrated design and the corresponding prior specification used in planning.}
    \label{fig:two-stage-modified_more_informative_design_priors}
\end{figure}
\Cref{fig:two-stage-modified_more_informative_design_priors} shows the results.
Under this more informative design prior, the fixed-sample calibration step now identifies a sufficient one-stage anchor with
\[
n_{21}=n_{22}=13,
\]
so that the corresponding maximal total sample size is \(26\). The optimal two-stage design becomes
\[
(n_{11}, n_{12}, n_{21}, n_{22}) = (10,10,13,13).
\]
Its corrected operating characteristics are approximately
\[
\text{Power} \approx 0.853,\qquad
\text{Type-I error} \approx 0.0079,\qquad
\mathrm{CE}_{H_0} \approx 0.482,
\]
with futility stopping probability under \(H_0\) again close to \(0.04\), and expected sample size
\[
E_{H_0}[N] \approx 25.76.
\]

The striking feature is that the futility stopping probability changes very little, whereas the expected sample size under \(H_0\) decreases substantially, from about \(66.04\) to \(25.76\). The reason is that the more informative design priors separate the predictive distributions under \(H_0\) and \(H_1\) much more clearly. As a result, the fixed-sample anchor from Step 1 becomes much smaller, and the resulting two-stage design inherits this smaller overall scale. In other words, in this class of examples, the main efficiency gain is not driven by making the interim rule dramatically more aggressive, but by reducing the total sample size needed for the competing hypotheses to become distinguishable under the design priors.

\paragraph{Comparison of the three designs.}
Table~\ref{tab:riociguat_summary} summarizes the key operating characteristics of the one-stage reference design and the two calibrated two-stage designs. The comparison highlights two complementary messages. First, the proposed two-stage procedure can reproduce the desired Bayesian operating characteristics while incorporating an interim futility analysis. Second, the practical efficiency of the resulting design depends strongly on the design-prior specification used for calibration.

\begin{table}[t]
\centering
\caption{Summary of the riociguat example under the one-stage reference design and two two-stage calibrations.}
\label{tab:riociguat_summary}
\resizebox{\textwidth}{!}{%
\begin{tabular}{lcccccccc}
\hline
Design & \(n_{11}\) & \(n_{12}\) & \(n_{21}\) & \(n_{22}\) & \(N_{\max}\) & Power & Type-I error & \(E_{H_0}[N]\) \\
\hline
One-stage design & -- & -- & 27 & 26 & 53 & 0.80 & 0.0070 & 53.0 \\
Two-stage, mildly informative priors & 10 & 10 & 34 & 34 & 68 & 0.833 & 0.0058 & 66.04 \\
Two-stage, more informative priors & 10 & 10 & 13 & 13 & 26 & 0.853 & 0.0079 & 25.76 \\
\hline
\end{tabular}%
}
\end{table}

Overall, the riociguat example illustrates the central practical features of the proposed method. The algorithm provides a fully numerical, simulation-free calibration of Bayesian two-stage two-arm designs; it yields interpretable operating characteristics in terms of Bayes factor evidence thresholds; and it makes transparent how prior assumptions at the design stage influence both feasibility and efficiency. At the same time, the example also shows that early stopping for futility is not automatically associated with large savings in expected sample size. Whether such savings materialize depends crucially on how well the design priors separate the hypotheses under consideration.

In the riociguat example, the optimal two-stage design only stops early for
futility under $H_0$ with probability about \(0.04\), so the reduction in the
expected sample size under $H_0$ is very modest. This behaviour is not a bug
of the algorithm, but a consequence of the modelling choices and calibration
constraints.

First, the design is calibrated to fairly strict evidence requirements:
the success threshold \(k = 1/10\), the null-evidence threshold
\(k_f = 3\), the Bayesian type-I error bound \(\alpha = 0.025\),
and the requirement \(\mathrm{CE}_{H_0}^{\text{2st}} \ge 0.60\) together imply that
only a small fraction of $H_0$ outcomes can be eliminated safely at the
interim look without compromising either power or the probability of
compelling evidence in favour of $H_0$. Under such constraints, the interim
boundary cannot be very aggressive, so the early stopping probability under
$H_0$ remains low and \(E_{H_0}(N)\) stays close to the maximum sample size.

Second, even when the interim fraction is moved and the $\text{CE}_{H_0}^{\text{2st}}$ target is
varied, the futility probability in this example is relatively insensitive as
long as the thresholds \(k\) and \(k_f\) and the overall calibration targets
remain fixed. Moving the interim later increases the information available at
the interim, but the futility rule still has to preserve about 80\% Bayesian
power and the $\text{CE}_{H_0}^{\text{2st}}$ constraint, which limits how many null paths can be
stopped early. In particular, with \(k_f = 3\) already fairly liberal for
declaring evidence in favour of $H_0$, further gains in early
stopping would require relaxing this threshold in a way that is not clinically
desirable here.

Third, the design priors have a pronounced effect on the expected sample size
under $H_0$. When the design priors under \(H_+\) are made more informative
and more clearly separated from $H_0$, the predictive distributions under
$H_0$ and \(H_+\) diverge more quickly as the sample size grows. This leads
to a smaller sufficient fixed-sample size and, consequently, to a smaller
expected sample size under $H_0$ in the corresponding two-stage design, even
if the interim futility probability itself changes only marginally. In the
riociguat example, this can be achieved by concentrating the design priors
slightly more around the clinically relevant success rates, while keeping the
analysis priors and Bayes factor thresholds unchanged.

\subsection{Riociguat trial re-analysis with slightly informative design priors}
\label{subsec:riociguat-new}

We revisit the riociguat phase II trial, this time using less optimistic but still slightly informative design priors. This second example illustrates how the necessary sample sizes are influenced by the design prior choice and how the sample size reduction of the optimal two-stage design compared to the calibrated one-stage design varies with different design prior choices. In contrast to the first example, we not require 90\% Bayesian power instead of only 80\%. The type-I-error (Bayesian) is again calibrated to 2.5\%, and no minimum probability on compelling evidence in favour of $H_0$ is required.

\paragraph{Priors.}

We now use a slightly informative design prior under \(H_0\),
\[
p \mid H_0 \sim \mathrm{Beta}(a_{0d}, b_{0d}),
\]
and slightly informative design priors under \(H_1\),
\[
p_1 \mid H_1 \sim \mathrm{Beta}(1,3),
\qquad
p_2 \mid H_1 \sim \mathrm{Beta}(3,1),
\]
encoding the expectation that the control arm has a lower response probability than the treatment arm. The corresponding analysis priors are chosen to be flat,
\[
p \mid H_0 \sim \mathrm{Beta}(1,1),
\qquad
p_1 \mid H_1 \sim \mathrm{Beta}(1,1),
\qquad
p_2 \mid H_1 \sim \mathrm{Beta}(1,1).
\]
This choice reflects the intended separation between planning and analysis: prior information is allowed to influence the calibration of the design, but the eventual evidential assessment through the Bayes factor is based on neutral analysis priors.

\paragraph{One-stage reference design.}

As a benchmark, we first consider the fixed-sample one-stage design obtained under the same thresholds and Bayesian calibration targets. Under the priors specified above and balanced allocation between arms, the one-stage calibration identifies a design with a total sample size of
\[
N_{\text{one}} = 154
\]
patients, corresponding to 77 patients in each arm. At this sample size the Bayesian power is approximately \(0.901\), the Bayesian type-I error under \(H_0\) is about \(0.004\), and the probability of compelling evidence for \(H_0\) is about \(0.775\). This directly calibrated one-stage design serves as a reference for evaluating the corresponding two-stage design. The results are shown in \Cref{fig:one-stage-modified}.

\begin{figure}[h!]
    \centering
    \includegraphics[width=1\linewidth]{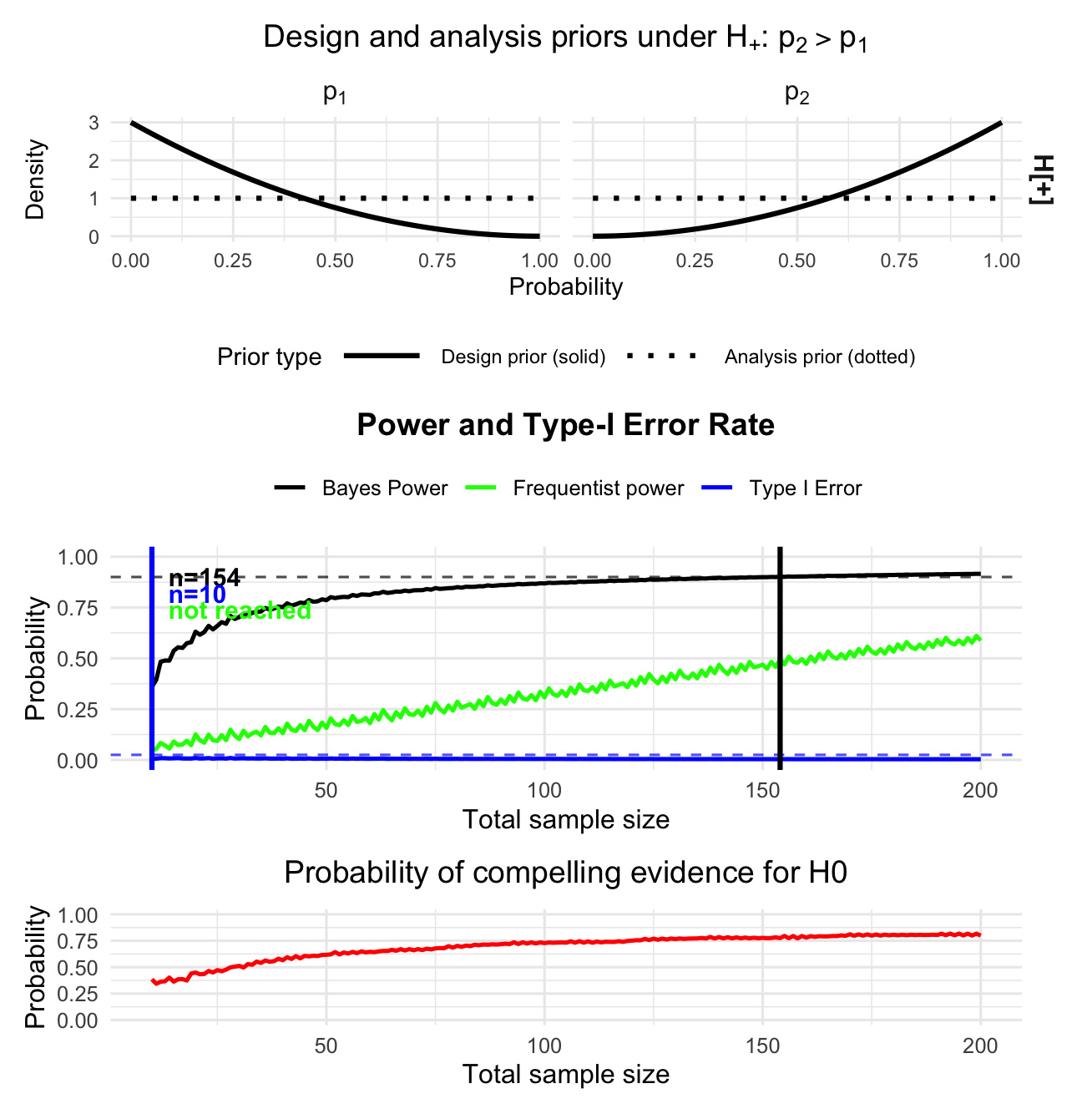}
    \caption{Calibrated one-stage Bayes factor design for the riociguat example. The figure illustrates the operating characteristics of the fixed-sample reference design under the chosen evidence thresholds and prior specification. Slightly informative design priors $B(1,3)$ and $B(3,1)$ were chosen for the control and treatment group.}
    \label{fig:one-stage-modified}
\end{figure}

\paragraph{Two-stage design without power cushion.}

We now apply the proposed two-stage calibration algorithm. The design includes a single interim analysis that allows early stopping for futility and is calibrated to the same Bayesian power and type-I-error targets as the one-stage design. Also, no calibration of the compelling evidence in favour of $H_0$ is carried out. The efficacy and futility thresholds \(k\) and \(k_f\) and the priors are kept unchanged. 

Under these settings, the fixed-sample calibration step identifies a sufficient one-stage anchor with
\[
n_{2}^{(1)} = n_{2}^{(2)} = 76,
\]
corresponding to a total sample size of \(N_{\text{anchor}} = 152\). At this anchor the Bayesian power under the design priors is approximately \(0.900\), the Bayesian type-I error is about \(0.004\), and the Bayesian probability of compelling evidence for \(H_0\) is about \(0.776\). Conditional on this anchor, the second step of the algorithm searches over admissible interim sample sizes and selects the design that minimizes the expected total sample size under \(H_0\). The results are shown in \Cref{fig:two-stage-modified}.

\begin{figure}[h!]
    \centering
    \includegraphics[width=1\textwidth]{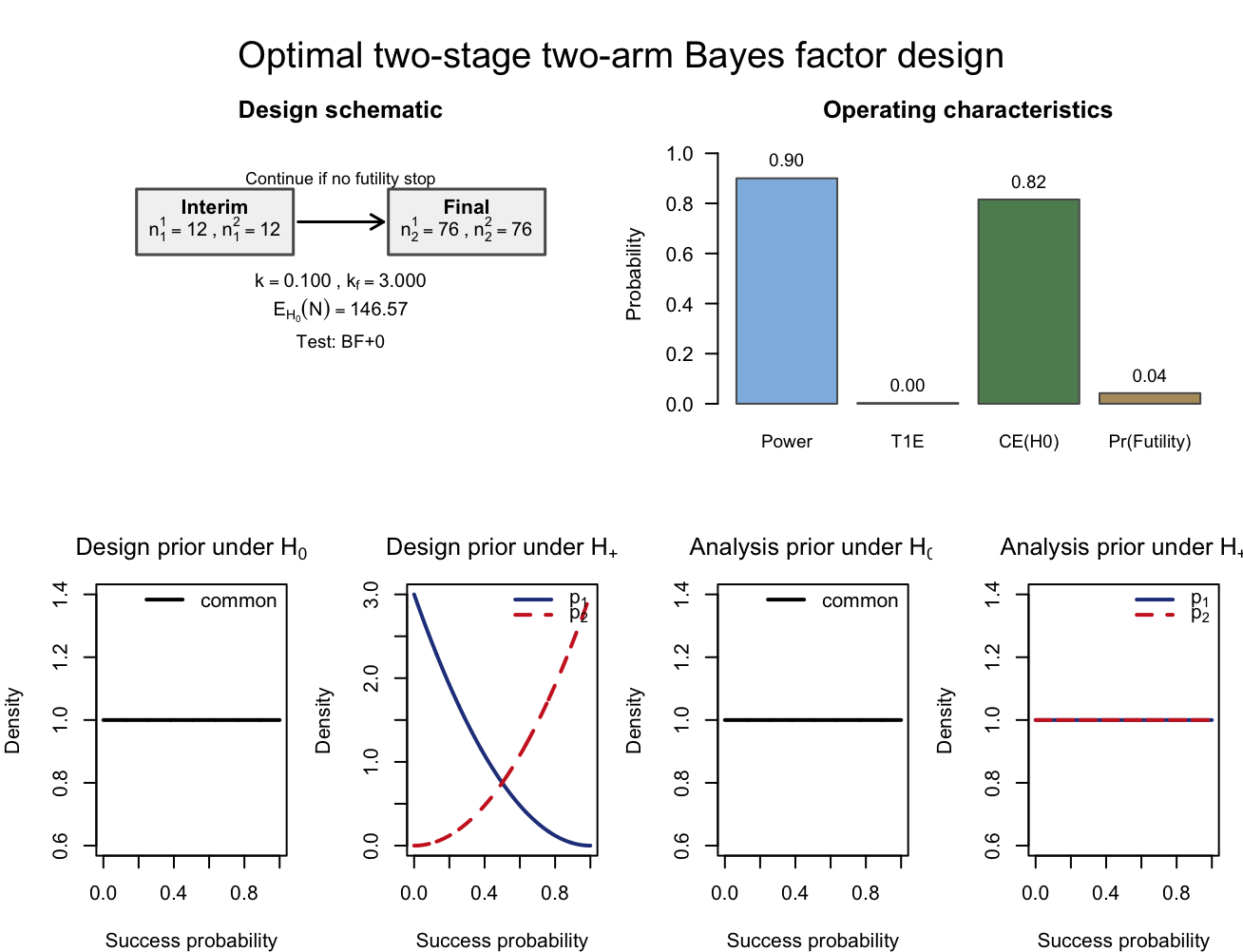}
    \caption{Optimal two-stage Bayes factor design for the riociguat example using more informative design priors under \(H_1\) and flat analysis priors. The figure visualizes the calibrated design and the corresponding prior specification used in planning.}
    \label{fig:two-stage-modified}
\end{figure}
For the riociguat example, the resulting optimal two-stage design is
\[
(n_{1}^{(1)}, n_{1}^{(2)}, n_{2}^{(1)}, n_{2}^{(2)}) = (12, 12, 76, 76).
\]
Hence, the interim analysis is conducted after \(24\) patients in total, and the maximal sample size is \(N_{\max} = 152\). The corrected Bayesian operating characteristics of this design, accounting for early stopping for futility, are
\[
\text{Power} \approx 0.900,\qquad
\text{Type-I error} \approx 0.003,\qquad
\mathrm{CE}_{H_0} \approx 0.816.
\]
The corresponding Bayesian expected sample size under \(H_0\) is
\[
E_{H_0}[N] \approx 146.6.
\]

\paragraph{Comparison and interpretation.}

Compared to the one-stage reference design with \(N_{\text{one}} = 154\) and no interim analysis, the two-stage design slightly reduces the maximal sample size to \(N_{\max} = 152\) and, more importantly, reduces the expected sample size under \(H_0\) from \(154\) to about \(146.6\), while maintaining essentially the same Bayesian power and type-I-error rate. The probability of compelling evidence for \(H_0\) is actually increased from approximately \(0.775\) to \(0.816\), reflecting the fact that interim futility stopping is counted as compelling evidence in favour of the null.

A subtle point in this example is that the fixed-sample anchor identified in step~1 of the two-stage calibration (76 patients per arm) does not exactly coincide with the smallest feasible one-stage design (77 patients per arm). This discrepancy is not driven by the \(\mathrm{CE}_{H_0}\) constraint, which is inactive here, but rather by two technical aspects of the calibration: the Bayesian power and type-I-error functions under the beta--binomial design priors are not strictly monotone in the total sample size on the integer grid, and the one-stage calibration algorithm enforces sustained feasibility over a grid of parameter values, whereas the two-stage engine only requires a single fixed-sample size to meet the marginal Bayesian targets in step~1. This implies that the one-stage calibration assures that e.g. the Bayesian power does not drop below the specified target constraint for at least the next 10 observations (analogue for type-I-error rate and probability of compelling evidence for $H_0$).\footnote{Note that such a sustained calibration logic makes no sense in the two-stage design, as interplay between the position of the interim analysis and the oscillations in the beta-binomial model lead to a situation where there is no monotone relationship for the relevant operating characteristics anymore. For example, power is not necessarily a (strictly) increasing function of the interim position for a fixed final sample size.} Together with small oscillations in the grid-based beta--binomial calculations, this leads to a situation where the two-stage algorithm accepts \(n_{2}^{(1)} = n_{2}^{(2)} = 76\) as a sufficient anchor, while the one-stage search reports \(n_1 = n_2 = 77\) as the smallest sustained-feasible design. Conditional on this anchor, the two-stage design then preserves the desired Bayesian power and type-I error and achieves a smaller expected sample size under \(H_0\).

Overall, the riociguat example shows that, with realistic directional Bayes factors and slightly informative design priors, the proposed two-stage calibration algorithm can introduce an interim futility analysis and achieve a genuine reduction in the expected sample size under the null hypothesis, without compromising the Bayesian power and type-I-error targets.

\section{Discussion}
\label{sec:discussion}

This paper developed a simulation-free methodology for Bayesian optimal two-stage designs in two-arm phase II clinical trials with binary endpoints using Bayes factors. The proposed approach combines the matrix-search framework available for fixed-sample two-arm Bayes factor designs with an exact correction for interim futility stopping, thereby extending simulation-free Bayesian calibration from one-arm and fixed-sample settings to the practically important case of two-arm two-stage designs.

\subsection{Benefits}
A central strength of the method is that all relevant operating characteristics can be computed by finite summation over prior-predictive probabilities rather than by Monte Carlo simulation. This has several advantages.

\begin{itemize}
    \item[$\blacktriangleright$]{First, calibration is reproducible and deterministic, because the operating characteristics do not depend on simulation size, random seeds, or Monte Carlo error. The calibration algorithm takes only seconds to find an optimal design on a regular personal computer.}
    \item[$\blacktriangleright$]{Second, the resulting design is transparent: for a given prior specification, Bayes factor threshold, and admissible sample size region, the final design can be traced back directly to the corresponding efficacy and futility regions in the discrete sample space.}
    \item[$\blacktriangleright$]{Third, the methodology is practically useful because the same framework supports design selection, sensitivity analyses, and interpretation of how individual modelling choices influence the resulting operating characteristics.}
    \item[$\blacktriangleright$]{Fourth, the methodology is simple to apply. Due to the nearly instantaneous computing time, visualizing the results and the prior assumptions with the \texttt{bfbin2arm} package as shown in the plots in \Cref{sec:examples} provides a convenient way to design a phase II trial in practice.}
    \item[$\blacktriangleright$]{Fifth, the methodology allows to calibrate the probability of compelling evidence for $H_0$. Thus, when $H_0$ holds true, a design can be planned and calibrated so that there is at least probability $f$ to find compelling evidence for $H_0$ and not end up with an indecisive result.}
    \item[$\blacktriangleright$]{Sixth, the methodology is like a modular system. Target constraints on the Bayesian power, type-I-error rate and probability of compelling evidence work independently of each other and can be added or removed from a design in the planning stage when running the calibration. This allows for flexible design planning and comparisons of different (increasingly restricting) requirements on the relevant operating characteristics.}
\end{itemize}  
At the same time, the proposed design is not purely ``objective'' in the planning stage, because the operating characteristics depend on the design priors. This is not a weakness of the Bayesian framework but an inherent feature of prior-predictive calibration. The design priors determine how plausibly separated the hypotheses are before data are observed, and therefore how quickly the Bayes factor is expected to accumulate evidence as the sample size increases. In contrast, the analysis priors enter the Bayes factor used at interim and final analysis. Separating design and analysis priors is therefore conceptually important. It allows substantive prior knowledge or planning assumptions to influence the design, while the eventual analysis can still be based on comparatively neutral priors. The riociguat example shows that this distinction is not merely philosophical: stronger prior-predictive separation at the design stage can markedly reduce the sample size required for calibration, even when the analysis priors and Bayes factor thresholds are kept fixed. A word of caution should be added for users who think flat design priors are desirable: First, in most cases flat design priors are unrealistic from a scientific perspective, because extremely large and extremely small success probabilities are equally likely a priori as moderately sizes success probabilities. Second, flat design priors lead to a larger sufficient sample size found in the first step of the calibration algorithm. This in turn increases the number of two-stage designs which need to be analyzed in step two of the calibration algorithm, substantially increasing runtime. Thus, both from a scientific and computational perspective, flat design priors are undesirable. We strongly recommend using slightly informative design priors which accurately reflect the expectation about the treatment effect and are neither too optimistic nor too pessimistic in that sense.

\subsection{Efficiency gains and runtime}
The examples also show that introducing an interim analysis does not automatically imply a large reduction in expected sample size under \(H_0\). In the original riociguat calibration, the probability of early stopping for futility under \(H_0\) is only about \(4\%\), and the expected sample size remains close to the maximal sample size. This behaviour is a direct consequence of the evidential constraints imposed on the design. If efficacy requires strong evidence against \(H_0\), futility requires non-trivial evidence in favour of \(H_0\), and the probability of compelling evidence for \(H_0\) must also exceed a prespecified lower bound, then only a restricted subset of null trajectories can be truncated early without violating the design targets. In that sense, the method makes an important trade-off explicit: tighter evidence requirements can improve interpretability, but they generally reduce the room for aggressive early stopping.

A related practical lesson concerns the effect of the Bayes factor thresholds and the calibration targets on feasibility and runtime. Stringent efficacy thresholds such as \(k=1/10\) typically require larger fixed-sample anchor designs in Step~1 of the algorithm, because stronger evidence is needed before the trial can declare success. Likewise, demanding large values of \(\mathrm{CE}_{H_0}\) can force the calibration procedure to continue to larger sample sizes, since small designs may simply not contain enough information for the Bayes factor to provide compelling support for the null hypothesis. These larger fixed-sample anchors then expand the set of admissible interim designs that must be evaluated in Step~2, thereby increasing runtime. Thus, the statistical and computational aspects of calibration are closely linked: stricter evidence demands may be scientifically desirable, but they also enlarge the search problem.

The choice of design priors has a similarly important computational consequence. Diffuse priors under \(H_0\) and \(H_1\) tend to make the prior-predictive distributions overlap more strongly, so that larger sample sizes are needed for the Bayes factor to distinguish the competing hypotheses reliably. In the current algorithm, this means that very flat design priors can substantially enlarge the fixed-sample anchor found in Step~1 and, through this, enlarge the interim design grid explored in Step~2. By contrast, moderately informative design priors that reflect clinically plausible response rates can both improve interpretability and reduce runtime. For practical applications, this suggests that sensitivity analyses over a range of scientifically credible design priors should be treated as part of routine design work rather than as an optional afterthought.

Another important contribution of the present work is interpretability. The corrected operating characteristics have a direct trial-level meaning. The corrected Bayesian power quantifies the probability, under the design prior for \(H_1\), that the two-stage procedure ultimately concludes in favour of treatment efficacy. The corrected Bayesian type-I error quantifies the corresponding probability under \(H_0\). The probability of compelling evidence for \(H_0\) and the expected sample size under \(H_0\) complement these quantities by describing how the design behaves when the treatment is ineffective. In a phase II setting, where the main goals are screening, learning, and avoiding unnecessary continuation of ineffective treatments, these quantities are often more informative than a single classical error-rate statement.

\subsection{Limitations}
The paper also has limitations. First, the methodology currently focuses on a single interim analysis with early stopping for futility only. This is already a useful and clinically relevant class of designs, but it does not cover multiple interim looks or early stopping for efficacy. Second, the current framework is developed for binary endpoints under beta--binomial modelling. Many phase II trials involve time-to-event, continuous, ordinal, or composite outcomes, and extending simulation-free Bayes factor calibration to such settings will require additional methodological work. Third, although the procedure is simulation-free, the computational burden can still become substantial when the admissible sample size region is large or when thresholds and priors lead to large fixed-sample anchors. The method therefore replaces Monte Carlo uncertainty by deterministic but potentially non-trivial numerical search.

\subsection{Future research}
Several directions for future research follow naturally from these limitations. One important extension would be to allow early stopping for efficacy in addition to futility, which would require corresponding corrections for efficacy-erased trajectories and a revised calibration criterion. A second extension would be to consider designs with more than one interim analysis, where the path structure becomes richer and the correction terms correspondingly more complex. A third direction would be to investigate additional optimization criteria beyond minimizing \(E_{H_0}[N]\), for example weighted average sample size criteria, minimax-type criteria, or utility-based criteria that reflect different clinical priorities. Finally, extensions to other endpoint types and more complex randomization schemes would broaden the applicability of the approach in real trial settings.

The most important relevant extension of the current work possibly is to introduce different calibration modes into the optimal design routine. This paper focussed on Bayesian operating characteristics, but regulatory agencies often require strict frequentist type-I-error control under $H_0$. Thus, a frequentist two-stage optimal design based on Bayes factors would be a possible extension. Here, the Bayes factor is used primarily as a test statistic whose frequentist -- not Bayesian -- operating characteristics such as frequentist type-I-error rate and power are relevant for the calibration. Likewise, hybrid or even full calibration modes where both frequentist and Bayesian operating characteristics must simultaneously be calibrated could extend the current work.

\subsection{Summary}
In summary, the proposed methodology provides a principled and computationally reproducible framework for designing Bayesian two-arm two-stage phase II trials with binary endpoints using Bayes factors. Its main practical message is twofold. On the one hand, simulation-free calibration is feasible even in this more complex two-stage two-arm setting. On the other hand, the efficiency of the resulting design depends crucially on scientifically meaningful prior specification and on the interplay between evidence thresholds, calibration targets, and admissible sample size regions. For phase II applications in which Bayes factors are viewed as the primary evidential measure, this framework offers a transparent basis for design calibration and sensitivity analysis.

\section*{Acknowledgements}
The author is grateful to Silke Jörgens, Kathrin Möllenhoff and Samuel Pawel for helpful comments, discussions and suggestions on the methodology developed in this manuscript.

\appendix
\appendixpage

\section{The Appendix}
\subsection{Proofs}
\begin{proof}[Proof of Lemma 1 (Conditional Independence)]
Under the design prior $\pi^{(i)}(p_1,p_2)$, the counts $X_1,X_2,Z_1,Z_2$ are conditionally independent given $(p_1,p_2)$, with
\[
  X_j \sim \text{Bin}(n_1^{(j)}, p_j), \quad
  Z_j \sim \text{Bin}(n_2^{(j)} - n_1^{(j)}, p_j), \quad j = 1,2.
\]
Therefore, the conditional probability mass function factorizes as
\begin{align*}
  \Pr(X_1 = x_1, X_2 = x_2, Z_1 = z_1, Z_2 = z_2 \mid p_1,p_2)
  &= \text{Bin}(x_1 \mid n_1^{(1)}, p_1)\cdot 
    \text{Bin}(x_2 \mid n_1^{(2)}, p_2)\\
    &\cdot
    \text{Bin}(z_1 \mid n_2^{(1)} - n_1^{(1)}, p_1)\cdot
    \text{Bin}(z_2 \mid n_2^{(2)} - n_1^{(2)}, p_2).
\end{align*}
The joint prior-predictive probability mass function is obtained by marginalizing over the design prior:
\begin{align*}
  \pi^{(i)}(x_1,x_2,z_1,z_2)
  &= \int_{p_1,p_2}
    \text{Bin}(x_1 \mid n_1^{(1)}, p_1)\cdot
    \text{Bin}(x_2 \mid n_1^{(2)}, p_2)\\
    &\cdot\text{Bin}(z_1 \mid n_2^{(1)} - n_1^{(1)}, p_1)
    \cdot\text{Bin}(z_2 \mid n_2^{(2)} - n_1^{(2)}, p_2)
    \cdot\pi^{(i)}(p_1,p_2)
  \text{d}p_1 \text{d}p_2.
\end{align*}
Because the integrand is a product of a function of $(x_1,x_2)$ and a function of $(z_1,z_2)$, and the prior $\pi^{(i)}(p_1,p_2)$ is shared, the integral factorizes as
\begin{align*}
  \pi^{(i)}(x_1,x_2,z_1,z_2)
  &= \underbrace{
      \int_{p_1,p_2}
        \text{Bin}(x_1 \mid n_1^{(1)}, p_1)
        \text{Bin}(x_2 \mid n_1^{(2)}, p_2)
        \pi^{(i)}(p_1,p_2)
      \text{d}p_1 \text{d}p_2
    }_{:= f_1(x_1,x_2 \mid H_i)}\\
    &\cdot
    \underbrace{
      \int_{p_1,p_2}
        \text{Bin}(z_1 \mid n_2^{(1)} - n_1^{(1)}, p_1)
        \text{Bin}(z_2 \mid n_2^{(2)} - n_1^{(2)}, p_2)
        \pi^{(i)}(p_1,p_2)
      \text{d}p_1 \text{d}p_2
    }_{:= f_2(z_1,z_2 \mid H_i)}.
\end{align*}
This proves the factorization
\[
  \pi^{(i)}(x_1,x_2,z_1,z_2)
  = f_1(x_1,x_2 \mid H_i) \cdot f_2(z_1,z_2 \mid H_i).
\]
\end{proof}

\begin{proof}[Proof of Theorem 1 (Double-sum expression of the futility-erased partial contribution)]
We now show the double‑sum expression for the futility‑erased partial contribution. Let
\[
  \Delta^{(i)}
  := \sum_{(x_1,x_2)\in \mathcal{F}_1^{(i)}}
     \sum_{\substack{(z_1,z_2):\\ (x_1+z_1, x_2+z_2)\in \mathcal{E}_2^{(i)}}}
     \pi^{(i)}(x_1,x_2,z_1,z_2),
\]
where $\mathcal{F}_1^{(i)}$ is the interim futility region and $\mathcal{E}_2^{(i)}$ is the final‑stage efficacy region. Using the above factorization of \Cref{lemma:1},
\[
  \pi^{(i)}(x_1,x_2,z_1,z_2)
  = f_1(x_1,x_2 \mid H_i) \cdot f_2(z_1,z_2 \mid H_i),
\]
so
\[
  \Delta^{(i)}
  = \sum_{(x_1,x_2)\in \mathcal{F}_1^{(i)}}
      \sum_{\substack{(z_1,z_2):\\ (x_1+z_1, x_2+z_2)\in \mathcal{E}_2^{(i)}}}
      f_1(x_1,x_2 \mid H_i)
      f_2(z_1,z_2 \mid H_i).
\]
Since $f_1(x_1,x_2 \mid H_i)$ does not depend on $(z_1,z_2)$, it can be factored outside the inner sum:
\[
  \Delta^{(i)}
  = \sum_{(x_1,x_2)\in \mathcal{F}_1^{(i)}}
      f_1(x_1,x_2 \mid H_i)
      \Biggl(
        \sum_{\substack{(z_1,z_2):\\ (x_1+z_1, x_2+z_2)\in \mathcal{E}_2^{(i)}}}
          f_2(z_1,z_2 \mid H_i)
      \Biggr).
\]
which is the double‑sum expression for the futility‑erased partial contribution to the Bayesian power (or type‑I‑error rate) given in \Cref{theorem:1}. The inner sum corresponds to the conditional probability that, if the trial were continued from interim counts $(x_1,x_2)\in \mathcal{F}_1^{(i)}$, the final data would fall into the efficacy region $\mathcal{E}_2^{(i)}$, under the design prior and given the interim information.
\end{proof}

\subsection{\Cref{lemma:1} and \Cref{theorem:1} for directional hypothesis tests}
\begin{lemma}[Factorization and futility-erased contribution under directional tests]\label{lemma:2}
Under the directional tests
\begin{align}
    &H_0: \eta \leq 0 \quad \text{versus} \quad H_1: \eta > 0,\label{eq:directionalParameterized}\\
    &H_0: \eta = 0 \quad \text{versus} \quad H_1: \eta > 0,\label{eq:directionalParameterizedOneSided1}\\
    &H_0: \eta = 0 \quad \text{versus} \quad H_1: \eta < 0,\label{eq:directionalParameterizedOneSided2}
\end{align}
with truncated beta design priors under $H_1$ such that $\eta$ is constrained as above, the conditional independence of the first- and second-stage counts $X_1,X_2,Z_1,Z_2$ given $(p_1,p_2)$ remains unchanged. As a consequence, the joint prior-predictive pmf factorizes as
\[
  \pi^{(i,\text{dir})}(x_1,x_2,z_1,z_2)
  = f_1^{\text{dir}}(x_1,x_2 \mid H_i)
    \cdot
    f_2^{\text{dir}}(z_1,z_2 \mid H_i),
\]
for $i = 0,1$, where $f_1^{\text{dir}}$ and $f_2^{\text{dir}}$ are the prior-predictive probability mass functions computed under the truncated directional design priors, compare \cite{kelterTwoArmTwoStage2026}.
\end{lemma}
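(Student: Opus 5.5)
The plan is to mirror the proof of \Cref{lemma:1} given above, changing only the prior against which we marginalize. The sampling model does not depend on which hypothesis is tested. Given $(p_1,p_2)$, the four counts are independent with
\[
X_j \sim \text{Bin}(n_1^{(j)},p_j), \qquad Z_j \sim \text{Bin}(n_2^{(j)}-n_1^{(j)},p_j), \qquad j=1,2.
\]
So the first step is to record that the conditional pmf is the same product of four binomial terms as in the proof of \Cref{lemma:1}. This is the precise content of the claim that ``the conditional independence of the first- and second-stage counts given $(p_1,p_2)$ remains unchanged.''

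The second step is to specify the directional design priors $\pi^{(i,\text{dir})}(p_1,p_2)$ for each of the three tests. For $H_1:\eta>0$ this is the product beta density truncated to $\{p_2>p_1\}$ and renormalized, and $\eta<0$ is handled analogously. For $H_0:\eta\le 0$ it is the product beta truncated to $\{p_2\le p_1\}$. For the point null $\eta=0$ it is the one-dimensional $\Beta(\alpha_{0d},\beta_{0d})$ prior on the common $p$, supported on the diagonal. In each case we write the joint prior-predictive pmf as
\[
\pi^{(i,\text{dir})}(x_1,x_2,z_1,z_2)=\int \text{Bin}(x_1\mid n_1^{(1)},p_1)\,\text{Bin}(x_2\mid n_1^{(2)},p_2)\,\text{Bin}(z_1\mid n_2^{(1)}-n_1^{(1)},p_1)\,\text{Bin}(z_2\mid n_2^{(2)}-n_1^{(2)},p_2)\,\pi^{(i,\text{dir})}(p_1,p_2)\,\text{d}p_1\,\text{d}p_2 .
\]
The truncation enters only through an indicator and a normalizing constant, both of which are independent of the data.

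The third step is to group the integrand into an interim factor in $(x_1,x_2)$ and a second-stage factor in $(z_1,z_2)$. We then split the integral exactly as in the display defining $f_1$ and $f_2$ in the proof of \Cref{lemma:1}. This identifies the two factors with $f_1^{\text{dir}}(x_1,x_2\mid H_i)$ and $f_2^{\text{dir}}(z_1,z_2\mid H_i)$, the stage-wise prior-predictive pmfs under the truncated priors from \cite{kelterTwoArmTwoStage2026}. With the factorization in hand, the double-sum expression of \Cref{theorem:1} transfers verbatim by substitution, since the proof of \Cref{theorem:1} only used the product form.

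The main obstacle is the splitting step. The integrand is a product of an $(x)$-part and a $(z)$-part, but both parts depend on the same $(p_1,p_2)$ and share one mixing measure. The integral of a product equals the product of the integrals only when the mixing measure is degenerate, or the two parts are otherwise decoupled. So, read literally, the statement is exact for a point-mass design prior on $(p_1,p_2)$. For a nondegenerate truncated prior it does not follow in general, and the truncation makes this no easier.

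I would therefore first make precise the sense in which the paper's $f_2^{\text{dir}}$ is defined, namely as whatever stage-wise pmf is intended in \cite{kelterTwoArmTwoStage2026}. I would then check the splitting step in that sense, rather than assert it. To locate where the argument breaks, I would check the simplest case: a single point in $\mathcal{F}_1$ with $n_1^{(j)}=n_2^{(j)}-n_1^{(j)}=1$ under a uniform truncated prior. There the two sides of the claimed identity can be compared directly. If they differ, the conclusion I can actually justify is the chain-rule version
\[
\pi^{(i,\text{dir})}(x_1,x_2,z_1,z_2)=f_1^{\text{dir}}(x_1,x_2\mid H_i)\,f_2^{\text{dir}}(z_1,z_2\mid x_1,x_2,H_i),
\]
with the second factor the posterior-predictive pmf, and I would report that rather than the product-of-marginals form.
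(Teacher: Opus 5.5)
Your first three steps are exactly the paper's proof. The paper closes the argument at precisely the step you flag: it asserts that because the integrand is a product of an $(x_1,x_2)$-part and a $(z_1,z_2)$-part ``sharing the same truncated prior'', the integral splits into $f_1^{\text{dir}}\cdot f_2^{\text{dir}}$.

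Your objection is correct, and it can be settled outright: the statement is false for every nondegenerate design prior, truncated or not. If the pmf factorized at every $(x_1,x_2,z_1,z_2)$, then $X_j$ and $Z_j$ would be uncorrelated under $H_i$. But conditioning on $(p_1,p_2)$ gives
\[
\mathrm{Cov}(X_j,Z_j\mid H_i)=n_1^{(j)}\bigl(n_2^{(j)}-n_1^{(j)}\bigr)\,\mathrm{Var}(p_j\mid H_i)>0 ,
\]
so the factorization forces a point-mass design prior. There is no ``otherwise decoupled'' way out.

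Your proposed check confirms this. Take $n_1^{(j)}=1$, $n_2^{(j)}=2$:
\begin{itemize}
\item Under the point null with $p\sim\Beta(1,1)$, $\pi^{(0,\text{dir})}(1,1,1,1)=\int_0^1 p^4\,\mathrm{d}p=1/5$, whereas $f_1^{\text{dir}}(1,1\mid H_0)\,f_2^{\text{dir}}(1,1\mid H_0)=(1/3)^2=1/9$.
\item Under $H_1:\eta>0$ with density $2\cdot\mathbf{1}\{p_2>p_1\}$, $\pi^{(1,\text{dir})}(0,1,0,1)=19/90$ versus $(5/12)^2=25/144$.
\end{itemize}
There is also no alternative reading of $f_2^{\text{dir}}$ to look for. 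The paper defines it as the marginal $\Pr(Z_1=z_1,Z_2=z_2\mid H_i)$, via exactly the integral you wrote. So Lemma 2 (and Lemma 1, whose proof is identical) has a genuine error at this step. Your proposal does not prove the statement, but no proof exists.

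So report the corrected statement, as you planned. Your chain-rule version is right. The posterior given $(x_1,x_2)$ is again of the same (truncated) beta form with updated parameters, so $f_2^{\text{dir}}(z_1,z_2\mid x_1,x_2,H_i)$ is computable in the same way as the prior predictive.

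An even simpler exact form holds for any prior on $(p_1,p_2)$, because $X_j\mid Y_j$ is hypergeometric and free of $p_j$:
\[
\pi^{(i,\text{dir})}(x_1,x_2,z_1,z_2)=f^{\text{dir}}(y_1,y_2\mid H_i)\prod_{j=1}^{2}\frac{\binom{n_1^{(j)}}{x_j}\binom{n_2^{(j)}-n_1^{(j)}}{z_j}}{\binom{n_2^{(j)}}{y_j}},\qquad y_j=x_j+z_j ,
\]
where $f^{\text{dir}}(y_1,y_2\mid H_i)$ is the final-sample prior-predictive pmf. This gives
\[
\Delta^{(i)}=\sum_{(y_1,y_2)\in\mathcal{E}_2}f^{\text{dir}}(y_1,y_2\mid H_i)\,\Pr\bigl((X_1,X_2)\in\mathcal{F}_1\mid Y_1=y_1,Y_2=y_2\bigr),
\]
which is automatically at most $P_{\text{naive}}^{(i)}$.

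By contrast, the product-of-marginals double sum in Theorem 2 is not even consistent with $P_{\text{naive}}^{(i)}$. The convolution $\sum_{x}f_1^{\text{dir}}(x)\,f_2^{\text{dir}}(y-x)$ does not equal $f^{\text{dir}}(y)$: in the example above it is $1/9$ versus $1/5$ at $y=(2,2)$.
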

\begin{proof}
Under the sampling model $X_j \sim \text{Bin}(n_1^{(j)},p_j)$ and $Z_j \sim \text{Bin}(n_2^{(j)} - n_1^{(j)},p_j)$, $j=1,2$, the counts $X_1,X_2,Z_1,Z_2$ are conditionally independent given $(p_1,p_2)$, and their joint conditional probability mass function factors as
\begin{align*}
  \Pr(X_1 = x_1, X_2 = x_2,& Z_1 = z_1, Z_2 = z_2 \mid p_1,p_2)
  = \text{Bin}(x_1 \mid n_1^{(1)},p_1)\cdot
    \text{Bin}(x_2 \mid n_1^{(2)},p_2)\\
    &
    \cdot \text{Bin}(z_1 \mid n_2^{(1)} - n_1^{(1)},p_1)
    \cdot \text{Bin}(z_2 \mid n_2^{(2)} - n_1^{(2)},p_2).
\end{align*}
Marginalizing over the truncated directional design prior $\pi^{(i,\text{dir})}(p_1,p_2)$ yields the joint prior-predictive probability mass function
\begin{align*}
  \pi^{(i,\text{dir})}(x_1,x_2,z_1,z_2)
  = \int_{p_1,p_2}
      &\text{Bin}(x_1 \mid n_1^{(1)},p_1)\cdot
      \text{Bin}(x_2 \mid n_1^{(2)},p_2)\cdot
      \text{Bin}(z_1 \mid n_2^{(1)} - n_1^{(1)},p_1)\\\cdot
      &\text{Bin}(z_2 \mid n_2^{(2)} - n_1^{(2)},p_2)\cdot
      \pi^{(i,\text{dir})}(p_1,p_2)
    \text{d}p_1 \text{d}p_2.
\end{align*}
Because the integrand is a product of a function of $(x_1,x_2)$ and a function of $(z_1,z_2)$, and both share the same truncated prior over $(p_1,p_2)$, the integral factorizes as
\begin{align*}
  \pi^{(i,\text{dir})}(x_1,x_2,z_1,z_2)
  &= \underbrace{
       \int_{p_1,p_2}
         \text{Bin}(x_1 \mid n_1^{(1)},p_1)
         \text{Bin}(x_2 \mid n_1^{(2)},p_2)
         \pi^{(i,\text{dir})}(p_1,p_2)
       \text{d}p_1 \text{d}p_2
     }_{=: f_1^{\text{dir}}(x_1,x_2 \mid H_i)}
     \\
     &\quad \cdot
     \underbrace{
       \int_{p_1,p_2}
         \text{Bin}(z_1 \mid n_2^{(1)} - n_1^{(1)},p_1)
         \text{Bin}(z_2 \mid n_2^{(2)} - n_1^{(2)},p_2)
         \pi^{(i,\text{dir})}(p_1,p_2)
       \text{d}p_1 \text{d}p_2
     }_{=: f_2^{\text{dir}}(z_1,z_2 \mid H_i)}.
\end{align*}
which proves the factorization claimed in \Cref{lemma:2}.
\end{proof}

\begin{theorem}\label{theorem:2}
Under the conditions of \Cref{lemma:2}, the futility-erased partial contribution to the Bayesian power (or type-I-error rate) then admits the double-sum representation
\[
  \Delta^{(i)}
  = \sum_{(x_1,x_2)\in\mathcal{F}_1^{(i)}}f_1^{\text{dir}}(x_1,x_2 \mid H_i)
     \sum_{\substack{(z_1,z_2):\\ (x_1+z_1,x_2+z_2)\in\mathcal{E}_2^{(i)}}}
     f_2^{\text{dir}}(z_1,z_2 \mid H_i),
\]
where $\mathcal{F}_1^{(i)}$ is the interim futility region and $\mathcal{E}_2^{(i)}$ is the final-stage efficacy region defined in terms of the corresponding directional Bayes factors.
\end{theorem}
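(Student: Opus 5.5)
Following the proof of \Cref{theorem:1}, I would treat $\Delta^{(i)}$ as a probability of an event in the joint space of $(X_1,X_2,Z_1,Z_2)$ and then apply the factorization of \Cref{lemma:2} in place of \Cref{lemma:1}. Concretely, I would start from
\[
\Delta^{(i)} = \Pr\bigl(BF_{01}(Y_1,Y_2)<k,\ BF_{01}(X_1,X_2)\geq k_f \mid H_i\bigr),
\]
with the directional Bayes factors defining $\mathcal{F}_1^{(i)}$ and $\mathcal{E}_2^{(i)}$.

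The first step is to rewrite the event in terms of the stage-wise counts. Since $Y_j=X_j+Z_j$, the event equals the disjoint union over lattice points
\[
\{(x_1,x_2,z_1,z_2): (x_1,x_2)\in\mathcal{F}_1^{(i)},\ (x_1+z_1,x_2+z_2)\in\mathcal{E}_2^{(i)}\},
\]
where $x_j\in\{0,\dots,n_1^{(j)}\}$ and $z_j\in\{0,\dots,n_2^{(j)}-n_1^{(j)}\}$. The support is finite, so countable additivity reduces $\Delta^{(i)}$ to the finite sum of $\pi^{(i,\text{dir})}(x_1,x_2,z_1,z_2)$ over this set.

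The second step substitutes $\pi^{(i,\text{dir})}=f_1^{\text{dir}}\,f_2^{\text{dir}}$ from \Cref{lemma:2}. The third step orders the finite double sum with the outer index $(x_1,x_2)\in\mathcal{F}_1^{(i)}$ and pulls $f_1^{\text{dir}}(x_1,x_2\mid H_i)$ out of the inner sum, since it does not depend on $(z_1,z_2)$. This gives exactly the stated representation.

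The main point requiring care is the first step. I would check that the regions are deterministic functions of the data alone. The directional Bayes factors depend only on the counts and the fixed analysis priors, not on $(p_1,p_2)$ or the design priors, so conditioning on $H_i$ only changes the measure and not the event. I would also check that the truncated design prior is a proper prior with a finite normalizing constant, so that the marginal pmf is well defined. Beyond this, the directional case involves no new difficulty relative to the two-sided case: the whole argument uses only the factorization and the finiteness of the lattice. As an optional remark, dividing the inner sum by nothing would show that it equals $\Pr\bigl((X_1+Z_1,X_2+Z_2)\in\mathcal{E}_2^{(i)}\mid X=x\bigr)$ only if the factorization holds; I would not rely on that interpretation in the proof.
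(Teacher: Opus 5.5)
Your three steps are exactly the paper's proof: write $\Delta^{(i)}$ as the finite sum of $\pi^{(i,\text{dir})}$ over the lattice event, substitute the factorization from \Cref{lemma:2}, and pull $f_1^{\text{dir}}$ out of the inner sum. The deduction from \Cref{lemma:2} to the display is valid. The problem is that you flag the wrong step as delicate. Your first step is routine. Your second step imports a factorization that is false for every non-degenerate design prior, truncated or not.

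Conditional independence of $(X_1,X_2)$ and $(Z_1,Z_2)$ given $(p_1,p_2)$ does not survive integration over the prior. Both factors of the integrand depend on $(p_1,p_2)$, so the integral of the product is not the product of the integrals; the paper's argument for \Cref{lemma:2} makes exactly this slip. Indeed $\mathrm{Cov}(X_1,Z_1\mid H_i)=n_1^{(1)}\bigl(n_2^{(1)}-n_1^{(1)}\bigr)\mathrm{Var}(p_1\mid H_i)>0$. In a one-arm toy case with $p\sim\Beta(1,1)$ and one patient per stage, $\Pr(X=1,Z=1)=\int_0^1p^2\,dp=1/3$, whereas $\Pr(X=1)\Pr(Z=1)=1/4$. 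So the displayed double sum does not in general equal $\Pr\bigl(BF_{01}(Y_1,Y_2)<k,\ BF_{01}(X_1,X_2)\ge k_f\mid H_i\bigr)$. Your argument, like the paper's, proves the statement only conditionally on a false lemma.

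The repair keeps your structure. By the chain rule, $\pi^{(i,\text{dir})}(x_1,x_2,z_1,z_2)=f_1^{\text{dir}}(x_1,x_2\mid H_i)\,g^{(i)}(z_1,z_2\mid x_1,x_2)$. Here $g^{(i)}$ is the posterior-predictive pmf of the increments, i.e.\ the two stage-2 binomials integrated against the design prior updated by the interim counts. With $g^{(i)}$ in place of $f_2^{\text{dir}}$, your first and third steps go through verbatim. The inner sum then genuinely is the conditional probability of final efficacy given the interim data, so the suspicion in your closing remark was well founded: the interim counts carry information about $(p_1,p_2)$.

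There is also a form that needs only the final-sample prior predictive $f^{\text{dir}}(y_1,y_2\mid H_i)$. It holds for any joint prior on $(p_1,p_2)$, including the truncated ones and the common-$p$ null, because given $Y_j=y_j$ the interim count $X_j$ is hypergeometric and free of $p_j$:
\[
\pi^{(i,\text{dir})}(x_1,x_2,z_1,z_2)=f^{\text{dir}}(x_1+z_1,x_2+z_2\mid H_i)\prod_{j=1}^{2}\frac{\binom{n_1^{(j)}}{x_j}\binom{n_2^{(j)}-n_1^{(j)}}{z_j}}{\binom{n_2^{(j)}}{x_j+z_j}}.
\]
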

\begin{proof}
By definition, the futility-erased partial contribution is the total probability that
\begin{itemize}
  \item the interim data $(x_1,x_2)$ fall into the futility region $\mathcal{F}_1^{(i)}$, and
  \item had the trial continued, the final data $(y_1,y_2) = (x_1+z_1,x_2+z_2)$ would have fallen into the final-stage efficacy region $\mathcal{E}_2^{(i)}$.
\end{itemize}
Formally, this is
\[
  \Delta^{(i)}
  = \sum_{(x_1,x_2)\in\mathcal{F}_1^{(i)}}
     \sum_{\substack{(z_1,z_2):\\ (x_1+z_1,x_2+z_2)\in\mathcal{E}_2^{(i)}}}
     \pi^{(i,\text{dir})}(x_1,x_2,z_1,z_2).
\]
Using the factorization from \Cref{lemma:2}, 
\[
  \pi^{(i,\text{dir})}(x_1,x_2,z_1,z_2)
  = f_1^{\text{dir}}(x_1,x_2 \mid H_i)
    \cdot
    f_2^{\text{dir}}(z_1,z_2 \mid H_i),
\]
so
\[
  \Delta^{(i)}
  = \sum_{(x_1,x_2)\in\mathcal{F}_1^{(i)}}f_1^{\text{dir}}(x_1,x_2 \mid H_i)
      \sum_{\substack{(z_1,z_2):\\ (x_1+z_1,x_2+z_2)\in\mathcal{E}_2^{(i)}}}
      f_2^{\text{dir}}(z_1,z_2 \mid H_i).
\]
which is the double-sum representation claimed in \Cref{theorem:2}.
\end{proof}

\section{Correction of the probability of compelling evidence under futility-only interim monitoring in two-stage designs}\label{sec:corrections}

Consider a two-arm trial with binary endpoints. Let \(n_1^1\) and \(n_1^2\) denote the interim sample sizes in the control and treatment arms, respectively, and let \(n_2^1\) and \(n_2^2\) denote the corresponding final sample sizes, with \(n_1^j \le n_2^j\) for \(j \in \{1,2\}\). Let \(X_1\) and \(X_2\) denote the interim numbers of responses in the two arms, and let \(Z_1\) and \(Z_2\) denote the additional responses accrued between interim and final analysis. Hence the final response counts are
\[
Y_1 = X_1 + Z_1, \qquad Y_2 = X_2 + Z_2.
\]

Assume that a futility-only interim analysis is performed. If the interim Bayes factor provides compelling evidence for \(H_0\), recruitment is stopped early; otherwise the trial continues to the planned final sample size. This stopping rule changes the operating characteristics of the design relative to the corresponding fixed-sample design.

In particular, the fixed-sample power and type-I error are no longer valid for the two-stage design, because some trajectories that would have yielded final efficacy under the fixed-sample design are removed by early stopping for futility. Therefore, the fixed-sample power and type-I error must be corrected by subtracting the probability mass of those erased trajectories.

By contrast, the probability of compelling evidence for \(H_0\) increases under futility-only interim monitoring. The reason is that, under the two-stage design, any interim outcome that already yields compelling evidence for \(H_0\) is counted immediately as success for \(H_0\). Under the corresponding fixed-sample design, those same interim outcomes would not stop the trial, and some of them would fail to yield compelling evidence for \(H_0\) at the final analysis. Consequently, the two-stage probability of compelling evidence for \(H_0\) equals the fixed-sample probability plus the probability mass of interim-futility trajectories that would not have ended in compelling evidence for \(H_0\) at the final analysis.

\subsection{Explicit two-arm correction formula for \texorpdfstring{\( \mathrm{CE}_{H_0} \)}{CE(H0)}}\label{sec:correctionFutilityStopping}

Let \(F_1\) denote the interim futility region, i.e.
\[
F_1 = \left\{ (x_1,x_2) : \mathrm{BF}_{\mathrm{int}}(x_1,x_2) \ge k_f \right\},
\]
where \(k_f\) is the futility threshold and \(\mathrm{BF}_{\mathrm{int}}\) denotes the Bayes factor evaluated at the interim sample size. Let \(C_2\) denote the final compelling-evidence region for \(H_0\), i.e.
\[
C_2 = \left\{ (y_1,y_2) : \mathrm{BF}_{\mathrm{fin}}(y_1,y_2) \ge k_f \right\},
\]
where \(\mathrm{BF}_{\mathrm{fin}}\) denotes the Bayes factor evaluated at the final sample size. Let
\[
f_{1,0}(x_1,x_2)
\]
be the joint prior-predictive probability mass function of the interim responses under \(H_0\), and let
\[
f_{2,0}(z_1,z_2)
\]
be the joint prior-predictive probability mass function of the stage-2 increments under \(H_0\). Under conditional independence of stage-1 and stage-2 increments given the design prior under \(H_0\), the fixed-sample probability of compelling evidence for \(H_0\) is
\[
\mathrm{CE}^{\mathrm{fix}}_{H_0}
=
\sum_{y_1=0}^{n_2^1}\sum_{y_2=0}^{n_2^2}
\mathbf{1}\!\left\{(y_1,y_2)\in C_2\right\}
\, f^{\mathrm{fix}}_{0}(y_1,y_2),
\]
where \(f^{\mathrm{fix}}_{0}\) is the joint prior-predictive distribution at the final sample size. For the corresponding two-stage design with futility-only stopping, the corrected probability of compelling evidence for \(H_0\) is
\begin{align}\label{eq:pce2st}
\mathrm{CE}^{\mathrm{2st}}_{H_0}
=
\mathrm{CE}^{\mathrm{fix}}_{H_0}
+
\Delta_{\mathrm{CE},0},
\end{align}
where
\begin{align}
\Delta_{\mathrm{CE},0}
&=
\sum_{x_1=0}^{n_1^1}\sum_{x_2=0}^{n_1^2}
\mathbf{1}\!\left\{(x_1,x_2)\in F_1\right\}
f_{1,0}(x_1,x_2)\nonumber\\
&\cdot\left[
1
-
\sum_{z_1=0}^{n_2^1-n_1^1}\sum_{z_2=0}^{n_2^2-n_1^2}
\mathbf{1}\!\left\{(x_1+z_1,x_2+z_2)\in C_2\right\}
f_{2,0}(z_1,z_2)
\right].\nonumber
\end{align}
This term collects exactly those interim-futility trajectories that are counted as compelling evidence for \(H_0\) in the two-stage design, but that would \emph{not} have yielded compelling evidence for \(H_0\) at the final analysis under the corresponding fixed-sample design.

Equivalently, one may write
\[
\mathrm{CE}^{\mathrm{2st}}_{H_0}
=
\sum_{x_1=0}^{n_1^1}\sum_{x_2=0}^{n_1^2}
\mathbf{1}\!\left\{(x_1,x_2)\in F_1\right\} f_{1,0}(x_1,x_2)
+
\sum_{x_1=0}^{n_1^1}\sum_{x_2=0}^{n_1^2}
\mathbf{1}\!\left\{(x_1,x_2)\notin F_1\right\} f_{1,0}(x_1,x_2)
\]
\[
\qquad\qquad \times
\sum_{z_1=0}^{n_2^1-n_1^1}\sum_{z_2=0}^{n_2^2-n_1^2}
\mathbf{1}\!\left\{(x_1+z_1,x_2+z_2)\in C_2\right\}
f_{2,0}(z_1,z_2).
\]
The first term is the probability of stopping early for futility, while the second term is the probability of reaching compelling evidence for \(H_0\) at the final analysis after continuation.

Hence,
\[
\mathrm{CE}^{\mathrm{2st}}_{H_0} \ge \mathrm{CE}^{\mathrm{fix}}_{H_0},
\]
with strict inequality whenever there exists positive prior-predictive probability for an interim outcome in \(F_1\) that would not end in \(C_2\) under continuation to the final sample size.

\subsection{Implication for the optimal design calibration algorithm}

Therefore, in a two-stage futility-only Bayes-factor design, the probability of compelling evidence for \(H_0\) must be calibrated using the corrected two-stage quantity \( \mathrm{CE}^{\mathrm{2st}}_{H_0} \), rather than the fixed-sample quantity \( \mathrm{CE}^{\mathrm{fix}}_{H_0} \). In particular, if a design constraint
\[
\mathrm{CE}^{\mathrm{2st}}_{H_0} \ge p_{\mathrm{CE},H_0}
\]
is imposed, then this constraint must be checked during the second-stage calibration over admissible interim designs, together with the corrected power and corrected type-I error constraints.

\bibliography{library}

@article{Bartos2022,
  title = {Informed {{Bayesian}} Survival Analysis},
  author = {Barto{\v s}, Franti{\v s}ek and Aust, Frederik and Haaf, Julia M.},
  year = 2022,
  month = sep,
  journal = {BMC Medical Research Methodology 2022 22:1},
  volume = {22},
  number = {1},
  pages = {1--22},
  publisher = {BioMed Central},
  issn = {1471-2288},
  doi = {10.1186/S12874-022-01676-9},
  urldate = {2023-02-15},
  pmid = {36088281}
}

@article{Berry2006,
  title = {Bayesian Clinical Trials},
  author = {Berry, Donald A.},
  year = 2006,
  month = feb,
  journal = {Nature Reviews Drug Discovery 2006 5:1},
  volume = {5},
  number = {1},
  pages = {27--36},
  publisher = {Nature Publishing Group},
  issn = {1474-1784},
  doi = {10.1038/nrd1927},
  urldate = {2022-02-02},
  pmid = {16485344}
}

@book{Berry2011,
  title = {Bayesian {{Adaptive Methods}} for {{Clinical Trials}}},
  author = {Berry, Scott M.},
  year = 2011,
  publisher = {CRC Press},
  address = {Boca Raton, FL},
  urldate = {2022-02-01},
  isbn = {978-0-429-15242-9}
}

@article{Boulesteix2020,
  title = {Introduction to Statistical Simulations in Health Research},
  author = {Boulesteix, Anne Laure and Groenwold, Rolf H.H. and Abrahamowicz, Michal and Binder, Harald and Briel, Matthias and Hornung, Roman and Morris, Tim P. and Rahnenf{\"u}hrer, J{\"o}rg and Sauerbrei, Willi},
  year = 2020,
  month = dec,
  journal = {BMJ Open},
  volume = {10},
  number = {12},
  pages = {e039921},
  publisher = {British Medical Journal Publishing Group},
  issn = {2044-6055},
  doi = {10.1136/BMJOPEN-2020-039921},
  urldate = {2022-03-08},
  pmid = {33318113}
}

@article{Chevret2012,
  title = {Bayesian Adaptive Clinical Trials: {{A}} Dream for Statisticians Only?},
  author = {Chevret, Sylvie},
  year = 2012,
  journal = {Statistics in Medicine},
  volume = {31},
  number = {11-12},
  pages = {1002--1013},
  publisher = {John Wiley \& Sons, Ltd},
  issn = {02776715},
  doi = {10.1002/sim.4363},
  urldate = {2022-02-01},
  isbn = {3110021013},
  pmid = {21905067}
}

@book{Chow2008,
  title = {Design and {{Analysis}} of {{Bioavailability}} and {{Bioequivalence Studies}}},
  author = {Chow, Shein-Chung and Liu, Jen-Pei},
  year = 2008,
  publisher = {Chapman \& Hall/CRC Press},
  address = {Boca Raton},
  urldate = {2021-04-19}
}

@article{Dawid1982,
  title = {The Well-Calibrated Bayesian},
  author = {Dawid, A. P.},
  year = 1982,
  journal = {Journal of the American Statistical Association},
  volume = {77},
  number = {379},
  pages = {605--610},
  issn = {1537274X},
  doi = {10.1080/01621459.1982.10477856},
  urldate = {2024-12-04}
}

@article{Dickey1970a,
  title = {The {{Weighted Likelihood Ratio}}, {{Sharp Hypotheses}} about {{Chances}}, the {{Order}} of a {{Markov Chain}}},
  author = {Dickey, James M. and Lientz, B. P.},
  year = 1970,
  journal = {Annals of Mathematical Statistics},
  volume = {41},
  number = {1},
  pages = {214--226},
  publisher = {Institute of Mathematical Statistics},
  issn = {0003-4851},
  doi = {10.1214/AOMS/1177697203},
  urldate = {2022-02-24}
}

@misc{europeanmedicinesagencyICHE20Adaptive2025,
  title = {{{ICH E20}} Adaptive Designs for Clinical Trials - {{Scientific}} Guideline \textbar{} {{European Medicines Agency}} ({{EMA}})},
  author = {{European Medicines Agency}},
  year = 2025,
  month = jun,
  publisher = {European Medicines Agency},
  urldate = {2025-12-23},
  langid = {english}
}

@article{Fayers2005,
  title = {Monitoring: {{Bayesian Data Monitoring}} in {{Clinical Trials}}},
  author = {Fayers, Peter M. and Ashby, Deborah and Parmar, Mahesh K.B.},
  year = 2005,
  month = aug,
  journal = {Tutorials in Biostatistics, Statistical Methods in Clinical Studies},
  volume = {1},
  pages = {335--352},
  publisher = {wiley},
  doi = {10.1002/0470023678.CH3B},
  urldate = {2022-02-02},
  isbn = {9780470023679}
}

@misc{FDA_ComplexInnovativeDesignsDecember2020,
  title = {Interacting with the {{FDA}} on {{Complex Innovative Trial Designs}} for {{Drugs}} and {{Biological Products}} - {{Guidance}} for {{Industry}}},
  author = {{U.S. Department of Health and Human Services} and {Food and Drug Administration} and {Center for Biologics Evaluation and Research} and {Center for Drug Evaluation and Research}},
  year = 2020,
  month = dec
}

@techreport{FDA_UseOfBayesianMethodologyJanuary2026,
  title = {Use of {{Bayesian Methodology}} in {{Clinical Trials}} of {{Drug}} and {{Biological Products}} - {{Guidance}} for {{Industry}}},
  author = {{U.S. Department of Health and Human Services Food and Drug Administration, Center for Drug Evaluation and Research (CDER), Center for Biologics Evaluation and Research (CBER)}},
  year = 2026,
  month = jan
}

@article{Ferguson2021,
  title = {Bayesian Interpretation of p Values in Clinical Trials},
  author = {Ferguson, John},
  year = 2021,
  month = sep,
  journal = {BMJ Evidence-Based Medicine},
  volume = {0},
  pages = {bmjebm-2020-111603},
  publisher = {Royal Society of Medicine},
  issn = {2515-446X},
  doi = {10.1136/BMJEBM-2020-111603},
  urldate = {2022-02-02},
  pmid = {34556541}
}

@article{Ferreira2021,
  title = {Bayesian Predictive Probabilities: A Good Way to Monitor Clinical Trials},
  author = {Ferreira, David and Ludes, Pierre Olivier and Diemunsch, Pierre and Noll, Eric and Torp, Klaus D. and Meyer, Nicolas},
  year = 2021,
  month = feb,
  journal = {British Journal of Anaesthesia},
  volume = {126},
  number = {2},
  pages = {550--555},
  publisher = {Elsevier},
  issn = {0007-0912},
  doi = {10.1016/J.BJA.2020.08.062},
  urldate = {2022-02-01},
  pmid = {33129491}
}

@article{gaoBayesianSequentialDecisionmaking2025,
  title = {Bayesian Sequential Decision-Making for Rare Disease Clinical Trials},
  author = {Gao, Yuan and Bai, Jianling and Chen, Feng},
  year = 2025,
  month = sep,
  journal = {Technology and Health Care},
  volume = {33},
  number = {5},
  pages = {2350--2370},
  publisher = {SAGE Publications},
  issn = {0928-7329},
  doi = {10.1177/09287329251344056},
  urldate = {2026-04-09},
  langid = {english}
}

@book{Good1983a,
  title = {Good {{Thinking}}: {{The Foundations}} of {{Probability}} and {{Its Applications}}},
  author = {Good, I.J.},
  year = 1983,
  publisher = {Minneapolis University Press},
  address = {Minneapolis}
}

@book{Grieve2022,
  title = {Hybrid Frequentist/{{Bayesian}} Power and {{Bayesian}} Power in Planning and Clinical Trials},
  author = {Grieve, Andrew P.},
  year = 2022,
  publisher = {Chapman \& Hall, CRC Press},
  address = {Boca Raton, FL},
  urldate = {2023-02-21},
  isbn = {978-1-032-11129-2}
}

@article{grieveIdleThoughtsWellcalibrated2016,
  title = {Idle Thoughts of a 'well-Calibrated' {{Bayesian}} in Clinical Drug Development},
  author = {Grieve, Andrew P.},
  year = 2016,
  month = mar,
  journal = {Pharmaceutical statistics},
  volume = {15},
  number = {2},
  pages = {96--108},
  publisher = {Pharm Stat},
  issn = {1539-1612},
  doi = {10.1002/PST.1736},
  pmid = {26799060}
}

@article{Gunel1974,
  title = {Bayes Factors for Independence in Contingency Tables},
  author = {Gunel, E. and Dickey, J.},
  year = 1974,
  journal = {Biometrika},
  volume = {61},
  number = {3},
  pages = {545--557},
  issn = {00063444},
  doi = {10.2307/2334738},
  urldate = {2022-02-24}
}

@article{hagarDesignBayesianClinical2026,
  title = {Design of {{Bayesian Clinical Trials With Clustered Data}}},
  author = {Hagar, Luke and Golchi, Shirin},
  year = 2026,
  month = mar,
  journal = {Stat Med},
  volume = {45},
  number = {6-7},
  pages = {e70488},
  issn = {0277-6715},
  doi = {10.1002/sim.70488},
  urldate = {2026-04-09},
  pmcid = {PMC12982163},
  pmid = {41820232}
}

@article{HeathEtAl2020,
  title = {Determining a {{Bayesian}} Predictive Power Stopping Rule for Futility in a Non-Inferiority Trial with Binary Outcomes},
  author = {Heath, Anna and Offringa, Martin and Pechlivanoglou, Petros and Rios, Juan David and Klassen, Terry P. and Poonai, Naveen and Pullenayegum, Eleanor},
  year = 2020,
  month = apr,
  journal = {Contemp Clin Trials Commun},
  volume = {18},
  pages = {100561},
  issn = {2451-8654},
  doi = {10.1016/j.conctc.2020.100561},
  urldate = {2026-03-27},
  pmcid = {PMC7153169},
  pmid = {32300671}
}

@article{hoekstraBayesianReanalysisNull2018,
  title = {Bayesian Reanalysis of Null Results Reported in Medicine: {{Strong}} yet Variable Evidence for the Absence of Treatment Effects},
  shorttitle = {Bayesian Reanalysis of Null Results Reported in Medicine},
  author = {Hoekstra, Rink and Monden, Rei and {van Ravenzwaaij}, Don and Wagenmakers, Eric-Jan},
  year = 2018,
  month = apr,
  journal = {PLoS One},
  volume = {13},
  number = {4},
  pages = {e0195474},
  issn = {1932-6203},
  doi = {10.1371/journal.pone.0195474},
  urldate = {2026-03-27},
  pmcid = {PMC5919013},
  pmid = {29694370}
}

@article{ionanBayesianMethodsHuman2023,
  title = {Bayesian {{Methods}} in {{Human Drug}} and {{Biological Products Development}} in {{CDER}} and {{CBER}}},
  author = {Ionan, Alexei C. and Clark, Jennifer and Travis, James and Amatya, Anup and Scott, John and Smith, James P. and Chattopadhyay, Somesh and Salerno, Mary Jo and Rothmann, Mark},
  year = 2023,
  journal = {Ther Innov Regul Sci},
  volume = {57},
  number = {3},
  pages = {436--444},
  issn = {2168-4790},
  doi = {10.1007/s43441-022-00483-0},
  urldate = {2025-12-12},
  pmcid = {PMC9718464},
  pmid = {36459346}
}

@article{Jamil2017,
  title = {Default ``{{Gunel}} and {{Dickey}}'' {{Bayes}} Factors for Contingency Tables},
  author = {Jamil, Tahira and Ly, Alexander and Morey, Richard D. and Love, Jonathon and Marsman, Maarten and Wagenmakers, Eric Jan},
  year = 2017,
  month = apr,
  journal = {Behavior Research Methods},
  volume = {49},
  number = {2},
  pages = {638--652},
  publisher = {Springer New York LLC},
  issn = {15543528},
  doi = {10.3758/S13428-016-0739-8/FIGURES/5},
  urldate = {2022-02-24},
  pmid = {27325166}
}

@book{Jeffreys1939,
  title = {Theory of {{Probability}}},
  author = {Jeffreys, Harold},
  year = 1939,
  publisher = {The Clarendon Press},
  address = {Oxford}
}

@article{jiangComparingBayesianEarly2020,
  title = {Comparing {{Bayesian}} Early Stopping Boundaries for Phase {{II}} Clinical Trials},
  author = {Jiang, Liyun and Yan, Fangrong and Thall, Peter F. and Huang, Xuelin},
  year = 2020,
  month = nov,
  journal = {Pharm Stat},
  volume = {19},
  number = {6},
  pages = {928--939},
  issn = {1539-1604},
  doi = {10.1002/pst.2046},
  urldate = {2026-03-27},
  pmcid = {PMC9149588},
  pmid = {32720462}
}

@article{KassRaftery1995,
  title = {Bayes Factors},
  author = {Kass, Robert E and Raftery, Adrian E},
  year = 1995,
  journal = {Journal of the American Statistical Association},
  volume = {90},
  number = {430},
  pages = {773--795}
}

@article{Kelter2020,
  title = {Bayesian Survival Analysis in {{STAN}} for Improved Measuring of Uncertainty in Parameter Estimates},
  author = {Kelter, Riko},
  year = 2020,
  journal = {Measurement: Interdisciplinary Research and Perspectives},
  volume = {18},
  number = {2},
  pages = {101--119},
  doi = {10.1080/15366367.2019.1689761}
}

@article{Kelter2020BayesianPosteriorIndices,
  title = {Analysis of {{Bayesian}} Posterior Significance and Effect Size Indices for the Two-Sample t-Test to Support Reproducible Medical Research},
  author = {Kelter, Riko},
  year = 2020,
  journal = {BMC Medical Research Methodology},
  volume = {20},
  number = {88},
  doi = {10.1186/s12874-020-00968-2}
}

@article{Kelter2021BMCHodgesLehmann,
  title = {Bayesian {{Hodges-Lehmann}} Tests for Statistical Equivalence in the Two-Sample Setting: {{Power}} Analysis, Type {{I}} Error Rates and Equivalence Boundary Selection in Biomedical Research},
  author = {Kelter, Riko},
  year = 2021,
  journal = {BMC Medical Research Methodology},
  volume = {21},
  number = {171},
  publisher = {BioMed Central},
  issn = {1471-2288},
  doi = {10.1186/s12874-021-01341-7}
}

@article{Kelter2022EvidenceValue,
  title = {The {{Evidence Interval}} and the {{Bayesian Evidence Value}} - {{On}} a Unified Theory for {{Bayesian}} Hypothesis Testing and Interval Estimation},
  author = {Kelter, R.},
  year = 2022,
  journal = {British Journal of Mathematical and Statistical Psychology},
  volume = {75},
  number = {3},
  pages = {550--592},
  doi = {10.1111/bmsp.12267}
}

@article{Kelter2023,
  title = {The {{Bayesian}} Simulation Study ({{BASIS}}) Framework for Simulation Studies in Statistical and Methodological Research},
  author = {Kelter, Riko},
  year = 2023,
  journal = {Biometrical Journal},
  pages = {2200095},
  publisher = {John Wiley \& Sons, Ltd},
  issn = {1521-4036},
  doi = {10.1002/BIMJ.202200095},
  urldate = {2023-01-26}
}

@article{Kelter2025,
  title = {The {{Calibrated Bayesian Hypothesis Test}} for {{Directional Hypotheses}} of the {{Odds Ratio}} in 2x2 {{Contingency Tables}}},
  author = {Kelter, Riko},
  year = 2025,
  journal = {Stat Biosci},
  volume = {17},
  number = {2},
  pages = {410--441},
  issn = {1867-1772},
  doi = {10.1007/s12561-024-09425-w},
  urldate = {2025-11-04},
  langid = {english}
}

@article{kelterBayesianGroupSequentialPredictive2024,
  title = {The {{Bayesian Group-Sequential Predictive Evidence Value Design}} for {{Phase II Clinical Trials}} with {{Binary Endpoints}}},
  author = {Kelter, Riko and Schnurr, Alexander},
  year = 2024,
  journal = {Statistics in Biosciences},
  number = {(online first)},
  pages = {1--37},
  publisher = {Springer},
  issn = {18671772},
  doi = {10.1007/s12561-024-09430-z}
}

@article{KelterPawel2025,
  title = {Bayesian {{Power}} and {{Sample Size Calculations}} for {{Bayes Factors}} in the {{Binomial Setting}}},
  author = {Kelter, Riko and Pawel, Samuel},
  year = 2025,
  journal = {arXiv preprint},
  eprint = {2502.02914},
  urldate = {2025-05-27},
  archiveprefix = {arXiv}
}

@misc{KelterPawelTwoStage2025,
  title = {The {{Bayesian}} Optimal Two-Stage Design for Clinical Phase {{II}} Trials Based on {{Bayes}} Factors},
  author = {Kelter, Riko and Pawel, Samuel},
  year = 2025,
  number = {arXiv:2511.23144},
  eprint = {2511.23144},
  primaryclass = {stat},
  publisher = {arXiv},
  doi = {10.48550/arXiv.2511.23144},
  urldate = {2025-12-31},
  archiveprefix = {arXiv}
}

@misc{kelterTwoArmTwoStage2026,
  title = {Power and {{Sample Size Calculations}} for {{Bayes Factors}} in Two-Arm Clinical {{Phase II Trials}} with Binary {{Endpoints}}},
  author = {Kelter, Riko},
  year = 2026,
  number = {arXiv:2603.01715},
  eprint = {2603.01715},
  primaryclass = {stat},
  publisher = {arXiv},
  doi = {10.48550/arXiv.2603.01715},
  urldate = {2026-03-24},
  archiveprefix = {arXiv}
}

@article{khannaRiociguatPatientsEarly2020,
  title = {Riociguat in Patients with Early Diffuse Cutaneous Systemic Sclerosis ({{RISE-SSc}}): Randomised, Double-Blind, Placebo-Controlled Multicentre Trial},
  shorttitle = {Riociguat in Patients with Early Diffuse Cutaneous Systemic Sclerosis ({{RISE-SSc}})},
  author = {Khanna, Dinesh and Allanore, Yannick and Denton, Christopher P. and Kuwana, Masataka and {Matucci-Cerinic}, Marco and Pope, Janet E. and Atsumi, Tatsuya and Be{\v c}v{\'a}{\v r}, Radim and Czirj{\'a}k, L{\'a}szl{\'o} and Hachulla, Eric and Ishii, Tomonori and Ishikawa, Osamu and Johnson, Sindhu R. and Langhe, Ellen De and Stagnaro, Chiara and Riccieri, Valeria and Schiopu, Elena and Silver, Richard M. and Smith, Vanessa and Steen, Virginia and Stevens, Wendy and Sz{\"u}cs, Gabriella and Truchetet, Marie-Elise and Wosnitza, Melanie and Laapas, Kaisa and Pena, Janethe de Oliveira and Yao, Zhen and Kramer, Frank and Distler, Oliver},
  year = 2020,
  month = may,
  journal = {Annals of the Rheumatic Diseases},
  volume = {79},
  number = {5},
  pages = {618--625},
  publisher = {BMJ Publishing Group Ltd},
  issn = {0003-4967, 1468-2060},
  doi = {10.1136/annrheumdis-2019-216823},
  urldate = {2025-12-11},
  chapter = {Systemic sclerosis},
  copyright = {\copyright{} Author(s) (or their employer(s)) 2020. Re-use permitted under CC BY. Published by BMJ.. https://creativecommons.org/licenses/by/4.0/This is an open access article distributed in accordance with the Creative Commons Attribution 4.0 Unported (CC BY 4.0) license, which permits others to copy, redistribute, remix, transform and build upon this work for any purpose, provided the original work is properly cited, a link to the licence is given, and indication of whether changes were made. See:~https://creativecommons.org/licenses/by/4.0/.},
  langid = {english},
  pmid = {32299845}
}

@article{Linde2020,
  title = {Baymedr: {{An R Package}} for the {{Calculation}} of {{Bayes Factors}} for {{Equivalence}}, {{Non-Inferiority}}, and {{Superiority Designs}}},
  author = {Linde, Maximilian and {van Ravenzwaaij}, Don},
  year = 2020,
  journal = {arXiv preprint: arXiv:1910.11616v1},
  eprint = {1910.11616v1},
  archiveprefix = {arXiv}
}

@article{Little2006,
  title = {Calibrated {{Bayes}}},
  author = {Little, Roderick J.},
  year = 2006,
  month = aug,
  journal = {The American Statistician},
  volume = {60},
  number = {3},
  pages = {213--223},
  publisher = {Taylor \& Francis},
  issn = {00031305},
  doi = {10.1198/000313006X117837},
  urldate = {2024-12-04}
}

@article{Makowski2019,
  title = {Indices of {{Effect Existence}} and {{Significance}} in the {{Bayesian Framework}}},
  author = {Makowski, Dominique and {Ben-Shachar}, Mattan S. and Chen, S. H. Annabel and L{\"u}decke, Daniel},
  year = 2019,
  journal = {Frontiers in Psychology},
  volume = {10},
  pages = {2767},
  publisher = {Frontiers},
  issn = {1664-1078},
  doi = {10.3389/fpsyg.2019.02767},
  urldate = {2020-01-28}
}

@article{Morris2019,
  title = {Using Simulation Studies to Evaluate Statistical Methods},
  author = {Morris, Tim P. and White, Ian R. and Crowther, Michael J.},
  year = 2019,
  journal = {Statistics in Medicine},
  volume = {38},
  number = {11},
  eprint = {1712.03198},
  pages = {2074--2102},
  publisher = {John Wiley \& Sons, Ltd},
  issn = {1097-0258},
  doi = {10.1002/SIM.8086},
  urldate = {2022-03-07},
  archiveprefix = {arXiv},
  pmid = {30652356}
}

@article{MuehlemannEtAl2023,
  title = {A {{Tutorial}} on {{Modern Bayesian Methods}} in {{Clinical Trials}}},
  author = {Muehlemann, Natalia and Zhou, Tianjian and Mukherjee, Rajat and Hossain, Munshi Imran and Roychoudhury, Satrajit and {Russek-Cohen}, Estelle},
  year = 2023,
  journal = {Ther Innov Regul Sci},
  volume = {57},
  number = {3},
  pages = {402--416},
  issn = {2168-4790},
  doi = {10.1007/s43441-023-00515-3},
  urldate = {2026-03-27},
  pmcid = {PMC10117244},
  pmid = {37081374}
}

@article{Neuenschwander2009,
  title = {A Note on the Power Prior},
  author = {Neuenschwander, Beat and Branson, Michael and Spiegelhalter, David J.},
  year = 2009,
  month = dec,
  journal = {Statistics in medicine},
  volume = {28},
  number = {28},
  pages = {3562--3566},
  publisher = {Stat Med},
  issn = {1097-0258},
  doi = {10.1002/SIM.3722},
  urldate = {2022-08-03},
  pmid = {19735071}
}

@misc{pawelBayesFactorGroup2026,
  title = {Bayes {{Factor Group Sequential Designs}}},
  author = {Pawel, Samuel and Held, Leonhard},
  year = 2026,
  month = jan,
  publisher = {Zenodo},
  doi = {10.5281/ZENODO.18160652},
  urldate = {2026-04-09},
  archiveprefix = {Zenodo},
  copyright = {Creative Commons Attribution 4.0 International}
}

@article{PawelHeld2025,
  title = {Closed-{{Form Power}} and {{Sample Size Calculations}} for {{Bayes Factors}}},
  author = {Pawel, Samuel and Held, Leonhard},
  year = 2025,
  month = apr,
  journal = {The American Statistician},
  pages = {1--15},
  publisher = {Taylor \& Francis},
  issn = {0003-1305},
  doi = {10.1080/00031305.2025.2467919},
  urldate = {2025-05-27}
}

@article{pourmohamadSequentialBayesFactors2023,
  title = {Sequential {{Bayes Factors}} for {{Sample Size Reduction}} in {{Preclinical Experiments}} with {{Binary Outcomes}}},
  author = {Pourmohamad, Tony and Wang, Chenguang},
  year = 2023,
  journal = {Statistics in Biopharmaceutical Research},
  volume = {15},
  number = {4},
  pages = {706--715},
  publisher = {Taylor \& Francis},
  issn = {19466315},
  doi = {10.1080/19466315.2022.2123386}
}

@article{Rouder2009,
  title = {Bayesian t Tests for Accepting and Rejecting the Null Hypothesis},
  author = {Rouder, Jeffrey N. and Speckman, Paul L. and Sun, Dongchu and Morey, Richard D. and Iverson, Geoffrey},
  year = 2009,
  journal = {Psychonomic Bulletin and Review},
  volume = {16},
  number = {2},
  pages = {225--237},
  issn = {10699384},
  doi = {10.3758/PBR.16.2.225},
  isbn = {1069-9384(Print)},
  pmid = {19293088}
}

@article{Schonbrodt2017,
  title = {Sequential Hypothesis Testing with {{Bayes}} Factors: {{Efficiently}} Testing Mean Differences},
  author = {Sch{\"o}nbrodt, Felix D. and Wagenmakers, Eric Jan and Zehetleitner, Michael and Perugini, Marco},
  year = 2017,
  month = jun,
  journal = {Psychological methods},
  volume = {22},
  number = {2},
  pages = {322--339},
  publisher = {Psychol Methods},
  issn = {1939-1463},
  doi = {10.1037/MET0000061},
  urldate = {2022-12-01},
  pmid = {26651986}
}

@article{sekulovskiGoodCheckBayes2024,
  title = {A {{Good}} Check on the {{Bayes}} Factor},
  author = {Sekulovski, Nikola and Marsman, Maarten and Wagenmakers, Eric-Jan},
  year = 2024,
  journal = {Behav Res Methods},
  volume = {56},
  number = {8},
  pages = {8552--8566},
  issn = {1554-351X},
  doi = {10.3758/s13428-024-02491-4},
  urldate = {2026-03-24},
  pmcid = {PMC11525426},
  pmid = {39231912}
}

@article{shenBayesianGroupSequential2022,
  title = {Bayesian Group Sequential Designs for Cluster-Randomized Trials},
  author = {Shen, Junwei and Golchi, Shirin and Moodie, Erica EM and Benrimoh, David},
  year = 2022,
  journal = {Stat},
  volume = {11},
  number = {1},
  pages = {e487},
  issn = {2049-1573},
  doi = {10.1002/sta4.487},
  urldate = {2026-04-09},
  copyright = {\copyright{} 2022 John Wiley \& Sons Ltd.},
  langid = {english}
}

@article{siepeSimulationStudiesMethodological2024,
  title = {Simulation Studies for Methodological Research in Psychology: {{A}} Standardized Template for Planning, Preregistration, and Reporting.},
  author = {Siepe, Bj{\"o}rn S. and Barto{\v s}, Franti{\v s}ek and Morris, Tim P. and Boulesteix, Anne-Laure and Heck, Daniel W. and Pawel, Samuel},
  year = 2024,
  month = nov,
  journal = {Psychological Methods},
  issn = {1939-1463},
  doi = {10.1037/MET0000695}
}

@article{Simon1989,
  title = {Optimal Two-Stage Designs for Phase {{II}} Clinical Trials},
  author = {Simon, Richard},
  year = 1989,
  journal = {Controlled clinical trials},
  volume = {10},
  number = {1},
  pages = {1--10},
  publisher = {Control Clin Trials},
  issn = {0197-2456},
  doi = {10.1016/0197-2456(89)90015-9},
  urldate = {2022-02-01},
  pmid = {2702835}
}

@book{Spiegelhalter2004,
  title = {Bayesian Approaches to Clinical Trials and Health-Care Evaluation},
  author = {Spiegelhalter, D. J. and Abrams, K. R. (Keith R.) and Myles, Jonathan P.},
  year = 2004,
  publisher = {Wiley},
  address = {New York},
  urldate = {2019-12-11},
  isbn = {978-0-470-09260-6}
}

@book{Sprenger2019,
  title = {Bayesian {{Philosophy}} of {{Science}}},
  author = {Sprenger, Jan and Hartmann, Stephan},
  year = 2019,
  month = aug,
  journal = {Bayesian Philosophy of Science},
  publisher = {Oxford University Press},
  doi = {10.1093/oso/9780199672110.001.0001}
}

@article{Stallard2020,
  title = {Comparison of {{Bayesian}} and Frequentist Group-Sequential Clinical Trial Designs},
  author = {Stallard, Nigel and Todd, Susan and Ryan, Elizabeth G. and Gates, Simon},
  year = 2020,
  journal = {BMC Medical Research Methodology},
  volume = {20},
  number = {1},
  pages = {1--14},
  publisher = {BioMed Central Ltd.},
  issn = {14712288},
  doi = {10.1186/S12874-019-0892-8/FIGURES/4},
  urldate = {2023-05-17},
  pmid = {31910813}
}

@article{Stefan2022,
  title = {A {{Two-Stage Bayesian Sequential Assessment}} of {{Exploratory Hypotheses}}},
  author = {Stefan, Angelika M. and Lengersdorff, Lukas L. and Wagenmakers, Eric-Jan},
  year = 2022,
  journal = {Collabra: Psychology},
  volume = {8},
  number = {1},
  publisher = {University of California Press},
  issn = {2474-7394},
  doi = {10.1525/COLLABRA.40350},
  urldate = {2022-12-01}
}

@article{Thall1994,
  title = {Practical {{Bayesian Guidelines}} for {{Phase IIB Clinical Trials}}},
  author = {Thall, Peter F. and Simon, Richard},
  year = 1994,
  month = jun,
  journal = {Biometrics},
  volume = {50},
  number = {2},
  pages = {337},
  publisher = {JSTOR},
  issn = {0006341X},
  doi = {10.2307/2533377},
  urldate = {2022-11-22},
  pmid = {7980801}
}

@article{VandeSchoot2021,
  title = {Bayesian Statistics and Modelling},
  author = {{Van de Schoot}, Rens and Depaoli, Sarah and King, Ruth and Kramer, Bianca and M{\"a}rtens, Kaspar and Tadesse, Mahlet G. and Vannucci, Marina and Gelman, Andrew and Veen, Duco and Willemsen, Joukje and Yau, Christopher},
  year = 2021,
  month = jan,
  journal = {Nature Reviews Methods Primers 2021 1:1},
  volume = {1},
  number = {1},
  pages = {1--26},
  publisher = {Nature Publishing Group},
  issn = {2662-8449},
  doi = {10.1038/s43586-020-00001-2},
  urldate = {2022-03-08},
  isbn = {0123456789}
}

@book{WassmerBrannath2016,
  title = {Group {{Sequential}} and {{Confirmatory Adaptive Designs}} in {{Clinical Trials}}},
  author = {Wassmer, Gernot and Brannath, Werner},
  year = 2016,
  publisher = {Springer International Publishing Switzerland},
  address = {Cham},
  doi = {10.1007/978-3-319-32562-0},
  urldate = {2023-05-17},
  isbn = {978-3-319-32560-6}
}

@article{Zhou2023,
  title = {On {{Bayesian Sequential Clinical Trial Designs}}},
  author = {Zhou, Tianjian and Ji, Yuan},
  year = 2023,
  journal = {The New England Journal of Statistics in Data Science},
  volume = {0},
  eprint = {2112.09644},
  pages = {1--16},
  publisher = {New England Statistical Society},
  issn = {2693-7166},
  doi = {10.51387/23-NEJSDS24},
  urldate = {2023-05-17},
  archiveprefix = {arXiv}
}

@article{zhuBayesianSequentialDesign2019,
  title = {A {{Bayesian Sequential Design}} for {{Clinical Trials}} with {{Time-to-Event Outcomes}}},
  author = {Zhu, Lin and Yu, Qingzhao and Mercante, Donald E.},
  year = 2019,
  journal = {Stat Biopharm Res},
  volume = {11},
  number = {4},
  pages = {387--397},
  issn = {1946-6315},
  doi = {10.1080/19466315.2019.1629996},
  urldate = {2026-04-09},
  pmcid = {PMC7100880},
  pmid = {32226580}
}

\end{document}